\newcommand\dddag{%
  \sbox0{\ddag}\scalerel*{%
  \stackengine{-.6\ht0}{\ddag}{\ddag}{O}{c}{F}{F}{S}}{\ddag}%
}
\pgfplotsset{
  compat=1.16,
  myplot/.style={smooth, tension=0.5, mark=none, thick}
}
\def\@adminfootnotes{%
  \let\@makefnmark\relax  \let\@thefnmark\relax
  \ifx\@empty\@date\else \@footnotetext{\@setdate}\fi%%   <------ added
  \ifx\@empty\@subjclass\else \@footnotetext{\@setsubjclass}\fi
  \ifx\@empty\@keywords\else \@footnotetext{\@setkeywords}\fi
  \ifx\@empty\thankses\else \@footnotetext{%
    \def\par{\let\par\@par}\@setthanks}%
  \fi
}
\newtheorem{theorem}{Theorem}[section]
\newtheorem{proposition}[theorem]{Proposition}
\newtheorem{lemma}[theorem]{Lemma}
\theoremstyle{definition}
\theoremstyle{remark}
\newtheorem{remark}[theorem]{Remark}
\numberwithin{equation}{section}
\let\oldtocsection=\tocsection
\let\oldtocsubsection=\tocsubsection
\renewcommand{\tocsection}[2]{\hspace{0em}\oldtocsection{#1}{#2}}
\renewcommand{\tocsubsection}[2]{\hspace{1em}\oldtocsubsection{#1}{#2}}
\title[Diagrammatics of information]{Diagrammatics of information}
\author{Mee Seong Im}
\address{\parbox{\linewidth}{Department of Mathematics, Johns Hopkins University, Baltimore, MD 21218, USA (current)\\ 
Department of Mathematics, United States Naval Academy, Annapolis, MD 21402, USA}}
\email{\href{mailto:meeseong@jhu.edu}{meeseong@jhu.edu}}
\author{Clement Kam} 
\address{U.S. Naval Research Laboratory, Washington, DC 20375, USA}
\email{\href{mailto:khovanov@jhu.edu}{clement.k.kam.civ@us.navy.mil}}
\author{Caden Pici}
\address{U.S. Naval Research Laboratory, Washington, DC 20375, USA}
\email{\href{mailto:khovanov@math.columbia.edu}{caden.j.pici.civ@us.navy.mil}}
\subjclass[2020]{Primary: 57K16, 18M30, 28D20;
Secondary: 37A35, 68P30, 94A15, 94A40.}
\date{\today}
\providecommand{\keywords}[1]{\textbf{\textit{Key words and phrases.}} #1}
\keywords{Topological quantum field theory, TQFT, diagrammatic algebras, Shannon entropy, information theory, mutual information.}
\begin{document}

\def\Aff{\mathsf{Aff}}
\def\AND{\mathsf{AND}}
\def\concatenate{\mathsf{concatenate}}
\def\Br{\mathsf{Br}}
\def\Gal{\mathsf{Gal}}
\def\gen{\mathsf{generators}}
\def\GL{\mathsf{GL}}
\def\SL{\mathsf{SL}}
\def\init{\mathsf{in}}
\def\t{\mathsf{t}}
\def\out{\mathsf{out}}
\def\inner{\mathsf{inner}}
\def\I{\mathsf I}
\def\region{\mathsf{region}}
\def\plane{\mathsf{plane}}
\def\R{\mathbb R}
\def\Q{\mathbb Q}
\def\Z{\mathbb Z}
\def\mc{\mathcal{c}}
\def\finite{\mathsf{finite}}
\def\infinite{\mathsf{infinite}}
\def\N{\mathbb N} 
\def\C{\mathbb C}
\def\sep{\mathsf{sep}}
\def\S{\mathbb S}
\def\SS{\mathbb S} 
\def\CP{\mathbb P}
\def\Ob{\mathsf{Ob}}
\def\op{\mathsf{op}}
\def\new{\mathsf{new}}
\def\old{\mathsf{old}}
\def\OR{\mathsf{OR}}
\def\AND{\mathsf{AND}}
\def\rat{\mathsf{rat}}
\def\rec{\mathsf{rec}}
\def\tail{\mathsf{tail}}
\def\coev{\mathsf{coev}}
\def\eps{\varepsilon}
\def\ev{\mathsf{ev}}
\def\id{\mathsf{id}}
\def\s{\mathsf{s}}
\def\S{\mathsf{S}}
\def\t{\mathsf{t}}
\def\start{\textsf{starting}}
\def\Notation{\textsf{Notation}}
\def\circleft{\raisebox{-.18ex}{\scalebox{1}[2.25]{\rotatebox[origin=c]{180}{$\curvearrowright$}}}}
\renewcommand\SS{\ensuremath{\mathbb{S}}}
\newcommand{\kllS}{\kk\llangle  S \rrangle} %% power ser
\newcommand{\kllSS}[1]{\kk\llangle  #1 \rrangle}
\newcommand{\klS}{\kk\langle S\rangle}  % nc polynomials
\newcommand{\aver}{\mathsf{av}}  % average 
\newcommand{\ophana}{\overline{\phantom{a}}}
\newcommand{\Bool}{\mathbb{B}}
\newcommand{\dmod}{\mathsf{-mod}}
\newcommand{\lang}{\mathsf{lang}}
\newcommand{\pfmod}{\mathsf{-pfmod}}
\newcommand{\primitive}{\mathsf{irr}}
\newcommand{\Bmod}{\Bool\mathsf{-mod}}  % B-module 
\newcommand{\Bmodo}[1]{\Bool_{#1}\mathsf{-mod}}  
\newcommand{\Bfmod}{\Bool\mathsf{-fmod}} % finite B-modules 
\newcommand{\Bfpmod}{\Bool\mathsf{-fpmod}} % finite projective B-modules
\newcommand{\Bfsmod}{\Bool\mathsf{-}\underline{\mathsf{fmod}}}  % stable category 
\newcommand{\undvar}{\underline{\varepsilon}} %sequence of varepsilons, not using anymore
\newcommand{\RLang}{\mathsf{RLang}}
\newcommand{\undotimes}{\underline{\otimes}}
\newcommand{\sigmaacirc}{\Sigma^{\ast}_{\circ}} % equiv classes of words under rotation 
\newcommand{\cl}{\mathsf{cl}}
\newcommand{\PP}{\mathcal{P}} % powerset 
\newcommand{\wedgezero}{\{ \vee ,0\} } % semilattices
\newcommand{\whA}{\widehat{A}}
\newcommand{\whC}{\widehat{C}}
\newcommand{\whM}{\widehat{M}}
\newcommand{\Sigmalr}{\Sigma^{\Z}}
\newcommand{\Sigmal}{\Sigma^{-}}
\newcommand{\Sigmar}{\Sigma^{+}}
\newcommand{\Sigmaa}{\Sigma^{\ast}}
\newcommand{\SigmaZ}{\Sigma^{\Z}}  
\newcommand{\Sigmac}{\Sigma^{\circ}}

\newcommand{\alphai}{\alpha_I}  % alpha vertical
\newcommand{\alphac}{\alpha_{\circ}}  % alpha circle 
\newcommand{\alphap}{(\alphai,\alphac)} % alpha pair 
\newcommand{\alphalr}{\alpha_{\leftrightarrow}}
\newcommand{\alphaZ}{\alpha_{\Z}}
\newcommand{\mcCinfalpha}{\mcC^{\infty}_{\alpha}}
\newcommand{\mathT}{\mathsf{T}}
\newcommand{\mathF}{\mathsf{F}}
\newcommand{\mcS}{\mathcal{S}}
\newcommand{\mcN}{\mathcal{N}}
\newcommand{\wmcN}{\widetilde{\mcN}}
\newcommand{\Net}{\mathsf{Net}}

% redefine emptyset symbol 
\let\oldemptyset\emptyset
\let\emptyset\varnothing

\newcommand{\undempty}{\underline{\emptyset}}
\newcommand{\undsigma}{\underline{\sigma}}
\newcommand{\undtau}{\underline{\tau}}
\def\basis{\mathsf{basis}}
\def\irr{\mathsf{irr}} % recognizable series 
\def\spanning{\mathsf{spanning}}
\def\elmt{\mathsf{elmt}}

\def\H{\mathsf{H}}
\def\I{\mathsf{I}}
\def\II{\mathsf{II}}
\def\l{\lbrace}
\def\r{\rbrace}
\def\o{\otimes}
\def\lra{\longrightarrow}
\def\Ext{\mathsf{Ext}}
\def\mf{\mathfrak} 
\def\mcC{\mathcal{C}}
\def\mcO{\mathcal{O}}
\def\Fr{\mathsf{Fr}}

\def\ovb{\overline{b}}
\def\tr{{\sf tr}} 
\def\str{{\sf str}} 
\def\det{{\sf det }} 
\def\tral{\tr_{\alpha}}
\def\one{\mathbf{1}}   % unit  object of category 

\def\lra{\longrightarrow}
\def\twoheadlra{\longrightarrow\hspace{-4.6mm}\longrightarrow}
\def\hooklra{\raisebox{.2ex}{$\subset$}\!\!\!\raisebox{-0.21ex}{$\longrightarrow$}}
\def\kk{\mathbf{k}}  %% base field  
\def\gdim{\mathsf{gdim}}  %% graded dimension 
\def\rk{\mathsf{rk}}
\def\undep{\underline{\varepsilon}}
\def\mathM{\mathbf{M}}  % boolean matrix 

% cobordism categories 
\def\CCC{\mathcal{C}} % cat of cobordisms 
\def\wCCC{\widehat{\CCC}}  % completed category

\def\complement{\mathsf{comp}}
\def\Rec{\mathsf{Rec}} % recognizable series  

\def\Cob{\mathsf{Cob}} 
\def\Kar{\mathsf{Kar}}   % Karoubi envelope 

\def\dmod{\mathsf{-mod}}   % modules  
\def\pmod{\mathsf{-pmod}}    % projective modules 

\newcommand{\brak}[1]{\ensuremath{\left\langle #1\right\rangle}}
\newcommand{\brakspace}[1]{\ensuremath{\left\langle \:\: #1\right\rangle}}

\newcommand{\oplusop}[1]{{\mathop{\oplus}\limits_{#1}}}
\newcommand{\ang}[1]{\langle #1 \rangle } 
\newcommand{\ppartial}[1]{\frac{\partial}{\partial #1}} %partial derivative 
\newcommand{\checkr}{{\bf \color{red} CHECK IT}}
\newcommand{\checkb}{{\bf \color{blue} CHECK IT}}
\newcommand{\checkk}[1]{{\bf \color{red} #1}}

\newcommand{\mcA}{{\mathcal A}}
\newcommand{\cZ}{{\mathcal Z}}
\newcommand{\sq}{$\square$}
\newcommand{\bi}{\bar \imath}
\newcommand{\bj}{\bar \jmath}

\newcommand{\undn}{\underline{n}}
\newcommand{\undm}{\underline{m}}
\newcommand{\undzero}{\underline{0}}
\newcommand{\undone}{\underline{1}}
\newcommand{\undtwo}{\underline{2}}

\newcommand{\cob}{\mathsf{cob}} % cobordism 
\newcommand{\comp}{\mathsf{comp}} % complementary

\newcommand{\Aut}{\mathsf{Aut}}
\newcommand{\Hom}{\mathsf{Hom}}
\newcommand{\Idem}{\mathsf{Idem}}
\newcommand{\Ind}{\mbox{Ind}}
\newcommand{\Id}{\textsf{Id}}
\newcommand{\End}{\mathsf{End}}
\newcommand{\iHom}{\underline{\mathsf{Hom}}}
\newcommand{\Bools}{\Bool^{\mathfrak{s}}}
\newcommand{\mfs}{\mathfrak{s}}
\newcommand{\blueline}{line width = 0.45mm, blue}

\newcommand{\drawing}[1]{
\begin{center}{\psfig{figure=fig/#1}}\end{center}}

\def\endomCempt{\End_{\mcC}(\emptyset_{n-1})}

\def\MS#1{{\color{blue}[MS: #1]}}
\def\MK#1{{\color{red}[MK: #1]}}

\allowdisplaybreaks

\begin{abstract}
We introduce a diagrammatic perspective for Shannon entropy created by the first author and Mikhail Khovanov and connect it to information theory and mutual information. We also give two complete proofs that the $5$-term  dilogarithm deforms to the $4$-term infinitesimal dilogarithm.
\end{abstract}

\maketitle
\tableofcontents

%%%%%%%%%%%%%%%%%%%%%%%
%
%  Intro  
%
%%%%%%%%%%%%%%%%%%%%%%%

\section{Introduction}
\label{section:intro}

Mathematics and theoretical physics have emerged in the rapid development of modern technology in recent decades. In particular, information theory~\cite{Khinchin57,Khinchin56,Nielsen20,Brill56,LR97,DH99} plays a pivotal role in communication, quantification, storage, and transfer of knowledge. One of the key measures in information theory is entropy, which quantifies the uncertainty in a random variable or the outcome of a random process. In geometric information theory~\cite{MKW23,Nielsen_geometric_theory_info,Niel_geom_23_1,Niel_geom_23_2,NB_geom_info}, spaces and manifolds of probability distributions are studied from a geometric and analysis point-of-view. 

Applications and interpretations of one-dimensional cobordisms have recently emerged, predominantly explored by M.S. Im and her collaborators, e.g., see~\cite{GIKK24, GIKKL23, GIK22, IK21, IK-top-automata, IK_journey23, IK24_dilogarithms_entropy, IK24_SAF, IK_22_linear, IKV23, IZ21}.
Im and M. Khovanov introduce in \cite{IK24_dilogarithms_entropy} a diagrammatical perspective of entropy and cocycles. In particular, they view entropy as certain cobordisms (see, e.g., \cite{IK_22_linear,Kh20_univ_const_two}) where information about the network only depends on the boundary. This coincides with the fact that in physics, the rich information about black holes also only depends on the geometric and topological structure of their boundaries.

In this paper, we recall diagrammatic interpretation of Shannon entropy in \cite{IK24_dilogarithms_entropy}, and reinterpret them in terms of information theory. We also provide two complete proofs that certain $5$-term dilogarithm deforms into the $4$-term infinitesimal dilogarithm.

\subsection*{Acknowledgments}
The first author would like to thank Mikhail Khovanov and the Research Scientists at the Naval Research Laboratory for extensive discussions. The first author would like to thank the Office of Naval Research (ONR) and United States Naval Research Laboratory (NRL) in Washington, DC for their support. 

\section{Background}
\label{section:background}

Let $\kk$ be a field of characteristic $0$.

\vspace{.25cm}

\subsection{Cathelineau's \texorpdfstring{$\kk$}{k}-vector space}
\label{subsection:Cathelineau_k_vs}

J.-L.~Cathelineau constructed a $\kk$-vector space $J(\kk)$ in \cite{Cath11}, with its spanning set $\langle a,b\rangle$, where $a,b\in\kk$, satisfying the relations: 
\begin{enumerate}
\item\label{item:symm} (symmetry) 
$\langle a,b\rangle  =  \langle b,a \rangle$, 
\item\label{item:scale} (scaling) 
$\langle ca,cb\rangle  =  c \langle a,b\rangle$, $c\in \kk$, 
\item 
\label{item:Cath_cocycle} (2-cocycle relation)
$\langle a,b+c\rangle + \langle b,c \rangle   = \langle a+b,c \rangle + \langle a,b \rangle$.
\end{enumerate} 
For relation~\eqref{item:Cath_cocycle},  the field $(\kk,+)$ under addition is viewed as a group. Furthermore, the symbols $\langle a,b\rangle$ are reminiscent of values of $2$-cocycles. 

First, we state a few preliminary lemmas:
\begin{lemma}
In the vector space $J(\mathbf{k})$, we have 
 $\langle a,0\rangle =\langle 0,a\rangle =0$, and  $\langle a,-a\rangle = 0$.
\end{lemma}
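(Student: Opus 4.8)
The plan is to extract all three identities by specializing the three defining relations, handling the zero--argument facts first and then deducing the antidiagonal vanishing.

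\textbf{Steps for $\langle a,0\rangle = \langle 0,a\rangle = 0$.} First I would apply the scaling relation~\eqref{item:scale} with $c = 0$: since $0\cdot a = 0\cdot b = 0$, it gives $\langle 0,0\rangle = 0\cdot\langle a,b\rangle = 0$. Next I would specialize the $2$-cocycle relation~\eqref{item:Cath_cocycle} at $b = c = 0$; its left-hand side is $\langle a,0\rangle + \langle 0,0\rangle$ and its right-hand side is $\langle a,0\rangle + \langle a,0\rangle$, so after cancelling $\langle a,0\rangle$ one gets $\langle 0,0\rangle = \langle a,0\rangle$, hence $\langle a,0\rangle = 0$ by the previous line. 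The symmetry relation~\eqref{item:symm} then yields $\langle 0,a\rangle = \langle a,0\rangle = 0$.

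\textbf{Step for $\langle a,-a\rangle = 0$.} Here I would use scaling~\eqref{item:scale} with $c = -1$, which gives $\langle -x,-y\rangle = -\langle x,y\rangle$ for all $x,y\in\kk$; taking $x = a$ and $y = -a$ yields $\langle -a,a\rangle = -\langle a,-a\rangle$. On the other hand, symmetry~\eqref{item:symm} gives $\langle -a,a\rangle = \langle a,-a\rangle$. Comparing the two expressions, $2\langle a,-a\rangle = 0$ in the $\kk$-vector space $J(\kk)$, and since $\kk$ has characteristic $0$ the scalar $2$ is invertible, so $\langle a,-a\rangle = 0$. (One could instead derive $\langle a,-a\rangle=0$ from the cocycle relation, but the route through scaling at $c=-1$ seems shortest.)

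I do not expect a genuine obstacle: the statement falls out of routine specialization. The only points to watch are the \emph{order} of the deductions --- the identity $\langle 0,0\rangle = 0$ must be in hand before the $b=c=0$ specialization of the cocycle relation can conclude $\langle a,0\rangle = 0$ --- and the use of $\mathrm{char}\,\kk \neq 2$ in the final step, which is automatic since $\kk$ has characteristic $0$.
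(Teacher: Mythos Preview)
Your proposal is correct and follows essentially the same approach as the paper: both first obtain $\langle 0,0\rangle=0$ from scaling, then specialize the cocycle relation at $b=c=0$ to get $\langle a,0\rangle=0$, and finally combine scaling by $-1$ with symmetry to force $2\langle a,-a\rangle=0$. The only cosmetic difference is that the paper first reduces $\langle a,-a\rangle$ to $a\langle 1,-1\rangle$ before applying the $c=-1$ argument, whereas you apply it directly to $\langle a,-a\rangle$; the underlying idea is identical.
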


\begin{proof}
We will first prove $\langle a,0\rangle =0$.
Using \eqref{item:scale}, 
$\langle 0, 0 \rangle = \langle 0a, 0b \rangle = 0 \langle a,b\rangle = 0$. 
So we obtain $\langle a , 0+0\rangle + \langle 0,0 \rangle =\langle a+0,0\rangle + \langle a,0\rangle$ by setting $b=c=0$ in \eqref{item:Cath_cocycle}. This tells us that $\langle a,0\rangle =0$. Using symmetry in \eqref{item:symm}, $\langle 0,a\rangle =\langle a,0 \rangle =0$. 
Now, to prove the last equality, 
$\langle a, -a\rangle = a \langle 1,-1\rangle$. 
So $\langle 1,-1 \rangle = \langle -(-1),-1 \rangle = -\langle -1,1 \rangle$. 
We move the symbols to one side and use symmetry to obtain $0 = \langle 1,-1 \rangle + \langle -1,1\rangle = \langle 1,-1\rangle + \langle 1,-1\rangle = 2 \langle 1,-1 \rangle$. Since $\mathbf{k}$ is a field of characteristic $0$, $\langle 1,-1\rangle=0$. Thus $\langle a,-a\rangle=0$.
\end{proof}

\begin{lemma}
\label{lemma_minus_1_a}
We have $\langle 1-a, -1\rangle = \langle a,1-a \rangle$.
\end{lemma}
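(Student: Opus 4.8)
The plan is to apply the $2$-cocycle relation \eqref{item:Cath_cocycle} with a single well-chosen substitution and then invoke the identities from the previous lemma. Concretely, in \eqref{item:Cath_cocycle} I would set $a = a$, $b = 1-a$, and $c = -1$. The left-hand side becomes $\langle a, (1-a)+(-1)\rangle + \langle 1-a, -1\rangle = \langle a, -a\rangle + \langle 1-a, -1\rangle$, and the right-hand side becomes $\langle a+(1-a), -1\rangle + \langle a, 1-a\rangle = \langle 1, -1\rangle + \langle a, 1-a\rangle$.

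Next I would apply the preceding lemma, which gives $\langle a, -a\rangle = 0$ and (taking $a=1$ there) $\langle 1, -1\rangle = 0$. Substituting these into the equation above collapses it to $\langle 1-a, -1\rangle = \langle a, 1-a\rangle$, which is exactly the claim.

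There is essentially no obstacle here: the only "creative" step is guessing the substitution $b=1-a$, $c=-1$, which is natural because it forces the two "extra" terms in the cocycle relation to be of the form $\langle x,-x\rangle$, hence zero by the prior lemma. No use of symmetry or scaling is needed beyond what is already packaged into that lemma, so the proof is a one-line consequence of \eqref{item:Cath_cocycle} together with the established vanishing identities.
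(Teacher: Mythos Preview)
Your proof is correct and essentially identical to the paper's: both apply the $2$-cocycle relation \eqref{item:Cath_cocycle} with the substitution $a=a$, $b=1-a$, $c=-1$ and then cancel the terms $\langle a,-a\rangle$ and $\langle 1,-1\rangle$ using the preceding lemma.
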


\begin{proof}
Applying \eqref{item:Cath_cocycle} and letting $a=a$, $b=1-a$, and $c=-1$,  we have 
$\langle a,1-a\rangle + \langle a+1-a,-1\rangle = \langle a,1-a-1\rangle + \langle 1-a,-1\rangle$. So 
$\langle a,1-a\rangle + \cancel{\langle 1,-1\rangle} = \cancel{\langle a,-a\rangle} + \langle 1-a,-1\rangle$ since $\langle a,-a \rangle=0$. We thus obtain the equality. 
\end{proof}
 
If $\mathbf{k}$ is a field of characteristic $p\not= 0$, then 
\begin{equation}
\label{eqn_char_p_Cathelineau}
\sum_{n=1}^p \langle 1, n1\rangle =0. 
\end{equation}

The vector space $J(\kk)$ is known as the space of infinitesimal dilogarithms, which is often  infinite-dimensional over $\kk$.

In the case when $\kk=\mathbb{R}$, $J(\kk)$ is isomorphic to the entropy of finite random variables.

\subsection{Vector space isomorphic to Shannon's entropy}
\label{subsection:vs_Shannon_entropy}

The $\kk$-vector space $\beta(\kk)$
has a spanning set $[a]$, $a\in \kk^{*}$ is invertible, of vectors and relations: 
\begin{enumerate}
\item $[1]=0$, 
\item
\label{item_beta_k_4_term}
$[a]-[b]+ a\biggl[ \dfrac{b}{a}\biggr] + (1-a)\biggl[ \dfrac{1-b}{1-a} \biggr]=0, \ \ a\in \kk\setminus\{0,1\}, \ b\in \kk^{*}$.
\end{enumerate}
This vector space also satisfies $[a]=[1-a]$.

\begin{lemma}
The vector space $\beta(\kk)$ satisfies
$\left[\dfrac{1}{a}\right] = -\dfrac{1}{a}[a]$.
\end{lemma}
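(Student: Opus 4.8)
The plan is to derive $\left[\frac{1}{a}\right] = -\frac{1}{a}[a]$ directly from relation~\eqref{item_beta_k_4_term} by a single judicious substitution, then clean up using the known facts $[1]=0$ and $[a]=[1-a]$. Concretely, I would apply the four-term relation with $a$ kept as $a$ and $b$ chosen so that one of the ratios $\frac{b}{a}$ or $\frac{1-b}{1-a}$ collapses to something trivial. The natural choice is $b = 1$: then $[b] = [1] = 0$, the term $a\left[\frac{b}{a}\right] = a\left[\frac{1}{a}\right]$ appears exactly as the quantity we want to solve for, and the last term becomes $(1-a)\left[\frac{1-1}{1-a}\right] = (1-a)[0]$.

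The one subtlety is that $0 \notin \kk^{*}$, so $[0]$ is not literally a spanning symbol; I would handle the term $(1-a)[0]$ either by noting that the relation is only being invoked for admissible arguments (and choosing the substitution so this term does not actually arise), or by first establishing a companion identity. A cleaner route avoiding $[0]$ altogether is to instead take $b = a$ in~\eqref{item_beta_k_4_term}: this gives $[a] - [a] + a\left[\frac{a}{a}\right] + (1-a)\left[\frac{1-a}{1-a}\right] = a[1] + (1-a)[1] = 0$, which is vacuous, so that substitution is not the one to use. I would therefore go back to a substitution that isolates $\left[\frac{1}{a}\right]$ while keeping every bracket argument in $\kk^{*}$: apply the relation, then use $[1]=0$ on whichever term degenerates, and solve. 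If the direct $b=1$ substitution is the intended one, the paper presumably permits $[0]$ as a formal symbol equal to a limiting value or the term simply drops; I would follow the paper's convention once it is made explicit.

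After the substitution and simplification, rearranging $0 = [a] - [1] + a\left[\frac{1}{a}\right] + (\text{degenerate term})$ yields $a\left[\frac{1}{a}\right] = -[a]$, and dividing by the invertible scalar $a \in \kk^{*}$ gives the claimed $\left[\frac{1}{a}\right] = -\frac{1}{a}[a]$. The identity $[a]=[1-a]$ may be needed to rewrite the degenerate term as a multiple of $[1]=0$, so I would invoke it at that point.

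The main obstacle is purely bookkeeping: choosing the substitution for $b$ so that (i) the bracket $\left[\frac{1}{a}\right]$ appears with coefficient exactly $a$, (ii) no bracket argument leaves $\kk^{*}$ or, if one does, it is $[1]$ or otherwise known to vanish, and (iii) the leftover terms cancel cleanly. There is no deep content here — the result is a one-line consequence of the defining relation — so the real work is selecting the right values and being careful about the domain restriction $a \in \kk\setminus\{0,1\}$, $b \in \kk^{*}$ on relation~\eqref{item_beta_k_4_term}.
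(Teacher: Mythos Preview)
Your proposal is correct and follows exactly the paper's approach: substitute $b=1$ in relation~\eqref{item_beta_k_4_term}, use $[1]=0$, and cancel the degenerate term $(1-a)\left[\frac{1-1}{1-a}\right]$. The paper's proof is literally that one-line substitution with the two terms crossed out; you are in fact more scrupulous than the paper in flagging the $[0]\notin\kk^{*}$ issue, which the paper silently cancels without comment, and the identity $[a]=[1-a]$ is not actually needed.
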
 

\begin{proof}
Let $b=1$ in \eqref{item_beta_k_4_term} to obtain 
\[
[a]- \cancel{[1]} + a\left[ \frac{1}{a}\right] + \cancel{(1-a) \left[ \frac{1-1}{1-a}\right]} = 0. 
\]
This completes the proof.
\end{proof}
If $\mathbf{k}$ is a field of characteristic $p\not=0,2$, then 
\[
\sum_{n=2}^{p-1} [n1] = 0. 
\]
 
The space $\beta(\kk)$ generalizes entropy, which we explain in  Section~\ref{section:entropy_Cath_vs}. This 4-term equation \eqref{item_beta_k_4_term} is a limit of the 5-term equation for the dilogarithm (see [\href{https://people.mpim-bonn.mpg.de/zagier/files/scanned/DilogarithmInGeometryAndNumberTh/fulltext.pdf}{Zagier}]), hence the name infinitesimal dilogarithm. For completeness, we provide a proof of this derivation in Section~\ref{section_deformation}.

\subsection{Isomorphism of \texorpdfstring{$J(\mathbf{k})$}{Jk} and \texorpdfstring{$\beta(\mathbf{k})$}{betak}}
Cathelineau proves in ~\cite{Cath88} that the two vector spaces $J(\kk)$ and $\beta(\kk)$ are isomorphic by sending $[a]\mapsto \langle a, 1-a\rangle$. 
%Since $\langle a,b\rangle = a \langle 1, \frac{b}{a}\rangle$, which maps isomorphically to $[\frac{b}{a}]$. 
Conversely, $\langle a,-a \rangle \mapsto [0]=0$, and $\langle a,b\rangle\mapsto (a+b) \left[ \dfrac{a}{a+b} \right]$ whenever $a+b\not= 0$.
Furthermore, $\langle a,b\rangle = \langle b,a\rangle$ if and only if $[a]=[1-a]$.

Via the isomorphism $J(\kk) \cong \beta(\kk)$, the $2$-cocycle condition in \eqref{item:Cath_cocycle} for $J(\kk)$ is equivalent to 
\[
(a+b+c) \left[ \dfrac{a}{a+b+c} \right] + 
(b+c) \left[ \dfrac{b}{b+c} \right] = 
(a+b+c) \left[ \frac{a+b}{a+b+c}  \right] + (a+b) \left[ \frac{a}{a+b} \right]. 
\]

Cathelineau shows that there is an isomorphism between the second homology and his construction  $H_2(\Aff_1(\kk),\kk_r)\simeq \beta(\kk)\simeq J(\kk)$, where 
\[ 
\Aff_1(\kk) = 
\left\{ 
\begin{pmatrix}
c & a \\ 0 & 1
\end{pmatrix} : c\in \kk^{*}, a\in \kk 
\right\}
\]
is the group of affine symmetries of a $\kk$-line and $\kk_r$ is a suitable right $\Aff_1(\kk)$-module. 
We refer to \cite{IK24_dilogarithms_entropy} for further details.

\section{Shannon entropy and and Cathelineau's vector space}
\label{section:entropy_Cath_vs}

\subsection{Shannon entropy}
\label{subsection:Shannon_entropy_inner_product}

Let $X=\{x_1,\ldots, x_n\}$, a finite set. 
Shannon entropy of a finite probability distribution $p_X$ on $X$ that associates probabilities $p_1,\ldots, p_n$, where $\displaystyle{\sum_{i=1}^n} p_i=1$, $0<p_i<1$ to its points $x_i$, respectively, is given by 
\begin{equation}
\label{eq_shannon} 
H(p_X)  =  -\sum_{i=1}^n p_i 
\log p_i. 
\end{equation}
We can also think of $H(p_X)=-E[\log p_i]=E\left[\log \frac{1}{p_i}\right]$, where $E[\log p_i]$ is the expected value of $\log p_i$.
When $n=2$, the function $H(p_X)$ becomes a function of a single probability $p := p_1$. Then the entropy is  
\begin{equation}\label{eq_four_term_relation_entropy_2}
H(p) \ = \ - p\log p - (1-p)\log (1-p). 
\end{equation}
It is natural to extend $H$ from the open interval $(0,1)$ to all real numbers by  defining $H(0) = H(1) = 0$ and 
\begin{equation}
\label{eqn:two_term_entropy}
H(p) \ := \ - p\log|p| - (1-p)\log |1-p|. 
\end{equation}
Then $H$ is defined and continuous on all of $\mathbb{R}$.
The graph of $H$ is given in Figure~\ref{fig8_002}.
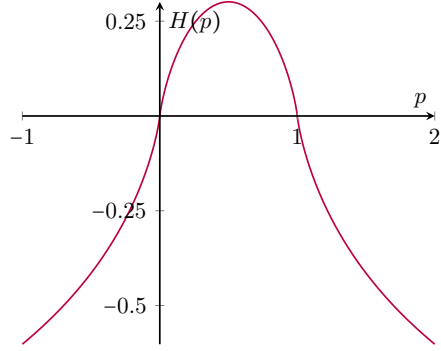
\begin{figure}
    \centering
\begin{tikzpicture}[scale=0.8]
\begin{axis}
[thick,
axis on top,
axis lines = middle,
xtick={-3,-2,-1,0,1,2,3},
ytick={-1,-0.75,-0.5,-0.25,0,0.25,0.5,0.75,1},
xlabel=\(p\),
ylabel = {\( H(p)\)},
clip=false,
every axis plot/.append style={myplot},
]
\addplot[thick,
domain = -1:2,
samples = 500,
color=purple,
]
{-x*log10(abs(x))-(1-x)*log10(abs(1-x))};
\end{axis}

\end{tikzpicture}
    \caption{Entropy function $H(p)=-p\log |p|-(1-p)\log |1-p|$ for $-1\leq p\leq 2$. }
    \label{fig8_002}
\end{figure}

\begin{lemma}
\label{lemma_entropy_basic}
We have 
$H(a) = H(1-a)$ and $H\left(\dfrac{1}{a}\right) = -\dfrac{1}{a}H(a)$.
\end{lemma}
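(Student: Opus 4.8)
The plan is to verify the two identities directly from the definition \eqref{eqn:two_term_entropy} of the extended entropy function $H(p) = -p\log|p| - (1-p)\log|1-p|$, treating each claim as an elementary algebraic manipulation with logarithms. The symmetry $H(a) = H(1-a)$ should fall out immediately: substituting $1-a$ for $p$ gives $H(1-a) = -(1-a)\log|1-a| - (1-(1-a))\log|1-(1-a)| = -(1-a)\log|1-a| - a\log|a|$, which is exactly $H(a)$ after reordering the two summands. This step is essentially free and I would dispatch it in one line.

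For the second identity $H(1/a) = -\tfrac{1}{a}H(a)$, the approach is to expand $H(1/a) = -\tfrac{1}{a}\log|1/a| - (1 - \tfrac{1}{a})\log|1 - \tfrac{1}{a}|$ and then rewrite each logarithm. First, $\log|1/a| = -\log|a|$, so the first term becomes $\tfrac{1}{a}\log|a|$. For the second term, write $1 - \tfrac1a = \tfrac{a-1}{a}$, so $\log|1 - \tfrac1a| = \log|a-1| - \log|a| = \log|1-a| - \log|a|$, and the coefficient is $1 - \tfrac1a = \tfrac{a-1}{a} = -\tfrac{1-a}{a}$. Multiplying out, the second term contributes $\tfrac{1-a}{a}\bigl(\log|1-a| - \log|a|\bigr)$. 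Collecting everything and factoring out $\tfrac1a$, one gets $\tfrac1a\bigl(\log|a| + (1-a)\log|1-a| - (1-a)\log|a|\bigr) = \tfrac1a\bigl(a\log|a| + (1-a)\log|1-a|\bigr) = -\tfrac1a H(a)$. One should note in passing that this is all valid for $a \in \kk^{*}$ with $a \neq 1$ so that $1/a$ is defined and the expressions make sense, with the boundary conventions $H(0)=H(1)=0$ handling any degenerate cases.

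**The main (minor) obstacle** I anticipate is simply bookkeeping the signs and the absolute-value bars carefully — in particular making sure the identity $\log|a-1| = \log|1-a|$ is used rather than an unjustified $\log(a-1) = \log(1-a)$, and keeping track that $1 - \tfrac1a$ has the sign of $\tfrac{a-1}{a}$. There is no conceptual difficulty here; the statement is a routine consequence of \eqref{eqn:two_term_entropy} and the functional equation of $\log$, and it mirrors exactly the analogous identities already established for the formal symbols in $\beta(\kk)$ and $J(\kk)$, which is presumably why it is recorded at this point: it is the analytic shadow of those diagrammatic relations.
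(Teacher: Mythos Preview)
Your proposal is correct and follows essentially the same approach as the paper: both arguments expand $H(1-a)$ and $H(1/a)$ directly from the definition \eqref{eqn:two_term_entropy} and simplify using $\log|1/a|=-\log|a|$ and $\log\left|\tfrac{a-1}{a}\right|=\log|1-a|-\log|a|$, differing only in the order in which terms are collected before factoring out $\tfrac{1}{a}$.
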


\begin{proof}
For the first equality, we have 
\[
H(a) 
= -a \log |a| -(1-a) \log |1-a| 
= -(1-a) \log |1-a| - (1-(1-a)) \log |1-(1-a)|
= H(1-a).
\]
For the second equality, we have 
\begin{align*}
H\left(\frac{1}{a} \right) 
&= -\frac{1}{a} \log \left| \frac{1}{a}\right|
- \left(1-\frac{1}{a}\right)\log \left|1-\frac{1}{a}\right| \\ 
&= \frac{1}{a} \log |a| - \left( 1-\frac{1}{a}\right) \log \left|  \frac{a-1}{a} \right| \\ 
&= \frac{1}{a} \log |a| - \left( 1-\frac{1}{a}\right) \left(  \log |1-a| - \log |a| \right) \\ 
&= \log |a| - \left( 1-\frac{1}{a}\right)   \log |1-a|  \\ 
&= -\frac{1}{a} \left( -a \log |a| - (1-a) \log |1-a| \right) \\ 
&= -\frac{1}{a} H(a).
\end{align*}
This concludes the proof. 
\end{proof}

The entropy function satisfies a four-term functional equation, see~\cite{Kont02}, 
%\cite{Faddeev56,Faddeev_online,Khinchin56,Tak58},
\begin{equation}
\label{eq_four_term_relation_entropy}
H(p) - H(q) +p\: H\left(\dfrac{q}{p}\right) + (1-p)H\left(\dfrac{1-q}{1-p}\right) = 0.
\end{equation}
Equation~\eqref{eq_four_term_relation_entropy} together with $H(1-p)=H(p)$ and continuity property uniquely determine $H$ as a nonzero function $\R\lra \R$. See~\cite{Faddeev56,Kont02}. Also see \cite[Equation (38)]{Taneja77}.

\begin{proposition}
\label{prop:4_term_relation_entropy}
Let $H:\mathbb{R}\rightarrow \mathbb{R}$, where $H$ is \eqref{eqn:two_term_entropy} satisfying conditions $H(0)=H(1)=0$. Then $H$ satisfies the $4$-term relation~\eqref{eq_four_term_relation_entropy}.
\end{proposition}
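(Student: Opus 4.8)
The plan is a direct verification, organized so that the bookkeeping stays manageable. First I set $\phi(t):=-t\log|t|$ for $t\neq 0$ and $\phi(0):=0$, so that $H(p)=\phi(p)+\phi(1-p)$ and $\phi$ is continuous on all of $\R$ since $t\log|t|\to 0$ as $t\to 0$. The one fact I need about $\phi$ is the Leibniz-type identity $\phi(xy)=x\,\phi(y)+y\,\phi(x)$ for nonzero $x,y$ (equivalently $\log|xy|=\log|x|+\log|y|$), which lets me expand $\phi$ of a quotient cleanly. I also record at the outset that the natural domain of \eqref{eq_four_term_relation_entropy} is $p\in\R\setminus\{0,1\}$, so that $q/p$ and $(1-q)/(1-p)$ are defined, with $q\in\R$ arbitrary.

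Next I handle the generic case: assume in addition that $q$, $1-q$ and $q-p$ are all nonzero, so that none of the arguments $p$, $q$, $q/p$, $(1-q)/(1-p)$ equals $0$ or $1$. Using $1-q/p=(p-q)/p$ and $1-(1-q)/(1-p)=(q-p)/(1-p)$, I expand each of the four summands $H(p)$, $-H(q)$, $p\,H(q/p)$, $(1-p)\,H\!\big((1-q)/(1-p)\big)$ into a combination of $\log|p|,\log|1-p|,\log|q|,\log|1-q|,\log|p-q|$ with polynomial coefficients in $p,q$, and then collect. One finds: the $\log|q|$ terms cancel between $-H(q)$ and $p\,H(q/p)$; the $\log|1-q|$ terms cancel between $-H(q)$ and $(1-p)\,H\!\big((1-q)/(1-p)\big)$; the $\log|p-q|$ terms cancel between the last two summands; and the coefficients of $\log|p|$ and of $\log|1-p|$ each telescope to $0$. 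This is a short mechanical computation, and it establishes \eqref{eq_four_term_relation_entropy} on a dense subset of its domain.

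Finally I remove the restrictions on $q$. The cleanest route is continuity: for fixed $p\in\R\setminus\{0,1\}$ the left-hand side of \eqref{eq_four_term_relation_entropy} is continuous in $q\in\R$ (as $H$ is continuous on $\R$ and $q\mapsto q/p$, $q\mapsto(1-q)/(1-p)$ are continuous), and it vanishes on a dense set, hence identically. Alternatively, to avoid invoking density, one checks the three boundary cases by hand using $H(0)=H(1)=0$ together with $H(1/a)=-\tfrac1a H(a)$ from Lemma~\ref{lemma_entropy_basic}: for instance when $q=0$ the relation reduces to $H(p)+(1-p)\,H\!\big(\tfrac1{1-p}\big)=H(p)-H(1-p)=0$, and the cases $q=1$ and $q=p$ are similar.

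There is no genuine obstacle here; the content is elementary algebra and the proof is really just careful bookkeeping. The one point that matters, and that I would flag explicitly rather than gloss over, is the role of the absolute values and of the convention $0\log 0=0$: they are precisely what makes $H$ continuous on $\R$, and hence what makes both the boundary configurations and the continuity step go through. (Working with $\log$ in place of $\log|\cdot|$ would leave the generic cancellation intact, but the domain and the limiting arguments would be more delicate.)
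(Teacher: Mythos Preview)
Your proposal is correct and follows essentially the same route as the paper: a direct expansion of each term into a linear combination of $\log|p|$, $\log|1-p|$, $\log|q|$, $\log|1-q|$, $\log|p-q|$ followed by the observation that all coefficients cancel. Your introduction of $\phi(t)=-t\log|t|$ and the identity $\phi(xy)=x\,\phi(y)+y\,\phi(x)$ is a convenient repackaging of the same arithmetic; the one place you go beyond the paper is in explicitly isolating the domain $p\in\R\setminus\{0,1\}$ and treating the degenerate values $q\in\{0,1,p\}$ via continuity (or via Lemma~\ref{lemma_entropy_basic}), whereas the paper's computation tacitly assumes generic $q$.
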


Since the proof of Proposition~\ref{prop:4_term_relation_entropy} is not explicitly stated anywhere in the literature, we will provide its proof.

\begin{proof}
We have the following set of equalities: 
\begin{equation}
\label{eqn:proving_4_term}
\begin{split}
 -H(p)&=p\log |p| + (1-p)\log |1-p|, \\ 
 -H(q)&=q\log |q| + (1-q)\log |1-q|, \\ 
 -H\left(\frac{q}{p}\right) &= \frac{q}{p}\log\left|\frac{q}{p}\right| + \left(1-\frac{q}{p}\right) \log \left| 1-\frac{q}{p}\right|  \\ 
    &= \frac{q}{p}\left(\log|q| -\log|p|\right) 
    + \frac{p-q}{p}\left(\log |p-q| -\log|p| \right), \\ 
 -H\left(\frac{1-q}{1-p}\right) 
&= \frac{1-q}{1-p} \log\left| \frac{1-q}{1-p}\right| 
+ \frac{q-p}{1-p}\log \left|\frac{q-p}{1-p}\right|  \\
&= \frac{1-q}{1-p}\left( \log |1-q|-\log|1-p| \right) +\frac{q-p}{1-p}\left( \log |q-p| - \log|1-p| \right).\\
\end{split}
\end{equation}
So 
\begin{align*}
-p H&\left(\dfrac{q}{p}\right) - (1-p)H\left(\dfrac{1-q}{1-p}\right)
=  q\left(\log|q| -\log|p|\right) 
    + (p-q)\left(\log |p-q| -\log|p| \right)\\
&\hspace{4mm}+ 
 (1-q) \left( \log |1-q|-\log|1-p| \right) + (q-p)\left( \log |q-p| - \log|1-p| \right) \\ 
 &= q\log |q| - q\log|p| + \cancel{(p-q)\log|p-q|} - (p-q)\log|p| \\
&\hspace{4mm}+ (1-q)\log|1-q| - (1-q)\log|1-p| + \cancel{(q-p)\log|q-p|} - (q-p)\log|1-p| \\ 
 &= q\log |q| + (1-q)\log|1-q| - \cancel{q\log|p|} + \cancel{q \log |p|} - p\log|p|\\
&\hspace{4mm} - \log|1-p| + \cancel{q \log|1-p|}  - \cancel{q\log|1-p|} + p \log |1-p|\\ 
&= q\log |q| + (1-q)\log|1-q| 
  - (p\log|p| + (1-p) \log|1-p|) \\ 
&= -H(q) + H(p).
\end{align*}
Therefore, 
$H(p) - H(q) +p H\left(\dfrac{q}{p}\right) + (1-p)H\left(\dfrac{1-q}{1-p}\right) = 0$ holds. 
\end{proof}

We end this section by mentioning an important relationship relating Cathelineau's vector space~\cite{Cath88,Cath11} and Kontsevich's observation~\cite{Kont02}, which is
\begin{equation}
\label{eqn_diagrammatic_entropy}
\langle p_1, p_2 \rangle 
= 
\begin{cases}
(p_1 + p_2) H\left( \dfrac{p_1}{p_1 + p_2} \right) \quad  
 &\mbox{ if } p_1 + p_2 \not= 0, \\
\qquad\qquad 0 &\mbox{ if } p_1 + p_2 = 0, 
\end{cases}
\end{equation}
relating Cathelineau's vector space and entropy.

\begin{lemma}
\label{lemma_Cath_properties}
Equation~\eqref{eqn_diagrammatic_entropy} satisfies symmetry, scaling, and the $2$-cocycle relation. 
\end{lemma}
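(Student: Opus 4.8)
The plan is to verify the three defining relations of $J(\kk)$ directly from the formula \eqref{eqn_diagrammatic_entropy}, using the functional identities for $H$ already established, namely $H(a)=H(1-a)$, $H(1/a)=-\tfrac1a H(a)$, and the four-term relation \eqref{eq_four_term_relation_entropy}. Symmetry $\langle p_1,p_2\rangle=\langle p_2,p_1\rangle$ is immediate when $p_1+p_2=0$ (both sides vanish) and otherwise reduces to $H\!\left(\tfrac{p_1}{p_1+p_2}\right)=H\!\left(\tfrac{p_2}{p_1+p_2}\right)$, which is exactly $H(a)=H(1-a)$ with $a=\tfrac{p_1}{p_1+p_2}$. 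Scaling $\langle cp_1,cp_2\rangle=c\langle p_1,p_2\rangle$ for $c\in\kk$: if $c=0$ both sides are $0$; if $c\neq 0$ then $cp_1+cp_2\neq 0$ exactly when $p_1+p_2\neq 0$, and in that case $\tfrac{cp_1}{cp_1+cp_2}=\tfrac{p_1}{p_1+p_2}$, so the left side equals $(cp_1+cp_2)H\!\left(\tfrac{p_1}{p_1+p_2}\right)=c(p_1+p_2)H\!\left(\tfrac{p_1}{p_1+p_2}\right)=c\langle p_1,p_2\rangle$; when $p_1+p_2=0$ both sides are again $0$.

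The substance is the $2$-cocycle relation
\[
\langle a,b+c\rangle+\langle b,c\rangle=\langle a+b,c\rangle+\langle a,b\rangle .
\]
The excerpt already records (via the isomorphism $J(\kk)\cong\beta(\kk)$) that under \eqref{eqn_diagrammatic_entropy} this is equivalent to
\[
(a+b+c)\Bigl[\tfrac{a}{a+b+c}\Bigr]+(b+c)\Bigl[\tfrac{b}{b+c}\Bigr]
=(a+b+c)\Bigl[\tfrac{a+b}{a+b+c}\Bigr]+(a+b)\Bigl[\tfrac{a}{a+b}\Bigr],
\]
and with $[\,\cdot\,]$ replaced by $H(\cdot)$ this is precisely the four-term relation \eqref{eq_four_term_relation_entropy} after the substitution $p=\tfrac{a+b}{a+b+c}$, $q=\tfrac{a}{a+b+c}$ (so that $\tfrac{q}{p}=\tfrac{a}{a+b}$, $1-p=\tfrac{c}{a+b+c}$, $1-q=\tfrac{b+c}{a+b+c}$, $\tfrac{1-q}{1-p}=\tfrac{b+c}{c}$), together with the scaling property just established to clear the common denominator $a+b+c$ and the factor $a+b$. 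So first I would handle the generic case $a+b+c\neq 0$, $a+b\neq 0$, $b+c\neq 0$, $c\neq 0$ by this substitution into \eqref{eq_four_term_relation_entropy}; note one also needs $\tfrac{a+b}{a+b+c}\notin\{0,1\}$, i.e. $a+b\neq 0$ and $c\neq 0$, which is part of the generic hypotheses.

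The main obstacle is the bookkeeping of the degenerate cases where one of $a+b+c$, $a+b$, $b+c$, $c$ (or $a$, or $b$) vanishes, since then one or more terms in the cocycle identity are defined by the second branch of \eqref{eqn_diagrammatic_entropy} rather than by the $H$-formula, and the substitution above is not directly available. I would dispatch these by hand: e.g. if $a+b+c=0$ the left side of the cocycle relation becomes $\langle a,b+c\rangle+\langle b,c\rangle=\langle a,-a\rangle+\langle b,c\rangle=\langle b,c\rangle$ (using $H(0)=0$ on the first term, or equivalently the second branch), while the right side is $\langle a+b,c\rangle+\langle a,b\rangle=\langle -c,c\rangle+\langle a,b\rangle=\langle a,b\rangle$, and one checks $\langle b,c\rangle=\langle a,b\rangle$ using $a=-b-c$ and the scaling/symmetry relations; similarly for $a+b=0$, $b+c=0$, $c=0$, and the remaining sub-cases, each reducing to an already-proven identity for $H$ or to the previously established relations of $\beta(\kk)$. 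These verifications are routine but must be listed so that \eqref{eqn_diagrammatic_entropy} is shown to be well-defined and to satisfy all three relations on all of $\kk^2$.
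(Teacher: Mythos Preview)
Your proposal is correct, but it takes a genuinely different route from the paper's own proof. The paper verifies all three relations by brute-force expansion of the explicit formula $H(p)=-p\log|p|-(1-p)\log|1-p|$: for symmetry and scaling it writes out the logarithms and cancels, and for the $2$-cocycle it computes each of $\langle p_1,p_2\rangle+\langle p_1+p_2,p_3\rangle$ and $\langle p_2,p_3\rangle+\langle p_1,p_2+p_3\rangle$ term by term in logs, showing both equal $-p_1\log|p_1|-p_2\log|p_2|-p_3\log|p_3|$; moreover the paper carries out the cocycle computation only under the normalization $p_1+p_2+p_3=1$ and does not discuss the degenerate branches of \eqref{eqn_diagrammatic_entropy} at all.

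By contrast, you never unpack the logarithm: you reduce symmetry to $H(a)=H(1-a)$, scaling to the trivial cancellation of $c$ in the argument, and the $2$-cocycle to the four-term relation \eqref{eq_four_term_relation_entropy} via the substitution $p=\tfrac{a+b}{a+b+c}$, $q=\tfrac{a}{a+b+c}$ (together with $H(1/a)=-\tfrac1a H(a)$ to rewrite the $c\,H\!\bigl(\tfrac{b+c}{c}\bigr)$ term). This is the same mechanism the paper itself uses later, in the proof of Theorem~\ref{thm_isom_Cath_entropy}, to show that $\lambda$ respects the cocycle relation; you have in effect imported that argument here. Your approach is cleaner, works for arbitrary $a+b+c\neq0$ without the normalization, and your explicit bookkeeping of the degenerate cases ($a+b+c=0$, $a+b=0$, $b+c=0$, etc.) is more careful than what the paper provides. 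The paper's approach buys only that it is self-contained from the definition of $H$ and does not presuppose Proposition~\ref{prop:4_term_relation_entropy}.
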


\begin{proof}
We will first prove symmetry: 
\begin{align*}
\langle p_1,p_2 \rangle 
&= (p_1 + p_2) H\left( \frac{p_1}{p_1 + p_2} \right) \\
&= (p_1 + p_2) \left( - \frac{p_1}{p_1 + p_2} \log \left|\frac{p_1}{p_1 + p_2}\right| - \left( 1- \frac{p_1}{p_1 + p_2}\right)\log \left| 1 - \frac{p_1}{p_1 + p_2} \right| \right) \\
&= (p_1 + p_2) \left( - \frac{p_1}{p_1 + p_2} \log \left| \frac{p_1}{p_1 + p_2}\right| - \frac{p_2}{p_1 + p_2} \log \left| \frac{p_2}{p_1 + p_2} \right| \right) \\
&= (p_1 + p_2) \left(- \frac{p_2}{p_1 + p_2} \log \left| \frac{p_2}{p_1 + p_2} \right| - \frac{p_1}{p_1 + p_2} \log \left| \frac{p_1}{p_1 + p_2}\right| \right) \\
&= (p_1 + p_2) \left(- \frac{p_2}{p_1 + p_2} \log \left| \frac{p_2}{p_1 + p_2} \right| - \left( 1- \frac{p_2}{p_1 + p_2}\right)\log \left| 1 - \frac{p_2}{p_1 + p_2} \right| \right) \\
&= (p_1 + p_2) H\left( \frac{p_2}{p_1+p_2} \right) \\ 
&= \langle p_2, p_1\rangle.
\end{align*}
Next, we will prove scaling:
\begin{align*}
\langle c \: p_1, c \: p_2\rangle 
&= (c\: p_1 + c \: p_2) H\left( \frac{c\: p_1}{c\: p_1 + c\: p_2} \right) \\ 
&= (c\: p_1 + c \: p_2)
\left(-\frac{c\: p_1}{c\: p_1 + c\: p_2} \log \left| \frac{c\: p_1}{c\: p_1 + c\: p_2}\right| - \left( 1- \frac{c\: p_1}{c\: p_1 + c\: p_2}\right)\log \left| 1- \frac{c\: p_1}{c\: p_1 + c\: p_2} \right| \right) \\ 
&= c(p_1 + p_2)
\left(-\frac{\cancel{c}\: p_1}{\cancel{c}\: p_1 + \cancel{c}\: p_2} 
\log \left| \frac{ \cancel{c}\: p_1}{\cancel{c}\: p_1 + \cancel{c}\: p_2}\right| 
-  \frac{\cancel{c}\: p_2}{\cancel{c}\: p_1 + \cancel{c}\: p_2} 
\log \left| \frac{\cancel{c}\: p_2}{\cancel{c}\: p_1 + \cancel{c}\: p_2} \right| \right) \\
&= c(p_1 + p_2)
\left(-\frac{p_1}{p_1 + p_2} 
\log \left| \frac{p_1}{p_1 + p_2}\right| 
-  \frac{p_2}{p_1 +p_2} 
\log \left| \frac{p_2}{p_1 + p_2} \right| \right) \\
&= c(p_1 + p_2) H\left( \frac{p_1}{p_1 + p_2} \right) \\ 
&= c \langle p_1, p_2\rangle. 
\end{align*}
Finally, we will prove the $2$-cocycle condition: 
\begin{align*}
\langle p_1, p_2 \rangle &= (p_1 + p_2) H\left( \frac{p_1}{p_1 + p_2} \right)  \\ 
&= \cancel{(p_1 + p_2)} \left( 
-\frac{p_1}{\cancel{p_1 + p_2}} \log 
\left| \frac{p_1}{p_1 + p_2} \right| 
- \frac{p_2}{\cancel{p_1 + p_2}} \log 
\left| \frac{p_2}{p_1 + p_2} \right| \right) \\ 
&= - p_1 \log \left| \frac{p_1}{p_1 + p_2}\right| 
- p_2 \log \left| \frac{p_2}{p_1 + p_2}\right| \\ 
&=  - p_1 \log |p_1| - p_2 \log | p_2| 
+ (p_1 + p_2) \log | p_1 + p_2|
\end{align*}
while 
\begin{align*}
\langle p_1 + p_2, p_3 \rangle 
&= (p_1 + p_2 + p_3) H\left( \frac{p_1 + p_2}{p_1 + p_2 + p_3}\right) \\ 
&= H(p_1 + p_2) \qquad \qquad \qquad \qquad \mbox{ since } p_1 + p_2 + p_3=1 \\ 
&= -(p_1 + p_2) \log |p_1 + p_2| -(1 - (p_1 + p_2))\log |1-(p_1 + p_2)| \\
&= -(p_1 + p_2) \log |p_1 + p_2| - p_3 \log |p_3|. 
\end{align*}
So the sum is: 
\begin{align*}
\langle p_1, p_2 \rangle + \langle p_1 + p_2, p_3 \rangle &= - p_1 \log |p_1| - p_2 \log | p_2| 
+ \cancel{(p_1 + p_2) \log | p_1 + p_2|} \\
&\hspace{4mm} - \cancel{(p_1 + p_2) \log |p_1 + p_2|} - p_3 \log |p_3| \\
&=- p_1 \log |p_1| - p_2 \log | p_2|- p_3 \log |p_3| \\
&=H(p_X).
\end{align*}
On the other hand, 
\begin{align*}
\langle p_2 , p_3 \rangle 
&= (p_2 + p_3) H\left( \frac{p_2}{p_2 + p_3} \right)  \\ 
&= \cancel{(p_2 + p_3)} \left( 
-\frac{p_2}{\cancel{p_2 + p_3}} \log 
\left| \frac{p_2}{p_2 + p_3} \right| 
- \frac{p_3}{\cancel{p_2 + p_3}} \log 
\left| \frac{p_3}{p_2 + p_3} \right| \right) \\ 
&= - p_2 \log \left| \frac{p_2}{p_2 + p_3}\right| 
- p_3 \log \left| \frac{p_3}{p_2 + p_3}\right| \\ 
&=  - p_2 \log |p_2| - p_3 \log | p_3| 
+ (p_2 + p_3) \log | p_2 + p_3| 
\end{align*}
while 
\begin{align*}
\langle p_1, p_2+ p_3 \rangle 
&= (p_1 + p_2 + p_3) H\left( \frac{p_1}{p_1 + p_2 + p_3}\right) \\ 
&= H(p_1)  \qquad \qquad \qquad \qquad \qquad \mbox{ since } p_1 + p_2 + p_3=1\\ 
&= -p_1 \log |p_1| - (1 - p_1)\log |1 - p_1| \\
&= -p_1 \log |p_1| - (p_2 + p_3) \log |p_2 + p_3|.
\end{align*}
So the sum is: 
\begin{align*}
\langle p_2 , p_3 \rangle + \langle p_1, p_2+ p_3 \rangle &= - p_2 \log |p_2| - p_3 \log | p_3| + \cancel{(p_2 + p_3) \log | p_2 + p_3|} \\ 
&\hspace{4mm} -p_1 \log |p_1| - \cancel{(p_2 + p_3) \log |p_2 + p_3|} \\
&=  -p_1 \log |p_1| - p_2 \log |p_2| - p_3 \log | p_3| \\ 
&= H(p_X). 
\end{align*}
This concludes the proof.
\end{proof}

\begin{lemma}
If $p_1 + p_2 = 1$, we have 
\begin{equation}
H\left( p_1 \right)
= H\left( p_2 \right).
\end{equation}
\end{lemma}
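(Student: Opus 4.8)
The plan is to reduce this immediately to the first identity of Lemma~\ref{lemma_entropy_basic}, namely $H(a)=H(1-a)$. Indeed, the hypothesis $p_1+p_2=1$ is precisely the statement that $p_2=1-p_1$, so no computation beyond a substitution is needed.

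First I would set $a := p_1$, so that $p_2 = 1 - a$. Then I would invoke $H(a) = H(1-a)$ from Lemma~\ref{lemma_entropy_basic} to conclude $H(p_1) = H(a) = H(1-a) = H(p_2)$. If one prefers a self-contained argument avoiding the citation, one can instead expand both sides directly from the definition~\eqref{eqn:two_term_entropy}: writing out $H(p_1) = -p_1\log|p_1| - (1-p_1)\log|1-p_1|$ and $H(p_2) = -p_2\log|p_2| - (1-p_2)\log|1-p_2|$, and substituting $p_2 = 1-p_1$ (hence $1-p_2 = p_1$) into the second expression, the two terms simply swap, giving equality. I would also note that this covers the boundary cases $p_1 \in \{0,1\}$, since $H(0)=H(1)=0$ by the stipulated conventions.

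There is essentially no obstacle here; the only thing to be mildly careful about is confirming that the symmetry $H(a)=H(1-a)$ as used from Lemma~\ref{lemma_entropy_basic} is valid on all of $\mathbb{R}$ (including the extended definition with absolute values and the endpoint conventions), which it is, so the identity holds for every pair $(p_1,p_2)$ of real numbers with $p_1+p_2=1$, not merely for genuine probabilities.
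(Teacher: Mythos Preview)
Your proof is correct and is essentially the same one-line reduction the paper gives: the paper says ``This follows from Lemma~\ref{lemma_Cath_properties}'' (whose symmetry part is exactly the scaled identity $(p_1+p_2)H\!\left(\tfrac{p_1}{p_1+p_2}\right)=(p_1+p_2)H\!\left(\tfrac{p_2}{p_1+p_2}\right)$, which at $p_1+p_2=1$ gives $H(p_1)=H(p_2)$), while you invoke the equivalent identity $H(a)=H(1-a)$ directly from Lemma~\ref{lemma_entropy_basic}. Your citation is arguably the more direct of the two, but the content is identical.
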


\begin{proof}
This follows from Lemma~\ref{lemma_Cath_properties}. 
\end{proof}

Let $\rho: H(\kk) \rightarrow J(\mathbf{k})$, where $\rho(H(a)) = \langle a,1-a \rangle$. 

\begin{theorem}
\label{thm_isom_Cath_entropy}
The map $\rho$ is an isomorphism of vector spaces.
\end{theorem}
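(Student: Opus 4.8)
The plan is to produce an explicit two‑sided inverse and reduce the theorem to well-definedness plus a short computation on spanning sets. Define $\sigma\colon J(\kk)\to H(\kk)$ on the spanning set by $\sigma(\langle a,b\rangle)=(a+b)\,H\!\big(\tfrac{a}{a+b}\big)$ when $a+b\neq 0$ and $\sigma(\langle a,b\rangle)=0$ when $a+b=0$; this is the rule~\eqref{eqn_diagrammatic_entropy} read from right to left. It then suffices to establish: (i) $\rho$ is a well-defined linear map; (ii) $\sigma$ is a well-defined linear map; (iii) $\sigma\circ\rho=\id_{H(\kk)}$ and $\rho\circ\sigma=\id_{J(\kk)}$, all checked on the spanning sets.

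For (i) I would verify that each defining relation of $H(\kk)$ is sent to a relation valid in $J(\kk)$. The normalization $H(1)\mapsto\langle 1,0\rangle=0$ and the reflection $H(a)-H(1-a)\mapsto\langle a,1-a\rangle-\langle 1-a,a\rangle=0$ are immediate from the basic identities $\langle a,0\rangle=0$ and from symmetry~\eqref{item:symm}. The one genuine computation is the four-term relation~\eqref{eq_four_term_relation_entropy}. Applying $\rho$ and using scaling~\eqref{item:scale} to absorb the coefficients (so that $p\,\langle\tfrac qp,\tfrac{p-q}{p}\rangle=\langle q,p-q\rangle$ and $(1-p)\,\langle\tfrac{1-q}{1-p},\tfrac{q-p}{1-p}\rangle=\langle 1-q,q-p\rangle$), the image of~\eqref{eq_four_term_relation_entropy} becomes
\[
\langle p,1-p\rangle-\langle q,1-q\rangle+\langle q,p-q\rangle+\langle 1-q,q-p\rangle=0 .
\]
I would prove this from the $2$-cocycle relation~\eqref{item:Cath_cocycle} in two applications: first with $(a,b,c)=(p,\,q-p,\,1-q)$, which turns $\langle p,1-p\rangle-\langle q,1-q\rangle$ into $\langle p,q-p\rangle-\langle q-p,1-q\rangle$ and then cancels against $\langle 1-q,q-p\rangle=\langle q-p,1-q\rangle$ by symmetry, reducing the claim to $\langle p,q-p\rangle+\langle q,p-q\rangle=0$; and second with $(a,b,c)=(q,\,p-q,\,q-p)$, which reads $\langle q,0\rangle+\langle p-q,q-p\rangle=\langle p,q-p\rangle+\langle q,p-q\rangle$, so that $\langle q,0\rangle=0$ and $\langle p-q,q-p\rangle=\langle p-q,-(p-q)\rangle=0$ give exactly the required vanishing.

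For (ii), under~\eqref{eqn_diagrammatic_entropy} the symmetry, scaling, and $2$-cocycle relations of $J(\kk)$ translate respectively into $H(x)=H(1-x)$, the homogeneity of $\sigma$, and a three-variable functional equation for $H$; these are precisely the identities proved in Lemma~\ref{lemma_Cath_properties}. The only addendum is that Lemma~\ref{lemma_Cath_properties} verifies the $2$-cocycle case under the normalization $p_1+p_2+p_3=1$: the case $p_1+p_2+p_3=s\neq 0$ follows by rescaling all arguments by $1/s$, and the degenerate case $p_1+p_2+p_3=0$ is handled directly, since then two of the four terms have arguments summing to $0$ (hence vanish) and the remaining identity reduces, via $H(x)=H(1-x)$ and $H(1/a)=-\tfrac1a H(a)$ (Lemma~\ref{lemma_entropy_basic}), to $0=0$. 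For (iii) one computes on generators: $\sigma(\rho(H(a)))=\sigma(\langle a,1-a\rangle)=\big(a+(1-a)\big)\,H\!\big(\tfrac{a}{1}\big)=H(a)$ (with $a\in\{0,1\}$ covered by $\langle a,0\rangle=0$ and $H(0)=H(1)=0$), and, for $a+b\neq 0$, $\rho(\sigma(\langle a,b\rangle))=(a+b)\,\langle\tfrac{a}{a+b},\tfrac{b}{a+b}\rangle=\langle a,b\rangle$ by scaling, while both sides vanish when $a+b=0$; hence $\rho$ and $\sigma$ are mutually inverse linear bijections, so $\rho$ is an isomorphism.

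I expect the main obstacle to be the four-term computation in (i): it is the only point where the Shannon functional equation must be matched against the cocycle presentation of $J(\kk)$ rather than merely unwound, so choosing the two correct instances of~\eqref{item:Cath_cocycle} and keeping the signs and the symmetry step straight is the crux; everything else is either bookkeeping or a direct appeal to Lemma~\ref{lemma_Cath_properties} and Lemma~\ref{lemma_entropy_basic}.
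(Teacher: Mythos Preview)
Your proposal is correct and follows essentially the same strategy as the paper: construct the inverse $\sigma$ (called $\lambda$ there), show both $\rho$ and $\sigma$ are well-defined by checking that the defining relations are preserved, with the crux being the four-term relation for $\rho$ and the $2$-cocycle relation for $\sigma$. Your computational choices differ only in minor ways---you use two instances of the cocycle relation where the paper uses Lemma~\ref{lemma_minus_1_a} plus one, and you reduce the $2$-cocycle check to Lemma~\ref{lemma_Cath_properties} via rescaling rather than through the paper's $(u,v)$-substitution---and you add the explicit verification that $\sigma\circ\rho=\id$ and $\rho\circ\sigma=\id$ on generators, which the paper leaves implicit.
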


\begin{proof}
We have
\[
\rho(H(a)) = \langle a,1-a \rangle 
= \langle 1-a, a \rangle 
= \langle 1-a, 1-(1-a) \rangle 
= \rho(H(1-a))
\]
and 
\begin{align*}
\rho\left(H\left(\frac{1}{a}\right)\right) 
&= \left\langle \frac{1}{a},1-\frac{1}{a} \right\rangle 
= -\frac{1}{a} \langle -1, -a + 1\rangle
= -\frac{1}{a} \langle -1, 1 - a\rangle \\
&= -\frac{1}{a} \langle 1-a, -1 \rangle 
\stackrel{\dagger}{=} -\frac{1}{a} \langle a, 1-a \rangle  
=  -\frac{1}{a}\rho(H(a)),
\end{align*}
where $\dagger$ holds by Lemma~\ref{lemma_minus_1_a}.

Furthermore, equation \eqref{eq_four_term_relation_entropy} under $\rho$ becomes 
\begin{align*}
\rho(H(0)) 
&= \rho(H(a)) -\rho(H(b)) + a \rho\left(H\left(\frac{b}{a}\right)\right) + (1-a)\rho\left(H\left(\frac{1-b}{1-a}\right) \right) \\ 
&= \langle a, 1-a\rangle  - \langle b, 1-b \rangle + a \left\langle \frac{b}{a}, 1-\frac{b}{a} \right\rangle + (1-a) \left\langle \frac{1-b}{1-a}, 1-\frac{1-b}{1-a} \right\rangle \\ 
&= \langle a,1-a \rangle - \langle b,1-b\rangle + \langle b,a-b \rangle 
+ \langle 1-b, 1-a - (1-b)\rangle \\ 
&\stackrel{\ddagger}{=} \langle a,1-a \rangle - \langle b,1-b\rangle - \langle a,b-a \rangle 
+ \langle 1-b, b-a \rangle, 
\end{align*}
where $\ddagger$ holds due to
\[
\langle b,a-b \rangle 
= b \left\langle 1, \frac{a}{b} - 1 \right\rangle
= -b \left\langle 1-\frac{a}{b},-1 \right\rangle 
= -b \left\langle \frac{a}{b},1-\frac{a}{b} \right\rangle 
= -\langle a, b-a \rangle
\] via Lemma~\ref{lemma_minus_1_a}. Now, replacing $a$ with $a$, $b$ with $b-a$, and $c=1-b$ in the $2$-cocycle relation~\eqref{item:Cath_cocycle}, we see that 
\begin{align*}
\langle a, b-a+1-b\rangle 
&+ \langle b-a, 1-b \rangle 
- \langle a+b -a, 1-b \rangle 
- \langle a, b-a \rangle \\
&= 
\langle a, 1-a \rangle + \langle 1-b, b-a \rangle - \langle a, b-a \rangle  
- \langle b,1-b \rangle =0.
\end{align*}
So 
\begin{align*}
\rho(0) 
&= \rho(H(0)) = \rho\left(H(a)-H(b) + a H\left(\frac{b}{a}\right) + (1-a)H\left(\frac{1-b}{1-a}\right)\right) \\
&=  \langle a,1-a \rangle - \langle b,1-b\rangle - \langle a,b-a \rangle 
+ \langle 1-b, b-a \rangle 
= 0. 
\end{align*}
Conversely, let $\lambda: J(\mathbf{k})\rightarrow H(\mathbf{k})$, where 
$\lambda(\langle a,b \rangle) 
= (a+b) H \left( \frac{a}{a+b} \right)$ if $a+b\not=0$, and $\lambda(\langle a,b\rangle)=0$ if $a+b=0$.
So we have 
\begin{align*}
\lambda(\langle a,b \rangle) 
&= (a+b)H\left( \frac{a}{a+b} \right) \\
&= (a+b)H\left( 1- \frac{a}{a+b} \right) \mbox{ by the first equality in Lemma~\ref{lemma_entropy_basic}} \\
&= (a+b)H\left( \frac{\cancel{a} + b - \cancel{a}}{a+b} \right) \\ 
&= (b+a)H\left( \frac{b}{a+b} \right) \\
&= \lambda(\langle b,a \rangle).
\end{align*}
Secondly, $\lambda(\langle ca, cb \rangle)= c\lambda(\langle a,b \rangle)$ holds by scaling in Lemma~\ref{lemma_Cath_properties}. 

Lastly, we have 
\begin{align*}
0 &= \lambda(0) \\ 
&= \lambda(\langle a,b+c\rangle +\langle b,c \rangle - \langle a+b,c \rangle - \langle a,b\rangle) \\ 
&= \lambda(\langle a,b+c\rangle) +\lambda (\langle b,c \rangle) - \lambda (\langle a+b,c \rangle) - \lambda (\langle a,b\rangle) \\ 
&= (a+b+c) H\left( \frac{a}{a+b+c} \right) 
+ (b+c) H\left( \frac{b}{b+c} \right)
- (a+b+c) H\left( \frac{a+b}{a+b+c} \right) 
- (a+b) H\left( \frac{a}{a+b} \right) \\
&= (a+b+c) H\left( \frac{a}{a+b+c} \right) 
+ (b+c) H\left( \frac{b}{b+c} \right)
- (a+b+c) H\left( \frac{a+b}{a+b+c} \right) 
- (a+b) H\left( \frac{b}{a+b} \right)
\end{align*}
since $H\left( \frac{a}{a+b} \right)= H\left( \frac{b}{a+b} \right)$ by the first equality in Lemma~\ref{lemma_entropy_basic}.

Let 
\[
u= \frac{a}{a+b+c} 
\quad 
\mbox{ and }
\quad 
v = \frac{c}{a+b+c}. 
\]
Then 
\begin{align*}
(a+b+c) & H\left( \frac{a}{a+b+c} \right) 
+ (b+c) H\left( \frac{b}{b+c} \right)
- (a+b+c) H\left( \frac{a+b}{a+b+c} \right) 
- (a+b) H\left( \frac{b}{a+b} \right) \\ 
&= (a+b+c) 
\left( H(u) + (1-u)H\left( \frac{1-u-v}{1-u}\right) - H(1-v) - (1-v)H\left( \frac{1-u-v}{1-v}  \right) 
\right) \\ 
&= (a+b+c) 
\left( H(u) + (1-u)H\left( \frac{v}{1-u}\right) - H(1-v) - (1-v)H\left( \frac{u}{1-v}  \right) 
\right)
\end{align*}
by the first equality in Lemma~\ref{lemma_entropy_basic}. 
Now let $p=u$ and $q=1-v$. Then 
\begin{align*}
H(u) &+ (1-u)H\left( \frac{v}{1-u}\right) - H(1-v) - (1-v)H\left( \frac{u}{1-v}  \right) \\
&= H(p) + (1-p)H\left( \frac{1-q}{1-p} \right) - H(q) - q H\left( \frac{p}{q} \right)  \\
&= H(p) + (1-p)H\left( \frac{1-q}{1-p} \right) - H(q) + \cancel{q} \frac{p}{\cancel{q}}  H\left( \frac{q}{p} \right)  \\
&= H(p) + (1-p)H\left( \frac{1-q}{1-p} \right) - H(q) +  p H\left( \frac{q}{p} \right) \\ 
&= 0  
\end{align*}
by the second equality in Lemma~\ref{lemma_entropy_basic} and \eqref{eq_four_term_relation_entropy}.
\end{proof}

In characteristic $p>0$, we have 
\begin{align*}
\rho\left(\sum_{n=1}^p H(n)\right) 
&= \sum_{n=1}^p \rho( H(n)) 
= \sum_{n=1}^p \langle n, 1-n \rangle 
= \sum_{n=1}^p \langle 1-n, -1 \rangle \\ 
&= - \sum_{n=1}^p \langle n-1, 1 \rangle 
= - \sum_{n=1}^p \langle 1, n-1 \rangle 
= - \sum_{n=1}^p \langle 1, n \rangle 
= 0,
\end{align*}
where the last equality holds by \eqref{eqn_char_p_Cathelineau}.

Entropy is related to the pair of symbols $\langle \cdot , \cdot \rangle$ via the following: 
\begin{proposition}
\label{thm_entropy_inner_prod}
Let $p_1,\ldots, p_n$ be a finite probability distribution on a finite set $X=\{x_1,\ldots, x_n \}$.
Then for $n\geq 3$, 
\begin{equation}
\label{eqn_relating_entropy_inner}
H(p_X) = H(p_1,\ldots, p_n) = \sum_{j=2}^n 
\left\langle \sum_{i=1}^{j-1} p_i, p_j \right\rangle. 
\end{equation}
\end{proposition}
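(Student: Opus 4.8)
The plan is to reduce the statement to a telescoping sum built from the closed form of the symbol $\langle\cdot,\cdot\rangle$ in terms of the function $x\mapsto x\log|x|$ that already appears inside the proof of Lemma~\ref{lemma_Cath_properties}. Concretely, from \eqref{eqn_diagrammatic_entropy} together with the expansion of $H$ in \eqref{eqn:two_term_entropy}, one obtains, for any $a,b\in\kk$ with $a+b\neq 0$,
\[
\langle a,b\rangle \;=\; -a\log|a|\;-\;b\log|b|\;+\;(a+b)\log|a+b|,
\]
and this identity uses no normalization on $a$ and $b$ — it is exactly the chain of equalities carried out for $\langle p_1,p_2\rangle$ in the cocycle part of that proof. (If $a+b=0$ both sides vanish, but this case will not arise below.)

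First I would set $S_j := \sum_{i=1}^{j} p_i$ for $1\le j\le n$, so that $S_1=p_1$, $S_n=\sum_{i=1}^n p_i=1$, and, since every $p_i>0$, each $S_j>0$; in particular $S_{j-1}+p_j=S_j\neq 0$, so the displayed formula applies with $a=S_{j-1}$, $b=p_j$. This gives, for each $j=2,\dots,n$,
\[
\left\langle S_{j-1},\,p_j\right\rangle \;=\; -S_{j-1}\log|S_{j-1}|\;-\;p_j\log|p_j|\;+\;S_j\log|S_j|.
\]
Summing over $j=2,\dots,n$, the terms $-S_{j-1}\log|S_{j-1}|+S_j\log|S_j|$ telescope to $S_n\log|S_n|-S_1\log|S_1|=1\cdot\log 1-p_1\log|p_1|=-p_1\log|p_1|$, while the remaining terms contribute $-\sum_{j=2}^n p_j\log|p_j|$. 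Adding these,
\[
\sum_{j=2}^{n}\left\langle \sum_{i=1}^{j-1}p_i,\ p_j\right\rangle \;=\; -\sum_{i=1}^{n} p_i\log|p_i| \;=\; H(p_X)
\]
by \eqref{eq_shannon}, which is exactly \eqref{eqn_relating_entropy_inner}.

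I do not expect a genuine obstacle; the statement is essentially a bookkeeping identity. The two points that need care are (i) noting that the $x\log|x|$ expression for $\langle\cdot,\cdot\rangle$ is valid unconditionally, not merely under a normalization $\sum p_i=1$ — this is visible verbatim in the computation inside the proof of Lemma~\ref{lemma_Cath_properties}; and (ii) handling the endpoints of the telescope, where $S_n=1$ forces the top term $S_n\log|S_n|$ to vanish and leaves precisely the missing $-p_1\log|p_1|$ term. An alternative route would be induction on $n$, splitting off $\langle S_{n-1},p_n\rangle$ and invoking the $2$-cocycle/grouping identity from Lemma~\ref{lemma_Cath_properties}; but the telescoping argument is shorter and does not require re-deriving the cocycle relation.
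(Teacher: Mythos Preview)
Your proof is correct and follows essentially the same approach as the paper's own proof: both expand each symbol $\langle S_{j-1},p_j\rangle$ via the $x\log|x|$ formula and then telescope. The only difference is organizational---you invoke the identity $\langle a,b\rangle=-a\log|a|-b\log|b|+(a+b)\log|a+b|$ by reference to the computation inside Lemma~\ref{lemma_Cath_properties} and write the telescoping compactly with $S_j$, whereas the paper re-expands every term from $(a+b)H(a/(a+b))$ and displays the cancellations in full.
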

Expanding out~\eqref{eqn_relating_entropy_inner}, we have 
\[
H(p_1,\ldots, p_n) = \langle p_1, p_2 \rangle + \langle p_1 + p_2, p_3 \rangle 
+ \langle p_1 + p_2 + p_3 , p_4\rangle 
+ \ldots
+ \langle p_1 + p_2 + \ldots + p_{n-1} , p_n \rangle. 
\]

\begin{proof}
Since  
\begin{align*}
\langle p_1, p_2 \rangle 
&+ \langle p_1 + p_2, p_3 \rangle 
+ \langle p_1 + p_2 + p_3 , p_4\rangle 
+ \ldots 
+ \langle p_1 + p_2 + \ldots + p_{n-1} , p_n \rangle \\
&= (p_1+ p_2) H\left( \frac{p_1}{p_1 + p_2} \right) + 
(p_1+ p_2+p_3) H\left( \frac{p_1+p_2}{p_1 + p_2 + p_3} \right) + \ldots + \\  
&\hspace{4mm}+ 
(p_1+ p_2+p_3+ \ldots + p_n) H\left( \frac{p_1+p_2+ \ldots + p_{n-1}}{p_1 + p_2 + p_3 +\ldots + p_n} \right) \\
&= (p_1 + p_2) \left(-\frac{p_1}{p_1 + p_2} \log \left| \frac{p_1}{p_1 + p_2} \right|
- \left(1 - \frac{p_1}{p_1 + p_2}\right) 
\log \left| 1 - \frac{p_1}{p_1 + p_2}\right| \right) \\
&\hspace{4mm}+ (p_1 + p_2 + 
    p_3) \left(-\frac{p_1 + p_2}{p_1 + p_2 + p_3} \log \left| \frac{p_1 + p_2}{
      p_1 + p_2 + p_3}\right| \right. \\
&\hspace{4mm}\left. - \left(1 - \frac{p_1 + p_2}{p_1 + p_2 + p_3}\right) \log \left|
      1 - \frac{p_1 + p_2}{p_1 + p_2 + p_3}\right| 
      \right)  + \ldots + \\
&\hspace{4mm}+ (p_1 + p_2 + p_3 + \ldots + 
    p_n) \left(-\frac{p_1 + p_2 + \ldots + p_{n-1}}{p_1 + p_2 + p_3 + \ldots + p_n} \log \left| \frac{p_1 + p_2 + \ldots + p_{n-1}}{p_1 + p_2 + p_3 + \ldots +  p_n} \right| \right. \\
&\hspace{4mm}\left. - \left(1 - \frac{p_1 + p_2 + \ldots + p_{n-1}}{p_1 + p_2 + p_3 + \ldots + p_n}\right) \log \left|
      1 - \frac{p_1 + p_2 + \ldots + p_{n-1}}{p_1 + p_2 + p_3 + \ldots + p_n}\right|
      \right) \\ 
&= \cancel{(p_1 + p_2)} \left(-\frac{p_1}{\cancel{p_1 + p_2}} \log \left| \frac{p_1}{p_1 + p_2} \right|
-  \frac{p_2}{\cancel{p_1 + p_2}} 
\log \left| \frac{p_2}{p_1 + p_2}\right| \right) \\
&\hspace{4mm}+ \cancel{(p_1 + p_2 + 
    p_3)} \left(-\frac{p_1 + p_2}{\cancel{p_1 + p_2 + p_3}} \log \left| \frac{p_1 + p_2}{
      p_1 + p_2 + p_3}\right| - \frac{p_3}{\cancel{p_1 + p_2 + p_3}} \log \left|
       \frac{p_3}{p_1 + p_2 + p_3}\right|
      \right)  \\
&\hspace{4mm}+ \ldots + \cancel{(p_1 + p_2 + \ldots + p_n)} \left(-\frac{p_1 + p_2 + \ldots + p_{n-1}}{\cancel{p_1 + p_2 + p_3 +\ldots + p_n}} \log \left|\frac{p_1 + p_2 + \ldots + p_{n-1}}{p_1 + p_2 + p_3 + \ldots + p_n} \right| \right. \\
&\hspace{4mm}\left. - \frac{p_n}{\cancel{p_1 + p_2 + p_3 + \ldots + p_n}} \log \left|
      \frac{p_n}{p_1 + p_2 + p_3 + \ldots + p_n}\right|
      \right) \\ 
&=  -p_1\log \left| \frac{p_1}{p_1 + p_2} \right| - p_2  
\log \left| \frac{p_2}{p_1 + p_2}\right| 
- (p_1 + p_2) 
\log \left| \frac{p_1 + p_2}{
 p_1 + p_2 + p_3}\right|  
- p_3 \log \left|
 \frac{p_3}{p_1 + p_2 + p_3}\right| \\
&\hspace{4mm}   - \ldots 
- (p_1 + p_2 + \ldots + p_{n-1}) \log \left|
      \frac{p_1 + p_2 + \ldots + p_{n-1}}{p_1 + p_2 + p_3 + \ldots + p_n}\right| \\ 
&\hspace{4mm} - p_n \log \left|
      \frac{p_n}{p_1 + p_2 + p_3 + \ldots + p_n}\right| \\ 
&= -p_1 \log |p_1| - p_2 \log | p_2| 
+ \cancel{(p_1 + p_2) \log |p_1 + p_2|}
- \cancel{(p_1 + p_2) \log |p_1 + p_2|} 
- p_3 \log |p_3| \\
&\hspace{4mm}
+ \cancel{(p_1 + p_2 + p_3) \log |p_1 + p_2 + p_3|}
- \cancel{(p_1 + p_2 + p_3) \log |p_1 + p_2 + p_3|} + \ldots + \\ 
&\hspace{4mm} + 
(p_1 +p_2 + \ldots + p_{n-1}) \cancel{\log |p_1 + p_2 + \ldots + p_{n-1}|} \\ 
&\hspace{4mm} - 
(p_1 +p_2 + \ldots + p_{n-1}) \cancel{\log |p_1 + p_2 + \ldots + p_{n-1}|} - p_n \log |p_n| \\ 
&\hspace{4mm} + (\underbrace{p_1 + p_2 + p_3 + \ldots + p_n}_{1}) \log |\underbrace{p_1 + p_2 + p_3 + \ldots + p_n}_{1}| \\
&= -p_1 \log |p_1| - p_2 \log |p_2| - p_3 \log |p_3| -\ldots - p_n \log |p_n| = H(p_X). 
\end{align*}
This concludes the proof.
\end{proof}

\section{Joint entropy}

\subsection{Joint entropy of \texorpdfstring{$2$}{2} random variables}
\label{subsection:joint_entropy}

Let $X$ be a discrete random variable associated with probabilities $p_1,\ldots, p_n$. Recall Shannon entropy in~\eqref{eq_shannon}.
Let $Y$ be a discrete random variable associated with probabilities $q_1,\ldots, q_k$. Then the joint entropy of $X$ and $Y$ is defined to be 
\begin{equation}
\label{eqn:joint_entropy_prob}
H(p_X,p_Y) =  -\sum_{i=1}^{n} \sum_{j=1}^{m} p_{ij} \log p_{ij},
\end{equation}
where $p_{ij}= P(X = p_i, Y = q_j)$, the probability that the random variables $X=p_i$ and $Y=q_j$.
Then similar to ~\eqref{eqn_diagrammatic_entropy}, we have 
\begin{equation}
\label{eqn:joint_entropy_inner_product}
\langle p_{ij}, p_{\ell k} \rangle = 
\begin{cases} 
(p_{ij}+p_{\ell k}) H\left( \dfrac{p_{ij}}{p_{ij}+p_{\ell k}} \right) &\mbox{ if } p_{ij}+p_{\ell k} \not=0, \\
\hspace{1.75cm} 0 &\mbox{ if } p_{ij}+p_{\ell k} = 0.
\end{cases}
\end{equation}

\begin{proposition}
\label{proposition:joint_entropy_inner_product}
Equation~\eqref{eqn:joint_entropy_inner_product} satisfies symmetry, scaling, and the $2$-cocycle relation for the vector space $J(\mathbf{k})$. 
\end{proposition}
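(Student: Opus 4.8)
The plan is to reduce Proposition~\ref{proposition:joint_entropy_inner_product} directly to Lemma~\ref{lemma_Cath_properties}. The defining formula~\eqref{eqn:joint_entropy_inner_product} for $\langle p_{ij}, p_{\ell k}\rangle$ is formally the same as formula~\eqref{eqn_diagrammatic_entropy} for $\langle p_1, p_2\rangle$: the only difference is that the two scalar slots are now filled by the joint probabilities $p_{ij}$ and $p_{\ell k}$ rather than by $p_1$ and $p_2$. Since symmetry, scaling, and the $2$-cocycle relation are purely algebraic identities among the symbols $\langle\,\cdot\,,\,\cdot\,\rangle$ that were verified in Lemma~\ref{lemma_Cath_properties} for arbitrary scalar arguments, I would simply invoke that lemma with $p_1$ replaced by $p_{ij}$ and $p_2$ replaced by $p_{\ell k}$ (and, for the cocycle relation, with a third joint probability $p_{st}$ in the role of $p_3$). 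If a self-contained argument is wanted, one reruns verbatim the three block computations in the proof of Lemma~\ref{lemma_Cath_properties} after this relabeling.

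The one step that needs a small adjustment is the $2$-cocycle computation, because the proof of Lemma~\ref{lemma_Cath_properties} carried it out under the normalization $p_1+p_2+p_3=1$, whereas an arbitrary triple of joint probabilities need not sum to $1$. To bypass this I would first record the closed form
\[
\langle a, b\rangle \;=\; -\,a\log|a| - b\log|b| + (a+b)\log|a+b|, \qquad a,b \in \kk,
\]
which is exactly the intermediate line obtained inside the proof of Lemma~\ref{lemma_Cath_properties} and uses no normalization. From this expression, symmetry is immediate, scaling follows from $\log|ca| = \log|c| + \log|a|$, and both sides of the cocycle relation $\langle a, b+c\rangle + \langle b,c\rangle = \langle a+b,c\rangle + \langle a,b\rangle$ collapse to $-a\log|a| - b\log|b| - c\log|c| + (a+b+c)\log|a+b+c|$; hence the relation holds for \emph{any} $a,b,c\in\kk$, in particular for $p_{ij}, p_{\ell k}, p_{st}$.

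Finally I would check consistency with the piecewise clause of~\eqref{eqn:joint_entropy_inner_product}: when $p_{ij}+p_{\ell k}=0$ the symbol is declared to be $0$, and this matches the closed form above under the convention $0\log 0 = 0$, since the extended entropy $H$ of~\eqref{eqn:two_term_entropy} already satisfies $H(0)=H(1)=0$ and is continuous on $\R$. There is essentially no obstacle here — the content is precisely the algebraic fact of Lemma~\ref{lemma_Cath_properties} transported to a relabeled family of scalars — so the only thing to be careful about is not to silently reuse the normalization $\sum_{i,j} p_{ij}=1$ where it was not actually needed.
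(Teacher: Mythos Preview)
Your proposal is correct and follows exactly the route the paper takes: the paper's entire proof is the one-line remark that the argument ``is similar to the proof of Lemma~\ref{lemma_Cath_properties},'' i.e., the same relabeling reduction you describe. Your observation that the $2$-cocycle computation in Lemma~\ref{lemma_Cath_properties} was carried out under the normalization $p_1+p_2+p_3=1$, together with your normalization-free closed form $\langle a,b\rangle = -a\log|a|-b\log|b|+(a+b)\log|a+b|$, is a genuine improvement over what the paper writes down, since an arbitrary triple of joint probabilities $p_{ij},p_{\ell k},p_{st}$ need not sum to~$1$.
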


The proof of Proposition~\ref{proposition:joint_entropy_inner_product} is similar to the proof of Lemma~\ref{lemma_Cath_properties}.
 
Similar in spirit to Proposition~\ref{thm_entropy_inner_prod} and in order to provide a motivation for Theorem~\ref{thm_relating_entropy_inner_2}, we have the following: 
\begin{proposition}
\label{prop_joint_entropy_two}
Let $p_1, p_2$ and $q_1, q_2$ be finite probability distributions on sets $X=\{ x_1, x_2\}$ and $Y=\{ y_1, y_2\}$, respectively. Let $p_{ij}$ be the probability of obtaining $p_i$ and $q_j$. Then a relation between entropy and Cathelineau's vector space is
\begin{equation}
H(p_X,p_Y) = \langle p_{11},p_{21}\rangle + \langle p_{12},p_{22} \rangle +\langle p_{11}+p_{21},p_{12}+p_{22}\rangle. 
\end{equation}
\end{proposition}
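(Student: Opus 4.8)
The plan is to reduce the identity to the closed-form expression for the pairing that was obtained in the course of proving Lemma~\ref{lemma_Cath_properties}: whenever $p+q\neq 0$ one has
\[
\langle p,q\rangle \;=\; -p\log|p| \;-\; q\log|q| \;+\; (p+q)\log|p+q|,
\]
and $\langle p,q\rangle = 0$ when $p+q=0$ (consistent with reading the right-hand side with the convention $0\log 0 = 0$). First I would expand each of the three summands on the right-hand side of the asserted identity by this formula, applied to the pairs $(p_{11},p_{21})$, $(p_{12},p_{22})$, and $(p_{11}+p_{21},\,p_{12}+p_{22})$.

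Then I would collect terms. The contribution $(p_{11}+p_{21})\log|p_{11}+p_{21}|$ coming from $\langle p_{11},p_{21}\rangle$ cancels against the contribution $-(p_{11}+p_{21})\log|p_{11}+p_{21}|$ coming from $\langle p_{11}+p_{21},\,p_{12}+p_{22}\rangle$, and likewise the two occurrences of $(p_{12}+p_{22})\log|p_{12}+p_{22}|$ cancel. The only surviving grouped term is $(p_{11}+p_{21}+p_{12}+p_{22})\log|p_{11}+p_{21}+p_{12}+p_{22}|$, which vanishes because $\sum_{i,j} p_{ij}=1$ and $\log 1 = 0$. What remains is exactly $-\sum_{i,j} p_{ij}\log|p_{ij}|$, which equals $H(p_X,p_Y)$ by~\eqref{eqn:joint_entropy_prob}.

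A conceptually cleaner alternative, which I would at least mention, is to invoke Proposition~\ref{thm_entropy_inner_prod} directly with $n=4$ for the four-point distribution listed in the order $(p_{11},p_{21},p_{12},p_{22})$; this gives
\[
H(p_X,p_Y) = \langle p_{11},p_{21}\rangle + \langle p_{11}+p_{21},\,p_{12}\rangle + \langle p_{11}+p_{21}+p_{12},\,p_{22}\rangle,
\]
and then to apply the $2$-cocycle relation~\eqref{item:Cath_cocycle} with $a=p_{11}+p_{21}$, $b=p_{12}$, $c=p_{22}$ to rewrite the last two terms as $\langle p_{12},p_{22}\rangle + \langle p_{11}+p_{21},\,p_{12}+p_{22}\rangle$. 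Either route is routine, and I do not expect a genuine obstacle: the statement is just a regrouping of Proposition~\ref{thm_entropy_inner_prod}. The one point requiring a word of care is the degenerate situation in which a partial sum $p_{11}+p_{21}$ or $p_{12}+p_{22}$ vanishes, where one must use the second branch of~\eqref{eqn:joint_entropy_inner_product} together with the convention $0\log 0 = 0$; since the $p_{ij}$ are probabilities this forces the corresponding $p_{ij}$ themselves to vanish, so all the cancelling terms are genuinely $0$ and the identity persists. I would present the direct expansion as the main proof, mirroring the bookkeeping already carried out in Lemma~\ref{lemma_Cath_properties} and Proposition~\ref{thm_entropy_inner_prod}.
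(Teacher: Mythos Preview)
Your proposal is correct, and your main argument---expanding each bracket via the closed form $\langle p,q\rangle = -p\log|p|-q\log|q|+(p+q)\log|p+q|$ and cancelling the cross terms---is exactly the computation the paper carries out, except that the paper writes out the intermediate step $(p_{ij}+p_{\ell k})H\bigl(p_{ij}/(p_{ij}+p_{\ell k})\bigr)$ before expanding $H$ to reach that closed form. Your alternative route via Proposition~\ref{thm_entropy_inner_prod} with $n=4$ together with the $2$-cocycle relation~\eqref{item:Cath_cocycle} is a cleaner conceptual shortcut that the paper does not give.
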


\begin{proof}
Using \eqref{eqn:joint_entropy_inner_product}, we have: 
\begin{align*}
\langle p_{11}, p_{21} \rangle &= (p_{11}+p_{21}) H\left(\frac{p_{11}}{p_{11}+p_{21}} \right), \\
\langle p_{12}, p_{22} \rangle &= (p_{12} + p_{22}) H\left(\frac{p_{12}}{p_{12}+p_{22}} \right), \\
\langle p_{11}+p_{21}, p_{12}+p_{22} \rangle &= (p_{11}+p_{21}+p_{12}+p_{22}) H\left(\frac{p_{11}+p_{21}}{p_{11}+p_{21}+p_{12}+p_{22}} \right). \\
\end{align*}
So 
\begin{align*}
\langle p_{11}&,p_{21}\rangle + \langle p_{12},p_{22} \rangle +\langle p_{11}+p_{21},p_{12}+p_{22}\rangle
= (p_{11}+p_{21}) H\left(\frac{p_{11}}{p_{11}+p_{21}} \right) \\
&\hspace{4mm}+ (p_{12} + p_{22}) H\left(\frac{p_{12}}{p_{12}+p_{22}} \right) + (p_{11}+p_{21}+p_{12}+p_{22}) H\left(\frac{p_{11}+p_{21}}{p_{11}+p_{21}+p_{12}+p_{22}} \right) \\
&= (p_{11} + p_{21}) \left(-\frac{p_{11}}{p_{11} + p_{21}} \log \left( \frac{p_{11}}{p_{11} + p_{21}} \right) 
      - \left(1 - \frac{p_{11}}{p_{11} + p_{21}}\right) \log \left( 1 - \frac{p_{11}}{p_{11} + p_{21}}\right) \right) \\
&\hspace{4mm}+
 (p_{12} + p_{22}) \left(-\frac{p_{12}}{p_{12} + p_{22}} \log \left( \frac{p_{12}}{p_{12} + p_{22}} \right) 
      - \left(1 - \frac{p_{12}}{p_{12} + p_{22}}\right) 
      \log \left( 1 - \frac{p_{12}}{p_{12} + p_{22}}\right) \right) \\
&\hspace{4mm}+
 (p_{11} + p_{21} + p_{12} + p_{22}) 
 \left( -\frac{p_{11} + p_{21}}{p_{11} + p_{21} + p_{12} + p_{22}} 
 \log\left( \frac{p_{11} + p_{21}}{p_{11} + p_{21} + p_{12} + p_{22}} \right) \right. \\
&\hspace{4mm} \left. - \left(1 - 
\frac{p_{11} + p_{21}}{p_{11} + p_{21} + p_{12} + p_{22}}\right) 
       \log \left( 
      1 - \frac{p_{11} + p_{21}}{p_{11} + p_{21} + p_{12} + p_{22}}\right) \right) \\ 
&= \cancel{(p_{11} + p_{21})} 
\left(-\frac{p_{11}}{\cancel{p_{11} + p_{21}}} \log \left( \frac{p_{11}}{p_{11} + p_{21}} \right) 
      - \frac{p_{21}}{\cancel{p_{11} + p_{21}}} \log \left(\frac{p_{21}}{p_{11} + p_{21}}\right) \right) \\
&\hspace{4mm}+
 \cancel{(p_{12} + p_{22})} \left(-\frac{p_{12}}{\cancel{p_{12} + p_{22}}} \log \left( \frac{p_{12}}{p_{12} + p_{22}} \right) 
      - \frac{p_{22}}{\cancel{p_{12} + p_{22}}} 
      \log \left(\frac{p_{22}}{p_{12} + p_{22}}\right) \right) \\
&\hspace{4mm}+
 \cancel{(p_{11} + p_{21} + p_{12} + p_{22})} 
 \left( -\frac{p_{11} + p_{21}}{
 \cancel{p_{11} + p_{21} + p_{12} + p_{22}}} 
 \log\left( \frac{p_{11} + p_{21}}{p_{11} + p_{21} + p_{12} + p_{22}} \right) \right. \\
&\hspace{4mm} \left. - 
\frac{p_{12} + p_{22}}{
\cancel{p_{11} + p_{21} + p_{12} + p_{22}}}  
       \log \left( 
     \frac{p_{12} + p_{22}}{p_{11} + p_{21} + p_{12} + p_{22}}\right) \right) \\ 
&=  
- p_{11} \log \left( \frac{p_{11}}{p_{11} + p_{21}} \right) 
      - p_{21} \log \left(\frac{p_{21}}{p_{11} + p_{21}}\right) - p_{12} \log \left( \frac{p_{12}}{p_{12} + p_{22}} \right) - p_{22} 
      \log \left(\frac{p_{22}}{p_{12} + p_{22}}\right) \\
&\hspace{4mm} 
  - (p_{11} + p_{21}) 
 \log\left( \frac{p_{11} + p_{21}}{p_{11} + p_{21} + p_{12} + p_{22}} \right) - (p_{12} + p_{22}) 
\log \left( 
     \frac{p_{12} + p_{22}}{p_{11} + p_{21} + p_{12} + p_{22}}\right) \\ 
&= - p_{11} \log ( p_{11} ) 
      - p_{21} \log (p_{21}) 
      + \cancel{(p_{11} + p_{21}) \log (p_{11} + p_{21})} \\ 
&\hspace{4mm}- p_{12} \log ( p_{12} ) - p_{22} 
      \log (p_{22}) 
      + \cancel{(p_{12} + p_{22}) \log (p_{12} + p_{22})} \\
&\hspace{4mm} 
- \cancel{(p_{11} + p_{21})\log\left( p_{11} + p_{21} \right)} - 
\cancel{(p_{12} + p_{22}) \log (p_{12} + p_{22})} \\
&\hspace{4mm}
+ (\underbrace{p_{11} + p_{21} + p_{12} + p_{22}}_{1}) \log (\underbrace{p_{11} + p_{21} + p_{12} + p_{22}}_{1}) \\
&=- p_{11} \log ( p_{11} ) 
      - p_{21} \log (p_{21}) - p_{12} \log ( p_{12} ) - p_{22} 
      \log (p_{22}) = H(p_X,p_Y).
\end{align*}

\end{proof}

More generally, we have the following: 
\begin{theorem}
\label{thm_relating_entropy_inner_2}
Let $p_1,\ldots, p_n$ and $q_1,\ldots, q_m$ be finite probability distributions on finite sets $X=\{x_1,\ldots, x_n \}$ and $Y=\{y_1, \ldots, y_m \}$, respectively. 
Let $p_{ij}=P(p_i,q_j)$, the probability of obtaining $p_i$ and $q_j$. 
Then 
\begin{equation}
H(p_X, p_Y) = \sum_{i=1}^n \sum_{k=1}^{m-1} \langle  \sum_{j=1}^k p_{ij}, p_{i,j+1} \rangle + \sum_{k=1}^{n-1} \langle \sum_{i=1}^{k} \sum_{j=1}^{m} p_{ij}, \sum_{\ell = 1}^m p_{i\ell} \rangle. 
\end{equation}
\end{theorem}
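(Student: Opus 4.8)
The plan is to reduce the identity to Proposition~\ref{thm_entropy_inner_prod} by means of the chain rule for Shannon entropy. Set $p_i := \sum_{j=1}^m p_{ij}$; since $p_{ij} = P(X = p_i,\, Y = q_j)$, this is the marginal probability $P(X = p_i)$, so $(p_1,\dots,p_n)$ is a probability distribution on $X$. Substituting $\log p_{ij} = \log p_i + \log(p_{ij}/p_i)$ into the definition~\eqref{eqn:joint_entropy_prob} and splitting the double sum gives the chain rule
\[
H(p_X,p_Y) \;=\; H(p_1,\dots,p_n) \;+\; \sum_{i=1}^n p_i\, H\!\left(\frac{p_{i1}}{p_i},\dots,\frac{p_{im}}{p_i}\right),
\]
that is, $H(X,Y) = H(X) + H(Y\mid X)$, where the conditional entropy term is a sum of scaled entropies of the conditional distributions on $Y$.

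Next I would expand both pieces using Proposition~\ref{thm_entropy_inner_prod}. Applied to the marginal $(p_1,\dots,p_n)$ and reindexed, it gives $H(p_1,\dots,p_n) = \sum_{k=1}^{n-1} \langle \sum_{i=1}^k p_i,\ p_{k+1}\rangle$; rewriting $p_i = \sum_{j=1}^m p_{ij}$ and $p_{k+1} = \sum_{\ell=1}^m p_{k+1,\ell}$ produces exactly the second sum in the statement. For each $i$, Proposition~\ref{thm_entropy_inner_prod} applied to the distribution $(p_{i1}/p_i,\dots,p_{im}/p_i)$ on $Y$ gives $H(p_{i1}/p_i,\dots,p_{im}/p_i) = \sum_{k=1}^{m-1} \langle \sum_{j=1}^k p_{ij}/p_i,\ p_{i,k+1}/p_i\rangle$; multiplying by $p_i$ and invoking scaling (the scaling relation of Lemma~\ref{lemma_Cath_properties}, equivalently Proposition~\ref{proposition:joint_entropy_inner_product} applied to~\eqref{eqn:joint_entropy_inner_product}) turns each $p_i\langle a,b\rangle$ into $\langle p_i a,\ p_i b\rangle$, so that $p_i H(p_{i1}/p_i,\dots,p_{im}/p_i) = \sum_{k=1}^{m-1} \langle \sum_{j=1}^k p_{ij},\ p_{i,k+1}\rangle$. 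Summing over $i$ yields the first sum, and adding the two pieces completes the proof.

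A few bookkeeping points deserve care. Proposition~\ref{thm_entropy_inner_prod} is stated for at least three outcomes, so when $n=2$ or $m=2$ one instead uses the two-outcome identity $H(q_1,q_2) = \langle q_1,q_2\rangle$ coming from~\eqref{eqn_diagrammatic_entropy} with $q_1+q_2=1$; the resulting formulas agree, and the case $n=m=2$ recovers Proposition~\ref{prop_joint_entropy_two}. Degenerate entries $p_{ij}=0$ are harmless, since then the corresponding symbol vanishes by $\langle a,0\rangle = 0$ and the convention $0\log 0 = 0$, and a vanishing marginal $p_i=0$ forces its whole row to vanish so the term $p_i H(\cdots)$ is read as $0$.

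I expect the only real obstacle to be organizational rather than mathematical: if one bypasses the chain rule and instead imitates the direct telescoping computations of Propositions~\ref{thm_entropy_inner_prod} and~\ref{prop_joint_entropy_two}, listing the $nm$ joint probabilities $p_{11},\dots,p_{1m},p_{21},\dots,p_{nm}$ lexicographically, then the cancellations at the row boundaries must be matched up carefully to separate the within-row symbols from the between-row symbols. Routing the argument through $H(X,Y) = H(X) + H(Y\mid X)$ isolates those two families of terms from the outset, so that is the approach I would take.
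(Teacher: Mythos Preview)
Your argument is correct, and it takes a genuinely different route from the paper's. The paper does not spell out a proof; it says the result follows by a direct telescoping computation in the style of Propositions~\ref{thm_entropy_inner_prod} and~\ref{prop_joint_entropy_two}, or alternatively via the diagrammatic calculus of Section~\ref{section:Shannon_entropy}, where one simply reads off the additive-vertex contributions from the tree that merges the $p_{ij}$ first within each row and then across rows. You instead invoke the chain rule $H(X,Y)=H(X)+H(Y\mid X)$, apply Proposition~\ref{thm_entropy_inner_prod} to the marginal $(p_1,\dots,p_n)$ and to each conditional distribution $(p_{i1}/p_i,\dots,p_{im}/p_i)$, and then use scaling to clear the $p_i$'s. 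This is cleaner: it reduces the theorem to already-proved results and makes the separation of the two sums (``within-row'' versus ``between-row'') transparent, which in fact anticipates the paper's later discussion of conditional entropy in Figure~\ref{fig7_001}. The paper's direct or diagrammatic computation, on the other hand, is more self-contained and showcases the graphical formalism that is the paper's main theme. Your handling of the small cases $n=2$ or $m=2$ and of vanishing entries is appropriate, and your tacit correction of the indexing in the statement (reading $p_{i,k+1}$ and $\sum_{\ell}p_{k+1,\ell}$ for the dangling $j$ and $i$) is the intended meaning.
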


The proof of Theorem~\ref{thm_relating_entropy_inner_2} is a direct calculation, similar to the proof of Proposition~\ref{thm_entropy_inner_prod} and Proposition~\ref{prop_joint_entropy_two}. We give a diagrammatical calculus perspective of entropy in Section~\ref{section:Shannon_entropy}, which originated in~\cite{IK24_dilogarithms_entropy}. Using this, one can easily prove Theorem~\ref{thm_relating_entropy_inner_2}.

\subsection{Entropy for \texorpdfstring{$u$}{u} random variables}

One may extend \eqref{eqn:joint_entropy_prob} as follows. Consider the finite set $X_i = \{x_{i_1},\ldots, x_{i_{k_i}} \}$, where $1\leq i\leq u$. Let $p_{X_i}$ be a probability distribution on $X_i$ that associates probabilities $p_{i_1},\ldots, p_{i_{k_i}}$ to the points $x_{i_1},\ldots, x_{i_{k_i}}$, respectively, where the sum $\displaystyle{\sum_{j=1}^{k_i}} p_{i_j}=1$, $0< p_{i_j}<1$, $1\leq i\leq u$.

Write 
\[ I_{J} := 1_{j_1}2_{j_2}\cdots u_{j_u}, 
\hspace{4mm} 
J := (j_1,j_2,\ldots, j_u), 
\hspace{4mm} 
\mbox{ and }
\hspace{4mm} 
K := (k_1,\ldots, k_u).
\] 
Let the joint probability for random variables $\mathfrak{X}_1, \ldots, \mathfrak{X}_u$ be 
\[ 
p_{I_{J}} 
= p_{1_{j_1}2_{j_2}\cdots u_{j_u}} 
= 
P(\mathfrak{X}_1 = p_{1_{j_1}}, \mathfrak{X}_2 = p_{2_{j_2}}, \ldots, \mathfrak{X}_u = p_{u_{j_u}}).
\]
We obtain the joint entropy
\begin{equation}
\label{eqn:joint_entropy_generalized}
H(p_{X_1},\ldots, p_{X_u}) 
= 
-\sum_{j_1=1}^{k_1}\sum_{j_2=1}^{k_2} \cdots \sum_{j_u=1}^{k_u} 
p_{1_{j_1}2_{j_2}\cdots u_{j_u}} 
\log (p_{1_{j_1}2_{j_2}\cdots u_{j_u}}),
\end{equation}
or more compactly, we write 
\begin{equation}
\label{eqn_joint_entropy_u_cmpt}
H(p_{X_1},\ldots, p_{X_u}) = 
- \sum_{J = \mathbf{1}}^{K} 
p_{I_{J}} \log(p_{I_{J}}).
\end{equation}

\begin{proposition}
\label{proposition:joint_entropy_general_inner_product}
Under the conditions as above, we have 
\begin{equation}
\label{eqn:joint_entropy_u_rv}
\langle p_{I_{J}}, p_{I_{J'}} \rangle = 
\left( p_{I_{J}} + p_{I_{J'}} \right) 
H\left( \frac{p_{I_{J}}}{p_{I_{J}} + p_{I_{J'}}} \right), 
\end{equation}
which satisfies symmetry, scaling, and the $2$-cocycle relation for $J(\kk)$ in Section~\ref{subsection:Cathelineau_k_vs}.
\end{proposition}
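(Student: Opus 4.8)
The plan is to observe that \eqref{eqn:joint_entropy_u_rv} is, as a function of two field elements, literally the same expression as \eqref{eqn_diagrammatic_entropy} and \eqref{eqn:joint_entropy_inner_product}, only with the two arguments renamed. Setting $p_1 = p_{I_J}$ and $p_2 = p_{I_{J'}}$ — both of which are perfectly good elements of $\kk = \R$ — the right-hand side of \eqref{eqn:joint_entropy_u_rv} coincides with the right-hand side of \eqref{eqn_diagrammatic_entropy}, with the same convention that the value is $0$ when $p_{I_J} + p_{I_{J'}} = 0$ (which in fact never occurs here, since $0 < p_{i_j} < 1$ forces every joint probability to be positive). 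Since the three relations defining $J(\kk)$ in Section~\ref{subsection:Cathelineau_k_vs} — symmetry, scaling, and the $2$-cocycle relation — are identities in the field variables alone and are insensitive to whether those variables arise as single, joint, or marginal probabilities, the verification is exactly the computation already performed in Lemma~\ref{lemma_Cath_properties}.

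Concretely, I would proceed in three steps. First, recall that Lemma~\ref{lemma_Cath_properties} establishes symmetry and scaling for the assignment $\langle a, b\rangle = (a+b)H\bigl(\tfrac{a}{a+b}\bigr)$ for all $a, b \in \kk$; these are pointwise statements in $a$ and $b$, so substituting $a = p_{I_J}$, $b = p_{I_{J'}}$ yields symmetry and scaling for \eqref{eqn:joint_entropy_u_rv} with no further work. Second, for the $2$-cocycle relation, note that the proof in Lemma~\ref{lemma_Cath_properties} only uses that the three arguments are field elements together with the basic identity $H(a) = H(1-a)$ from Lemma~\ref{lemma_entropy_basic}; replacing the triple by $p_{I_J}, p_{I_{J'}}, p_{I_{J''}}$ reproduces the same chain of cancellations of $\log|\cdot|$ terms. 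Third, conclude that \eqref{eqn:joint_entropy_u_rv} therefore descends to a well-defined pairing satisfying all relations of $J(\kk)$.

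I do not expect any genuine obstacle here beyond bookkeeping; the one point worth a moment's care is that the $2$-cocycle computation in Lemma~\ref{lemma_Cath_properties} was written under the normalization $p_1 + p_2 + p_3 = 1$, whereas the $2$-cocycle relation in Section~\ref{subsection:Cathelineau_k_vs} is stated for arbitrary $a, b, c \in \kk$. To be safe I would either redo that telescoping without the normalization — expanding, for instance, $(a+b+c)H\bigl(\tfrac{a+b}{a+b+c}\bigr) = -(a+b)\log|a+b| - c\log|c| + (a+b+c)\log|a+b+c|$ and likewise for the remaining terms, after which all $\log|\cdot|$ contributions cancel in pairs — or simply invoke the homogeneity supplied by scaling to reduce the general case to the normalized one. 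Either way the proposition follows, so I would keep the write-up brief, pointing the reader to Lemma~\ref{lemma_Cath_properties} and indicating only the substitution $p_1 \mapsto p_{I_J}$, $p_2 \mapsto p_{I_{J'}}$.
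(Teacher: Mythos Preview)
Your proposal is correct and matches the paper's own treatment exactly: the paper does not give a separate proof but simply states that the argument is similar to Lemma~\ref{lemma_Cath_properties}, which is precisely your reduction via the substitution $p_1 \mapsto p_{I_J}$, $p_2 \mapsto p_{I_{J'}}$. Your observation that the $2$-cocycle computation in Lemma~\ref{lemma_Cath_properties} was carried out under the normalization $p_1+p_2+p_3=1$ (and your two suggested fixes) is a genuine point of care that the paper itself glosses over.
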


The proof of Proposition~\ref{proposition:joint_entropy_general_inner_product} is similar to the proof of Lemma~\ref{lemma_Cath_properties}.

\subsection{Rescaling a finite probability distribution by a scalar}
\label{section:scaling_scalar}
Rescaling the pair of symbols $\langle c p_1, cp_2\rangle = c \langle p_1, p_2 \rangle$ corresponds to rescaling entropy. 
In our diagrammatic story, it will correspond to a red wavy line, which is to the left of our graphical network.

\section{Diagrammatics of Shannon entropy}
\label{section:Shannon_entropy}
Let $\kk=\mathbb{R}$ and let $\kk^*$ be the set of units in $\kk$. In this section, we discuss a new perspective of entropy using diagrammatics, as introduced in~\cite{IK24_dilogarithms_entropy}.

We decorate the boundary of a network of lines from values in $\kk$. 
Figure~\ref{fig_1001} gives us the properties of black and red (network) lines. Black lines are additive and take values in $\kk$ while red wavy lines are multiplicative and take values in $\kk^*$. Whenever two additive (black) lines merge (at the intersections of additive lines), we evaluate at the additive vertex using the two symbols $\langle a,b \rangle$, see Figure~\ref{fig_1001} top left. Whenever two additive lines split, we evaluate at the additive vertex the pair of symbols $\langle a,b\rangle$, but with a negative sign:
$-\langle a,b\rangle$, see Figure~\ref{fig_1001} top middle. We can also have a virtual crossing, where two lines cross but their intersection is said to be virtual, see Figure~\ref{fig_1001} top right. In the second row of Figure~\ref{fig_1001}, we see that red wavy (multiplicative) lines have a conormal direction. In particular, whenever an additive line passes through a multiplicative line of weight $c\in \mathbf{k}^*$, we rescale the additive line from $a\mapsto ca$, see Figure~\ref{fig_1001} bottom left. Similar to the additive lines, whenever multiplicative lines merge, we multiply the two weights, Figure~\ref{fig_1001} bottom middle. Finally, we introduce a red dot to represent the switching of normal co-orientations, see Figure~\ref{fig_1001} bottom right.

\begin{figure}
\begin{center}
\begin{tikzpicture}[scale=0.6,decoration={
    markings,
    mark=at position 0.70 with {\arrow{>}}}]
\begin{scope}[shift={(0,0)}]
%\draw[thin,yellow] (0,0) grid (4,4);

\node at (5.75,2.4) {additive};
\node at (5.75,1.65) {lines};
\node at (5.75,0.5) {$\langle a,b\rangle$};

\node at (2,3.5) {$a+b$};
\draw[thick,dashed] (0,3) -- (4,3);
\draw[thick,dashed] (0,0) -- (4,0);

\node at (0.5,-0.5) {$a$};
\node at (3.5,-0.5) {$b$};

\draw[thick,postaction={decorate}] (0.5,0) .. controls (0.75,0.75) and (1.3,1.25) .. (2,1.25);

\draw[thick,postaction={decorate}] (3.5,0) .. controls (3.25,0.75) and (2.7,1.25) .. (2,1.25);

\draw[thick,<-] (2,2) -- (2,1.25);
\draw[thick] (2,3) -- (2,2);
\end{scope}

\begin{scope}[shift={(9,0)}]
%\draw[thin,yellow] (0,0) grid (4,4);
\draw[thick,dashed] (0,3) -- (4,3);
\draw[thick,dashed] (0,0) -- (4,0);

\node at (5.75,1.5) {$-\langle a,b\rangle$};

\node at (0.5,3.5) {$a$};
\node at (3.5,3.5) {$b$};

\draw[thick,postaction={decorate}] (2,1.75) .. controls (1.6,1.75) and (0.75,2.25) .. (0.5,3);

\draw[thick,postaction={decorate}] (2,1.75) .. controls (2.4,1.75) and (3.25,2.25) .. (3.5,3);

\draw[thick] (2,1) -- (2,1.75);
\draw[thick,->] (2,0) -- (2,1);

\node at (2,-0.5) {$a+b$};
\end{scope}

\begin{scope}[shift={(18,0)}]
%\draw[thin,yellow] (0,0) grid (4,4);

\draw[thick,dashed] (0,3) -- (4,3);
\draw[thick,dashed] (0,0) -- (4,0);

\node at (5.75,1.9) {virtual};
\node at (5.75,1.1) {crossing};

\draw[thick,postaction={decorate}] (0.75,0) -- (3.25,3);

\draw[thick,postaction={decorate}] (3.25,0) -- (0.75,3);

\node at (0.6,3.5) {$b$};
\node at (3.4,3.5) {$a$};

\node at (0.6,-0.5) {$a$};
\node at (3.4,-0.5) {$b$};
\end{scope}

\begin{scope}[shift={(0,-5.5)}]
%\draw[thin,yellow] (0,0) grid (4,4);

\node at (0.85,3.4) {$ca$};
\draw[thick,dashed] (0,3) -- (4,3);
\draw[thick,dashed] (0,0) -- (4,0);
\node at (3.15,-0.4) {$a$};

\draw[thick,->] (3,0) -- (1.67,2);
\draw[thick] (1.67,2) -- (1,3);

\draw[line width=0.50mm,red] (1,0) decorate [decoration={snake,amplitude=0.15mm}] {-- (3,3)};

\draw[line width=0.50mm,red] (0.9,0.75) -- (1.5,0.75);

\draw[line width=0.50mm,red] (2.1,2.5) -- (2.65,2.5);

\node at (0.6,0.40) {$c$};

\node at (3,2.3) {$c$};

\end{scope}

\begin{scope}[shift={(9,-5.5)}]
%\draw[thin,yellow] (0,0) grid (4,4);

\node at (6,2.5) {merging of};
\node at (6,1.75) {multiplicative};
\node at (6,1.00) {lines};

\draw[thick,dashed] (0,3) -- (4,3);
\draw[thick,dashed] (0,0) -- (4,0);

\node at (2.75,2.2) {$c_1c_2$};

\draw[line width=0.50mm,red] (0.75,0) decorate [decoration={snake,amplitude=0.15mm}] {.. controls (0.85,0.75) and (1,0.85) .. (2,1.25)};

\draw[line width=0.50mm,red] (3.25,0) decorate [decoration={snake,amplitude=0.15mm}] {.. controls (3.15,0.75) and (3,0.85) .. (2,1.25)};
 
\draw[line width=0.50mm,red] (2,1.25) decorate [decoration={snake,amplitude=0.15mm}] {-- (2,3)};
 
\draw[line width=0.50mm,red] (0.30,0.45) -- (0.80,0.45);

\draw[line width=0.50mm,red] (2.6,0.45) -- (3.2,0.45);

\draw[line width=0.50mm,red] (1.5,2) -- (2,2);

\node at (0.5,0.85) {$c_1$};

\node at (3.5,0.85) {$c_2$};
\end{scope}

\begin{scope}[shift={(18,-5.5)}]
%\draw[thin,yellow] (0,0) grid (4,4);

\draw[thick,dashed] (0,3) -- (4,3);
\draw[thick,dashed] (0,0) -- (4,0);

\draw[line width=0.50mm,red] (2,0) decorate [decoration={snake,amplitude=0.15mm}] {-- (2,3)};

\draw[thick,fill,red] (2.15,1.5) arc (0:360:1.5mm);

\draw[line width=0.50mm,red] (2,2.45) -- (2.5,2.45);

\draw[line width=0.50mm,red] (1.5,0.55) -- (2,0.55);

\node at (1.1,0.5) {$c$};

\node at (3.15,2.40) {$c^{-1}$};

\node at (6,2.00) {swapping the};
\node at (6,1.25) {co-orientation};
\end{scope}

\end{tikzpicture}
\end{center}
    \caption{Upper left: Whenever two black additive lines merge, we evaluate $\langle a,b\rangle$ at the additive vertex. Upper middle: whenever two additive lines split, we evaluate $-\langle a,b\rangle$. Upper right: we are allowed to have virtual crossing whenever two additive lines cross but the intersection of these two lines is virtual, so there is no corresponding evaluation for these two additive lines. Bottom left: whenever a red line is to the left of a black line, then we rescale the value $a$ of the additive line by $c\in \mathbf{k}^*$ of the multiplicative line.  Bottom middle: multiplicative lines can merge, resulting in multiplication of their weights. Bottom right: we may have a red vertex on the multiplicative network, resulting in the swapping of co-orientations of the multiplicative line.}
    \label{fig_1001}
\end{figure}
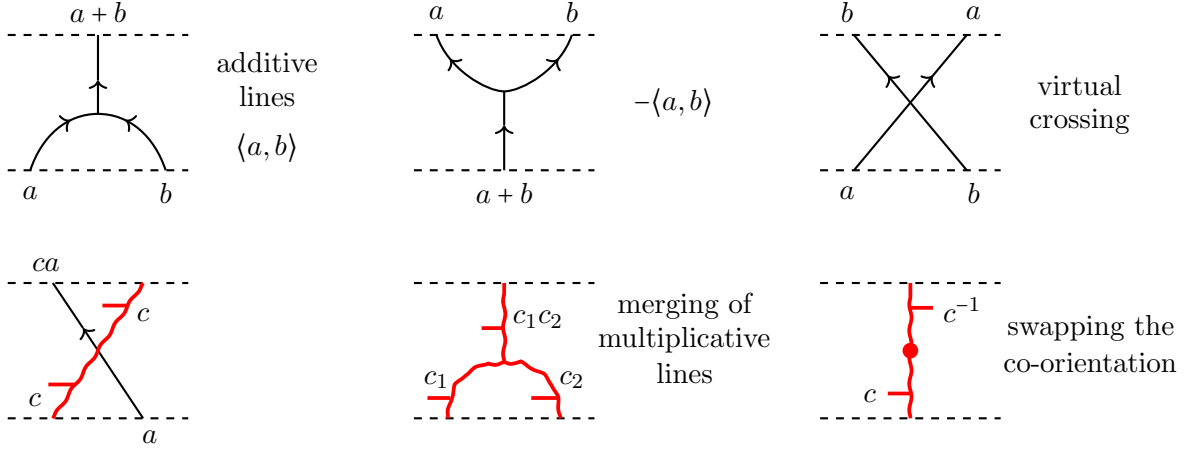

\begin{figure}
\begin{center}
\begin{tikzpicture}[scale=0.6,decoration={
    markings,
    mark=at position 0.60 with {\arrow{>}}}]

%%%%%%%%%%%%%%%%%%%%%%%%%%%%%%%%%%

\begin{scope}[shift={(0,0)}]
%\draw[thin,yellow] (0,0) grid (4,4);

\draw[thick,dashed] (0,4) -- (4,4);

\draw[thick,dashed] (0,0) -- (4,0);

\node at (0.5,4.5) {$a$};

\node at (3.5,4.5) {$b$};

\node at (2,-0.5) {$a+b$};

\draw[thick,postaction={decorate}] (0.5,4) .. controls (0.7,2.75) and (1.3,2.5) .. (2,2);

\draw[thick,postaction={decorate}] (3.5,4) .. controls (3.3,2.75) and (2.7,2.5) .. (2,2);

\draw[thick,postaction={decorate}] (2,2) -- (2,0);
 
\node at (6,2) {$=$};

\end{scope}

%%%%%%%%%%%%%%%%%%%%%%%%%%%%%%%%%%

\begin{scope}[shift={(8,0)}]
%\draw[thin,yellow] (0,0) grid (4,4);    

\draw[thick,dashed] (0,4) -- (4,4);
\draw[thick,dashed] (0,0) -- (4,0);

\node at (0.5,4.5) {$a$};

\node at (3.5,4.5) {$b$};

\node at (2,-0.5) {$a+b$};

\draw[thick,postaction={decorate}] (2,2) .. controls (2.25,3) and (3.25,3) .. (3.5,2);

\draw[thick,postaction={decorate}] (0.75,2) .. controls (1,1.25) and (1.75,1.25) .. (2,2);

\draw[thick,postaction={decorate}] (2,0.75) .. controls (2.75,1) and (2.75,1.5) .. (2,2);

\draw[thick,postaction={decorate}] (3.5,4) .. controls (4,3.5) and (0.5,3) .. (0.75,2);

\draw[thick,postaction={decorate}] (3.5,2) .. controls (3.75,0.75) and (2.25,0.25) .. (2,0);

\draw[thick,postaction={decorate}] (0.5,4) .. controls (-0.25,3) and (0.25,0.25) .. (2,0.75);

\node at (6,2) {$=$};
\end{scope}

%%%%%%%%%%%%%%%%%%%%%%%%%%%%%%%%%%

\begin{scope}[shift={(16,0)}]
%\draw[thin,yellow] (0,0) grid (4,4);

\draw[thick,dashed] (0,4) -- (4,4);
\draw[thick,dashed] (0,0) -- (4,0);

\node at (0.5,4.5) {$a$};
\node at (3.5,4.5) {$b$};

\node at (2,-0.5) {$a+b$};

\draw[thick,postaction={decorate}] (2,2) .. controls (1.75,3) and (0.75,3) .. (0.5,2);

\draw[thick,postaction={decorate}] (3.25,2) .. controls (3,1.25) and (2.25,1.25) .. (2,2);

\draw[thick,postaction={decorate}] (2,0.75) .. controls (1.25,1) and (1.25,1.5) .. (2,2);

\draw[thick,postaction={decorate}] (0.5,4) .. controls (0,3.5) and (3.5,3) .. (3.25,2);

\draw[thick,postaction={decorate}] (0.5,2) .. controls (0.25,0.75) and (1.75,0.25) .. (2,0);

\draw[thick,postaction={decorate}] (3.5,4) .. controls (4.25,3) and (3.75,0.25) .. (2,0.75);

\node at (6,2) {$\langle a,b\rangle$};
\end{scope}

%%%%%%%%%%%%%%%%%%%%%%%%%%%%%%%%%%

\begin{comment}
\begin{scope}[shift={(18,0)}]
\draw[thin,yellow] (0,0) grid (4,4);

\node at (2,4.5) {$a+b$};
\draw[thick,dashed] (0,4) -- (4,4);
\draw[thick,dashed] (0,0) -- (4,0);
\node at (0.5,-0.5) {$a$};
\node at (3.5,-0.5) {$b$};

\draw[thick,->] (0.5,0) .. controls (0.7,0.75) and (1.3,1) .. (1.5,1);
\draw[thick] (1.4,1) .. controls (1.7,1.1) and (1.8,1.1) .. (2,1.25);

\draw[thick,<-] (2.5,1) .. controls (2.7,1) and (3.3,0.75) .. (3.5,0);
\draw[thick] (2,1.25) .. controls (2.2,1.1) and (2.3,1.1) .. (2.6,1);

\draw[thick,<-] (2,2) -- (2,1.25);
\draw[thick] (2,3) -- (2,2);

\end{scope}
\end{comment}

%%%%%%%%%%%%%%%%%%%%%%%%%%%%%%%%%%

\begin{comment}
\begin{scope}[shift={(18,0)}]
%\draw[thin,yellow] (0,0) grid (4,4);

\node at (0.85,3.4) {$a$};
\draw[thick,dashed] (0,3) -- (4,3);
\draw[thick,dashed] (0,0) -- (4,0);
\node at (3.15,-0.4) {$ca$};

\draw[thick] (3,0) -- (1.67,2);
\draw[thick,<-] (1.67,2) -- (1,3);

\draw[line width=0.50mm,red] (0.5,0) decorate [decoration={snake,amplitude=0.15mm}] {-- (3.5,3)};
\draw[line width=0.50mm,red] (1,1) -- (1.5,1);
\draw[line width=0.50mm,red] (2.5,2.5) -- (3,2.5);

\node at (0.60,0.60) {$c$};
\node at (3.25,2.27) {$c$};

\node at (2,-1.5) {$c\otimes a\mapsto ca\otimes c$};
\end{scope}
\end{comment}

%%%%%%%%%%%%%%%%%%%%%%%%%%%%%%%%%%

\begin{scope}[shift={(0,-6.5)}]
%\draw[thin,yellow] (0,0) grid (4,4);

\node at (2,4.5) {$a+b$};
\draw[thick,dashed] (0,4) -- (4,4);
\draw[thick,dashed] (0,0) -- (4,0);
\node at (0.5,-0.5) {$a$};
\node at (3.5,-0.5) {$b$};

\draw[thick,postaction={decorate}] (2,2) .. controls (1.3,1.5) and (0.7,1.25) .. (0.5,0);

\draw[thick,postaction={decorate}] (2,2) .. controls (2.7,1.5) and (3.3,1.25) .. (3.5,0);

\draw[thick,postaction={decorate}] (2,4) -- (2,2);
\node at (6,2) {$=$};

\end{scope}
%%%%%%%%%%%%%%%%%%%%%%%%%%%%%%%%%%

\begin{scope}[shift={(8,-6.5)}]
%\draw[thin,yellow] (0,0) grid (4,4);

\node at (2,4.5) {$a+b$};

\draw[thick,dashed] (0,4) -- (4,4);

\draw[thick,dashed] (0,0) -- (4,0);

\node at (0.5,-0.5) {$a$};

\node at (3.5,-0.5) {$b$};

\draw[thick,postaction={decorate}] (0.5,2) .. controls (0.75,1) and (1.75,1) .. (2,2);

\draw[thick,postaction={decorate}] (2,2) .. controls (2.25,2.75) and (3,2.75) .. (3.25,2);

\draw[thick,postaction={decorate}] (2,2) .. controls (1.25,2.5) and (1.25,3) .. (2,3.25);

\draw[thick,postaction={decorate}] (3.25,2) .. controls (3.5,1) and (0,0.5) .. (0.5,0);

\draw[thick,postaction={decorate}] (2,4) .. controls (1.75,3.75) and (0.25,3.25) .. (0.5,2);

\draw[thick,postaction={decorate}] (2,3.25) .. controls (3.75,3.75) and (4.25,1) .. (3.5,0);

\node at (6,2) {$=$};
\end{scope}

%%%%%%%%%%%%%%%%%%%%%%%%%%%%%%%%%%

\begin{scope}[shift={(16,-6.5)}]
%\draw[thin,yellow] (0,0) grid (4,4);

\node at (2,4.5) {$a+b$};

\draw[thick,dashed] (0,4) -- (4,4);

\draw[thick,dashed] (0,0) -- (4,0);

\node at (0.5,-0.5) {$a$};

\node at (3.5,-0.5) {$b$};

\draw[thick,postaction={decorate}] (3.5,2) .. controls (3.25,1) and (2.25,1) .. (2,2);

\draw[thick,postaction={decorate}] (2,2) .. controls (1.75,2.75) and (1,2.75) .. (0.75,2);

\draw[thick,postaction={decorate}] (2,2) .. controls (2.75,2.5) and (2.75,3) .. (2,3.25);

\draw[thick,postaction={decorate}] (0.75,2) .. controls (0.5,1) and (4,0.5) .. (3.5,0);

\draw[thick,postaction={decorate}] (2,4) .. controls (2.25,3.75) and (3.75,3.25) .. (3.5,2);

\draw[thick,postaction={decorate}] (2,3.25) .. controls (0.25,3.75) and (-0.25,1) .. (0.5,0);

\node at (6,2) {$-\langle a,b\rangle$};
\end{scope}

%%%%%%%%%%%%%%%%%%%%%%%%%%%%%%%%%%

\begin{comment}
\begin{scope}[shift={(18,-6.5)}]
\draw[thin,yellow] (0,0) grid (4,4);
\draw[thick,dashed] (0,4) -- (4,4);
\draw[thick,dashed] (0,0) -- (4,0);

\node at (0.5,4.5) {$a$};
\node at (3.5,4.5) {$b$};

\draw[thick] (0.5,3) .. controls (0.7,2.25) and (1.3,2) .. (1.5,2);
\draw[thick,<-] (1.4,2) .. controls (1.7,1.9) and (1.8,1.9) .. (2,1.75);
 
\draw[thick] (2.5,2) .. controls (2.7,2) and (3.3,2.25) .. (3.5,3);
\draw[thick,->] (2,1.75) .. controls (2.2,1.9) and (2.3,1.9) .. (2.6,2);

\draw[thick] (2,1) -- (2,1.75);
\draw[thick,->] (2,0) -- (2,1);

\node at (2,-0.5) {$a+b$};

%\node at (5.75,1.5) {$-\langle a,b\rangle$};

\end{scope}
\end{comment}

%%%%%%%%%%%%%%%%%%%%%%%%%%%%%%%%%%

\end{tikzpicture}
\end{center}
    \caption{When additive lines are pointing downwards, we rotate the additive vertices in the network whilst fixing the boundary points so that the orientations at the additive vertices are upwards. Top row: the additive vertex in each of the figures gives the contribution of $\langle a,b\rangle$. Bottom row: the additive vertices give $-\langle a,b\rangle$. }
    \label{fig_1007}
\end{figure}
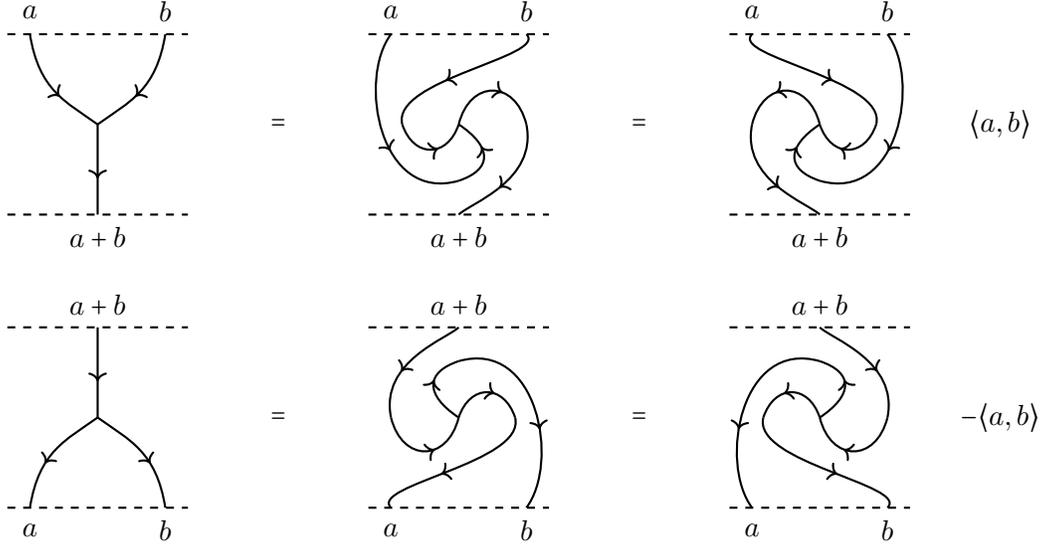

When additive lines are oriented downwards, we keep the vertices to stay the same but rotate the additive vertex clockwise or counterclockwise, resulting in a contribution of $\langle a,b\rangle$ or $-\langle a,b\rangle$. See Figure~\ref{fig_1007}.

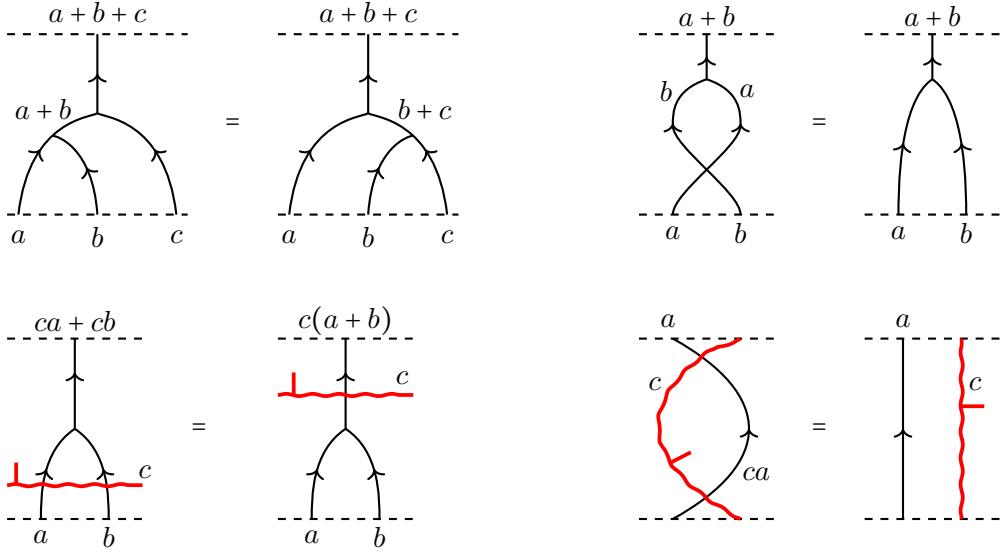
\begin{figure}
\begin{center}
\begin{tikzpicture}[scale=0.6,decoration={
    markings,
    mark=at position 0.5 with {\arrow{>}}}]
\begin{scope}[shift={(0,0)}]
%\draw[thin,yellow] (0,0) grid (4,4);

\draw[thick,dashed] (0,4) -- (4,4);

\draw[thick,dashed] (0,0) -- (4,0);

\draw[thick,postaction={decorate}] (0.25,0) .. controls (0.5,2.15) and (1.9,2.15) .. (2,2.25);

\draw[thick,postaction={decorate}] (3.75,0) .. controls (3.5,2.15) and (2.1,2.15) .. (2,2.25);

\draw[thick,postaction={decorate}] (2,0) .. controls (1.9,1.5) and (1.1,1.75) .. (1.0,1.75);

\draw[thick,postaction={decorate}] (2,2.25) -- (2,4);

\node at (2,4.5) {$a+b+c$};

\node at (0.8,2.35) {$a+b$};

\node at (0.25,-0.5) {$a$};

\node at (2,-0.5) {$b$};

\node at (3.75,-0.5) {$c$};

\node at (5,2) {$=$};
\end{scope}

\begin{scope}[shift={(6,0)}]
%\draw[thin,yellow] (0,0) grid (4,4);

\draw[thick,dashed] (0,4) -- (4,4);

\draw[thick,dashed] (0,0) -- (4,0);

\draw[thick,postaction={decorate}] (0.25,0) .. controls (0.5,2.15) and (1.9,2.15) .. (2,2.25);

\draw[thick,postaction={decorate}] (3.75,0) .. controls (3.5,2.15) and (2.1,2.15) .. (2,2.25);

\draw[thick,postaction={decorate}] (2,0) .. controls (2.1,1.5) and (2.9,1.75) .. (3,1.75);

\draw[thick,postaction={decorate}] (2,2.25) -- (2,4);

\node at (2,4.5) {$a+b+c$};

\node at (3.25,2.35) {$b+c$};

\node at (0.25,-0.5) {$a$};

\node at (2,-0.5) {$b$};

\node at (3.75,-0.5) {$c$};
\end{scope}

%%%%%%%%%%%%%%%%%%%%%%%%%%%%%%%%%%%%%%%%%%%

\begin{scope}[shift={(14,0)}]
%\draw[thin,yellow] (0,0) grid (4,4);

\draw[thick,dashed] (0,4) -- (3,4);

\draw[thick,dashed] (0,0) -- (3,0);

\draw[thick,->] (0.75,0) .. controls (0.75,0.5) and (2.25,1.5) .. (2.25,2.02);

\draw[thick,->] (2.25,0) .. controls (2.25,0.5) and (0.75,1.5) .. (0.75,2.02);

\draw[thick] (0.75,2) .. controls (0.75,2.25) and (0.75,2.75) .. (1.5,3);

\draw[thick] (2.25,2) .. controls (2.25,2.25) and (2.25,2.75) .. (1.5,3);

\draw[thick,postaction={decorate}] (1.5,3) -- (1.5,4);

\node at (0.6,2.75) {$b$};

\node at (2.4,2.75) {$a$};

\node at (1.5,4.4) {$a+b$};

\node at (0.75,-0.4) {$a$};

\node at (2.25,-0.4) {$b$};

\node at (4,2) {$=$};

\end{scope}

\begin{scope}[shift={(19,0)}]
%\draw[thin,yellow] (0,0) grid (4,4);

\draw[thick,dashed] (0,4) -- (3,4);

\draw[thick,dashed] (0,0) -- (3,0);

\draw[thick,postaction={decorate}] (0.75,0) .. controls (0.75,0.5) and (0.75,2.5) .. (1.5,3);

\draw[thick,postaction={decorate}] (2.25,0) .. controls (2.25,0.5) and (2.25,2.5) .. (1.5,3);

\draw[thick,postaction={decorate}] (1.5,3) -- (1.5,4);

\node at (1.5,4.4) {$a+b$};

\node at (0.75,-0.4) {$a$};

\node at (2.25,-0.4) {$b$};
\end{scope}

%%%%%%%%%%%%%%%%%%%%%%%%%%%%%%%%%%%%%%%%%%%

\begin{scope}[shift={(0,-6.75)}]
%\draw[thin,yellow] (0,0) grid (4,4);

\draw[thick,dashed] (0,4) -- (3,4);
\draw[thick,dashed] (0,0) -- (3,0);

\draw[thick,postaction={decorate}] (0.75,0) .. controls (0.75,0.5) and (0.75,1.5) .. (1.5,2);

\draw[thick,postaction={decorate}] (2.25,0) .. controls (2.25,0.5) and (2.25,1.5) .. (1.5,2);

\draw[thick,->] (1.5,2) -- (1.5,3.25);

\draw[thick] (1.5,3.20) -- (1.5,4);

\node at (1.5,4.4) {$ca + cb$};

\node at (0.75,-0.4) {$a$};

\node at (2.25,-0.4) {$b$};

\draw[line width=0.50mm,red] (0,0.75) decorate [decoration={snake,amplitude=0.15mm}] {-- (3,0.75)};
\draw[line width=0.50mm,red] (0.20,0.75) -- (0.20,1.25);

\node at (3.05,1.05) {$c$};

\node at (4.25,2) {$=$};
\end{scope}

\begin{scope}[shift={(6, -6.75)}]
%\draw[thin,yellow] (0,0) grid (4,4);

%\node at (-0.5,2) {$c\, \cdot$};

\draw[thick,dashed] (0,4) -- (3,4);

\draw[thick,dashed] (0,0) -- (3,0);

\draw[thick,postaction={decorate}] (0.75,0) .. controls (0.75,0.5) and (0.75,1.5) .. (1.5,2);

\draw[thick,postaction={decorate}] (2.25,0) .. controls (2.25,0.5) and (2.25,1.5) .. (1.5,2);

\draw[thick,->] (1.5,2) -- (1.5,3.25);

\draw[thick] (1.5,3.20) -- (1.5,4);

\node at (1.5,4.4) {$c(a+b)$};

\node at (0.75,-0.4) {$a$};

\node at (2.25,-0.4) {$b$};

\draw[line width=0.50mm,red] (0,2.75) decorate [decoration={snake,amplitude=0.15mm}] {-- (3,2.75)};

\draw[line width=0.50mm,red] (0.35,2.75) -- (0.35,3.25);

\node at (2.75,3.13) {$c$};

%\node at (5.25,2) {$\Longrightarrow$};
\end{scope}

%%%%%%%%%%%%%%%%%%%%%%%%%%%%%%%%%%%%%%%%%%%

\begin{scope}[shift={(14,-6.75)}]
%\draw[thin,yellow] (0,0) grid (4,4);

\draw[thick,dashed] (0,4) -- (3,4);

\draw[thick,dashed] (0,0) -- (3,0);

\draw[thick,postaction={decorate}] (0.75,0) .. controls (3,1.25) and (3,2.75) .. (0.75,4);

\draw[line width=0.50mm,red] (2.25,0) decorate [decoration={snake,amplitude=0.15mm}] {.. controls (0,1.25) and (0,2.75) .. (2.25,4)};

\draw[line width=0.50mm,red] (0.7,1.25) -- (1.15,1.48);
 
\node at (0.65,4.40) {$a$};

\node at (0.35,3) {$c$};

\node at (2.6,1) {$ca$};

\node at (4,2) {$=$};
\end{scope}

\begin{scope}[shift={(19, -6.75)}]
%\draw[thin,yellow] (0,0) grid (4,4);

\draw[thick,dashed] (0,4) -- (3,4);

\draw[thick,dashed] (0,0) -- (3,0);

\draw[thick,postaction={decorate}] (0.85,0) --  (0.85,4);

\draw[line width=0.50mm,red] (2.15,0) decorate [decoration={snake,amplitude=0.15mm}] {-- (2.15,4)};

\draw[line width=0.50mm,red]  (2.15,2.5) -- (2.65,2.5);

\node at (0.85,4.4) {$a$};

\node at (2.45,3) {$c$};

\end{scope}

%%%%%%%%%%%%%%%%%%%%%%%%%%%%%%%%%%%%%%%%%%%

\end{tikzpicture}
\end{center}
    \caption{Top left: This diagram represents the $2$-cocycle condition~\eqref{item:Cath_cocycle} in Cathelineau's $\kk$-vector space.
    Top right: Two lines crossing is virtual, but the additive vertex contributes $\langle b,a\rangle = \langle a,b \rangle$. Bottom left: the additive vertex on the left hand side gives $\langle ac, bc\rangle$ while the additive vertex on the right hand side gives $c\langle a,b\rangle$. This cobordism implies that they are equal.
    Bottom right: These two isotopies imply that one may pull the red multiplicative line away from black additive line.
    }
    \label{fig_1002}
\end{figure}

Additive and multiplicative lines satisfy natural isotopy relations, as well as the Cathelineau's $2$-cocycle relation in \eqref{item:Cath_cocycle}. The relation $\brak{a,b+c} + \brak{b,c}   = \brak{a+b,c} + \brak{a,b}$ is satisfied by Figure~\ref{fig_1002} top left by summing over the additive vertices. Figure~\ref{fig_1002} top right satisfies the symmetry relation $\langle a,b \rangle = \langle b,a \rangle$. Figure~\ref{fig_1002} bottom left satisfies the scaling relation $\langle ca, cb \rangle = c\langle a,b\rangle$ since the additive vertex on the left-hand side of the diagram contributes $\langle ca, cb \rangle$ since the additive lines have weights $ca$ and $cb$ while the additive vertex on the right-hand side of the diagram contributes $\langle a,b\rangle$, which we then rescale by the scalar $c$. For Figure~\ref{fig_1002} bottom right, the diagrams show that we can pull apart additive and multiplicative lines that have crossed virtually.

% 
% {\color{pink} Certain blue lines can also be used to absorb all the additive vertices.}

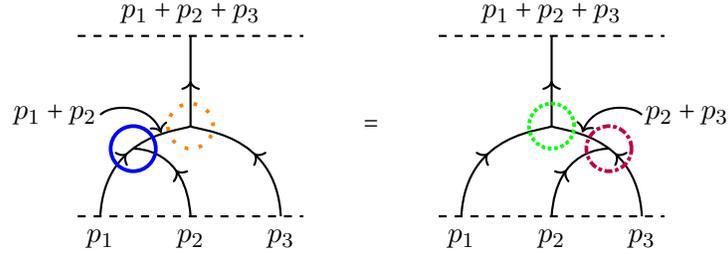
\begin{figure}
    \centering
\begin{tikzpicture}[scale=0.6,decoration={
    markings,
    mark=at position 0.5 with {\arrow{>}}}]
\begin{scope}[shift={(0,0)}]
%\draw[thin,yellow] (0,0) grid (5,4);
\draw[thick,dashed] (0,4) -- (5,4);
\draw[thick,dashed] (0,0) -- (5,0);

\draw[thick,postaction={decorate}] (0.5,0) .. controls (0.6,1.9) and (2.4,1.9) .. (2.5,2);
\draw[thick,postaction={decorate}] (4.5,0) .. controls (4.4,1.9) and (2.6,1.9) .. (2.5,2);
\draw[thick,postaction={decorate}] (2.5,0) .. controls (2.4,1.5) and (1.33,1.5) .. (1.23,1.5);

\draw[thick,postaction={decorate}] (2.5,2) -- (2.5,4);

\node at (2.5,4.5) {$p_1+p_2+p_3$};
\node at (-0.50,2.25) {$p_1+p_2$};

\node at (0.5,-0.5) {$p_1$};
\node at (2.5,-0.5) {$p_2$};
\node at (4.5,-0.5) {$p_3$};

\draw[line width=0.05cm,blue] (1.73,1.5) arc (0:360:0.5);

\draw[line width=0.05cm,loosely dotted,orange] (3,2) arc (0:360:0.5);

\draw[thick,->] (0.5,2.25) .. controls (0.75,2.5) and (1.6,2.5) .. (1.85,1.85);

\node at (6.5,2) {$=$};
\end{scope}

\begin{scope}[shift={(8,0)}]
%\draw[thin,yellow] (0,0) grid (5,4);
\draw[thick,dashed] (0,4) -- (5,4);
\draw[thick,dashed] (0,0) -- (5,0);

\draw[thick,postaction={decorate}] (0.5,0) .. controls (0.6,1.9) and (2.4,1.9) .. (2.5,2);
\draw[thick,postaction={decorate}] (4.5,0) .. controls (4.4,1.9) and (2.6,1.9) .. (2.5,2);
\draw[thick,postaction={decorate}] (2.5,0) .. controls (2.6,1.5) and (3.67,1.5) .. (3.77,1.5);

\draw[thick,postaction={decorate}] (2.5,2) -- (2.5,4);

\draw[line width=0.05cm,densely dashdotted,purple] (4.27,1.5) arc (0:360:0.5);

\draw[line width=0.05cm,densely dotted,green] (3,2) arc (0:360:0.5);

\draw[thick,->] (4.5,2.25) .. controls (4.25,2.5) and (3.43,2.5) .. (3.18,1.86);

\node at (2.5,4.5) {$p_1+p_2+p_3$};
\node at (5.5,2.25) {$p_2+p_3$};

\node at (0.5,-0.5) {$p_1$};
\node at (2.5,-0.5) {$p_2$};
\node at (4.5,-0.5) {$p_3$};
\end{scope}

\end{tikzpicture}
    \caption{When $\mathbf{k}=\mathbb{R}$ and $p_1+p_2 +p_3 =1$, each diagram on the left and the right evaluates to Shannon entropy.}
    \label{fig8_001}
\end{figure}

Furthermore, in Figure~\ref{fig_1002} top left, if the values $a,b,c$ sum to $1$, i.e., $a+ b+ c = 1$, then we get Shannon entropy if all the lines in a diagram merge and we sum over all the additive vertex contributions. That is, consider Figure~\ref{fig8_001}.
Although Proposition~\ref{prop:strands_merging_entropy} holds for $n$ strands merging into one, we first prove the case for three additive lines. 

\begin{proposition}
\label{prop:strands_merging_entropy}
Each side in the equality in Figure~\ref{fig8_001} evaluates to Shannon entropy in \eqref{eq_shannon}.     
\end{proposition}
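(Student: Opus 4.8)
The plan is to read off the value of each diagram in Figure~\ref{fig8_001} from the diagrammatic rules of Section~\ref{section:Shannon_entropy} and then reduce the computation to the formula \eqref{eqn_diagrammatic_entropy} together with the normalization $p_1+p_2+p_3=1$. Reading the left-hand diagram upward, the additive lines of weights $p_1$ and $p_2$ meet first at a trivalent (merging) vertex, which by the rule in Figure~\ref{fig_1001} (upper left) contributes $\langle p_1,p_2\rangle$ and produces a line of weight $p_1+p_2$; this line then meets the line of weight $p_3$ at a second merging vertex contributing $\langle p_1+p_2,p_3\rangle$, producing the line of weight $p_1+p_2+p_3$ at the top boundary. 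Any crossing of strands in the picture is a virtual crossing and contributes nothing. Hence the left side evaluates to $\langle p_1,p_2\rangle+\langle p_1+p_2,p_3\rangle$, and the same analysis of the right-hand diagram (where $p_2$ and $p_3$ merge first, then $p_1$ joins the result $p_2+p_3$) shows it evaluates to $\langle p_2,p_3\rangle+\langle p_1,p_2+p_3\rangle$.

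Next I would evaluate each of these sums explicitly. Expanding $\langle p_1,p_2\rangle$ via \eqref{eqn_diagrammatic_entropy} and the definition \eqref{eqn:two_term_entropy} of $H$ gives
\[
\langle p_1,p_2\rangle = -p_1\log|p_1| - p_2\log|p_2| + (p_1+p_2)\log|p_1+p_2|,
\]
exactly as computed in the proof of Lemma~\ref{lemma_Cath_properties}, and likewise
\[
\langle p_1+p_2,p_3\rangle = -(p_1+p_2)\log|p_1+p_2| - p_3\log|p_3| + (p_1+p_2+p_3)\log|p_1+p_2+p_3|.
\]
Adding these, the two $(p_1+p_2)\log|p_1+p_2|$ terms cancel, and since $p_1+p_2+p_3=1$ we have $\log|p_1+p_2+p_3|=\log 1=0$, so the left side equals $-\sum_{i=1}^{3}p_i\log|p_i| = -\sum_{i=1}^{3}p_i\log p_i = H(p_X)$ by \eqref{eq_shannon}. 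The symmetric telescoping computation gives $\langle p_2,p_3\rangle+\langle p_1,p_2+p_3\rangle = H(p_X)$ as well. In particular both sides of Figure~\ref{fig8_001} agree, which is also precisely an instance of the $2$-cocycle relation \eqref{item:Cath_cocycle}.

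There is no real obstacle here: the argument is a bookkeeping exercise in which the intermediate boundary terms $(p_1+\cdots+p_j)\log|p_1+\cdots+p_j|$ telescope away, and the final term vanishes by the normalization. The only points requiring care are (i) translating the diagram faithfully into symbols, so that virtual crossings contribute nothing while each honest merging vertex contributes the correct $\langle a,b\rangle$, and (ii) using $p_1+p_2+p_3=1$ at the very end to kill the last $\log$ term. The identical telescoping argument, carried out for $n$ strands merging into one, gives the stated proposition in full generality and recovers Proposition~\ref{thm_entropy_inner_prod}.
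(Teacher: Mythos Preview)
Your proposal is correct and follows essentially the same approach as the paper: both read off the vertex contributions as $\langle p_1,p_2\rangle+\langle p_1+p_2,p_3\rangle$ and $\langle p_2,p_3\rangle+\langle p_1,p_2+p_3\rangle$, note this is the $2$-cocycle relation, and then reduce to $H(p_X)$ via the explicit telescoping computation carried out in the proof of Lemma~\ref{lemma_Cath_properties}. The only cosmetic difference is that the paper cites that lemma's proof directly, whereas you reproduce the telescoping calculation in place.
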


\begin{proof}
Recall from~\cite{IK24_dilogarithms_entropy} that a relation between diagrammatics and scaled entropy is \eqref{eqn_diagrammatic_entropy}.
The additive vertex in the blue circle gives a contribution of $\langle p_1, p_2 \rangle$.
The additive vertex in the dotted orange circle gives a contribution of 
$\langle p_1 + p_2, p_3 \rangle$. 
So the sum of the two additive vertices gives $\langle p_1, p_2 \rangle + \langle p_1 + p_2, p_3 \rangle$. 

On the other hand, the additive vertex in the purple dashed-dotted circle provides $\langle p_2 , p_3 \rangle$. 
The additive vertex in the green dotted circle contributes $\langle p_1, p_2+ p_3 \rangle$. 
Combining the two additive vertices on the right hand side, we have $\langle p_2 , p_3 \rangle + \langle p_1, p_2+ p_3 \rangle$. 
We see that 
\[ 
\langle p_1, p_2 \rangle + \langle p_1 + p_2, p_3 \rangle = \langle p_2 , p_3 \rangle + \langle p_1, p_2+ p_3 \rangle,
\] 
which is the $2$-cocycle condition for Cathelineau's vector space. So by the proof of Lemma~\ref{lemma_Cath_properties}, each side sums to $H(p_X)$. 
\end{proof}

Now consider Figure~\ref{fig_1005}.

\begin{proposition}
\label{proposition:entropy_4_strands}
Figure~\ref{fig_1005} corresponds to entropy $H(p_X)=  -p_1 \log |p_1| - p_2 \log |p_2| - p_3 \log | p_3| -p_4 \log |p_4|$. 
\end{proposition}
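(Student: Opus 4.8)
The plan is to evaluate the diagram in Figure~\ref{fig_1005} by reading off its additive-vertex contributions and summing, exactly as in the proof of Proposition~\ref{prop:strands_merging_entropy}, of which this is the four-strand generalization. Four additive lines carrying weights $p_1,p_2,p_3,p_4$ merge into a single line, and merging four strands into one is done by three successive binary merges, so there are three additive vertices. Reading the diagram from left to right, the vertex where the $p_1$- and $p_2$-strands meet contributes $\langle p_1,p_2\rangle$, the vertex where the resulting $p_1+p_2$-strand meets the $p_3$-strand contributes $\langle p_1+p_2,p_3\rangle$, and the last vertex contributes $\langle p_1+p_2+p_3,p_4\rangle$; summing over the vertices, the diagram evaluates to
\[
\langle p_1,p_2\rangle + \langle p_1+p_2,p_3\rangle + \langle p_1+p_2+p_3,p_4\rangle .
\]

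Next I would expand each symbol via the relation \eqref{eqn_diagrammatic_entropy} between the pair of symbols and scaled entropy; as computed in the proof of Lemma~\ref{lemma_Cath_properties}, one has $\langle a,b\rangle = -a\log|a| - b\log|b| + (a+b)\log|a+b|$ whenever $a+b\neq 0$. Writing $s_j := p_1+\dots+p_j$, the sum above becomes
\begin{align*}
&\bigl(-p_1\log|p_1| - p_2\log|p_2| + s_2\log|s_2|\bigr)
+ \bigl(-s_2\log|s_2| - p_3\log|p_3| + s_3\log|s_3|\bigr) \\
&\qquad {}+ \bigl(-s_3\log|s_3| - p_4\log|p_4| + s_4\log|s_4|\bigr).
\end{align*}
The intermediate terms $\pm s_2\log|s_2|$ and $\pm s_3\log|s_3|$ cancel telescopically, leaving $-p_1\log|p_1| - p_2\log|p_2| - p_3\log|p_3| - p_4\log|p_4| + s_4\log|s_4|$, and since $s_4 = p_1+p_2+p_3+p_4 = 1$ the last term vanishes, giving $H(p_X)$. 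Equivalently, once the vertex contributions are identified with $\sum_{j=2}^{4}\langle s_{j-1},p_j\rangle$, one may simply invoke the $n=4$ case of Proposition~\ref{thm_entropy_inner_prod}.

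If Figure~\ref{fig_1005} depicts the four strands merging in more than one order (as Figure~\ref{fig8_001} does for three strands), I would additionally observe that any two binary merge patterns for the same weights are related by finitely many applications of the $2$-cocycle relation \eqref{item:Cath_cocycle}, so all such diagrams evaluate to the same element; this invariance under rebracketing is the only genuine point to check, and it is the iterated cocycle (associativity of entropy) argument already implicit in Proposition~\ref{prop:strands_merging_entropy}. The main obstacle is therefore not the computation — a routine telescoping that collapses precisely because the total weight is $1$ — but the bookkeeping: correctly identifying which merge tree appears in the figure and confirming that a different bracketing of the $p_i$ produces the same value.
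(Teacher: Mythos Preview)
Your proposal is correct and follows essentially the same approach as the paper: identify the three additive-vertex contributions $\langle p_1,p_2\rangle$, $\langle p_1+p_2,p_3\rangle$, $\langle p_1+p_2+p_3,p_4\rangle$, expand each via \eqref{eqn_diagrammatic_entropy}, and telescope. The paper carries out the expansion through $H\bigl(\tfrac{p_1}{p_1+p_2}\bigr)$ etc.\ before reducing to logs, whereas you shortcut directly to the simplified form $\langle a,b\rangle=-a\log|a|-b\log|b|+(a+b)\log|a+b|$, and the paper leaves the other merge orders in Figure~\ref{fig_1005} as an exercise while you sketch why they agree via \eqref{item:Cath_cocycle}; these are cosmetic differences, not a different argument.
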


\begin{proof}
Here, we will compute only the left hand side in Figure~\ref{fig_1005}, where $-p_1 \log p_1 - p_2 \log p_2 - p_3 \log (p_3) - p_4 \log p_4=H(p_X)$. We will leave the other diagrams as an exercise for the reader. 
From bottom to top, the additive vertices contribute 
\begin{align*}
\langle p_1, p_2 \rangle 
&= (p_1+ p_2) H\left( \frac{p_1}{p_1 + p_2} \right) \\ 
\langle p_1 + p_2, p_3 \rangle &= (p_1+ p_2+p_3) H\left( \frac{p_1+p_2}{p_1 + p_2 + p_3} \right) \\ 
\langle p_1 + p_2 +p_3, p_4 \rangle &= (p_1+ p_2+p_3+ p_4) H\left( \frac{p_1+p_2+ p_3}{p_1 + p_2 + p_3 + p_4} \right),
\end{align*}
respectively. 
Combining the three equations above gives
\begin{align*}
\langle p_1, & \: p_2 \rangle + 
\langle p_1 + p_2, p_3 \rangle  + 
\langle p_1 + p_2 +p_3, p_4 \rangle \\
&= (p_1+ p_2) H\left( \frac{p_1}{p_1 + p_2} \right) + 
(p_1+ p_2+p_3) H\left( \frac{p_1+p_2}{p_1 + p_2 + p_3} \right) \\  
&\hspace{4mm}+ 
(p_1+ p_2+p_3+ p_4) H\left( \frac{p_1+p_2+ p_3}{p_1 + p_2 + p_3 + p_4} \right) \\
&= (p_1 + p_2) \left(-\frac{p_1}{p_1 + p_2} \log \left| \frac{p_1}{p_1 + p_2} \right|
- \left(1 - \frac{p_1}{p_1 + p_2}\right) 
\log \left| 1 - \frac{p_1}{p_1 + p_2}\right| \right) \\
&\hspace{4mm}+ (p_1 + p_2 + 
    p_3) \left(-\frac{p_1 + p_2}{p_1 + p_2 + p_3} \log \left| \frac{p_1 + p_2}{
      p_1 + p_2 + p_3}\right| \right. \\
&\hspace{4mm}\left. - \left(1 - \frac{p_1 + p_2}{p_1 + p_2 + p_3}\right) \log \left|
      1 - \frac{p_1 + p_2}{p_1 + p_2 + p_3}\right|
      \right) \\
&\hspace{4mm}+ (p_1 + p_2 + p_3 + 
    p_4) \left(-\frac{p_1 + p_2 + p_3}{p_1 + p_2 + p_3 + p_4} \log \left| \frac{p_1 + p_2 + p_3}{p_1 + p_2 + p_3 + p_4} \right| \right. \\
&\hspace{4mm}\left. - \left(1 - \frac{p_1 + p_2 + p_3}{p_1 + p_2 + p_3 + p_4}\right) \log \left|
      1 - \frac{p_1 + p_2 + p_3}{p_1 + p_2 + p_3 + p_4}\right|
      \right) \\ 
&= \cancel{(p_1 + p_2)} \left(-\frac{p_1}{\cancel{p_1 + p_2}} \log \left| \frac{p_1}{p_1 + p_2} \right|
-  \frac{p_2}{\cancel{p_1 + p_2}} 
\log \left| \frac{p_2}{p_1 + p_2}\right| \right) \\
&\hspace{4mm}+ \cancel{(p_1 + p_2 + 
    p_3)} \left(-\frac{p_1 + p_2}{\cancel{p_1 + p_2 + p_3}} \log \left| \frac{p_1 + p_2}{
      p_1 + p_2 + p_3}\right| \right. \\
&\hspace{4mm}\left. 
- \frac{p_3}{\cancel{p_1 + p_2 + p_3}} \log \left|
       \frac{p_3}{p_1 + p_2 + p_3}\right|
      \right) \\
&\hspace{4mm}+ \cancel{(p_1 + p_2 + p_3 + 
    p_4)} \left(-\frac{p_1 + p_2 + p_3}{\cancel{p_1 + p_2 + p_3 + p_4}} \log \left| \frac{p_1 + p_2 + p_3}{p_1 + p_2 + p_3 + p_4} \right| \right. \\
&\hspace{4mm}\left. - \frac{p_4}{\cancel{p_1 + p_2 + p_3 + p_4}} \log \left|
      \frac{p_4}{p_1 + p_2 + p_3 + p_4}\right|
      \right) \\ 
&=  -p_1\log \left| \frac{p_1}{p_1 + p_2} \right| - p_2  
\log \left| \frac{p_2}{p_1 + p_2}\right| 
- (p_1 + p_2) 
\log \left| \frac{p_1 + p_2}{
 p_1 + p_2 + p_3}\right|  
- p_3 \log \left|
 \frac{p_3}{p_1 + p_2 + p_3}\right| \\
&\hspace{4mm} - (p_1 + p_2 + p_3) \log \left| \frac{p_1 + p_2 + p_3}{p_1 + p_2 + p_3 + p_4} \right|   - p_4 \log \left|
      \frac{p_4}{p_1 + p_2 + p_3 + p_4}\right| \\ 
&= -p_1 \log |p_1| - p_2 \log |p_2| 
+ \cancel{(p_1 + p_2) \log |p_1 + p_2|}
- \cancel{(p_1 + p_2) \log |p_1 + p_2|} 
- p_3 \log |p_3| \\
&\hspace{4mm}
+ \cancel{(p_1 + p_2 + p_3) \log |p_1 + p_2 + p_3|}
- \cancel{(p_1 + p_2 + p_3) \log |p_1 + p_2 + p_3|} \\ 
&\hspace{4mm}- 
 p_4 \log |p_4| + (\underbrace{p_1 + p_2 + p_3 + p_4}_{1}) \log |\underbrace{p_1 + p_2 + p_3 + p_4}_{1}| \\
&= -p_1 \log |p_1| - p_2 \log |p_2| - p_3 \log |p_3| - p_4 \log |p_4| = H(p_X). 
\end{align*}
This concludes the proof.
\end{proof}

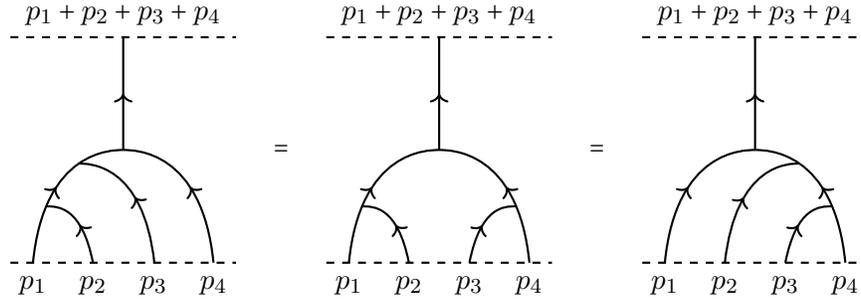
\begin{figure}
    \centering
\begin{tikzpicture}[scale=0.6,decoration={
    markings,
    mark=at position 0.5 with {\arrow{>}}}]

%%%%%%%%%%%%%%%%%%%%%%%%%%%%%%%%%%%%%%%%%%%%%%%%

\begin{scope}[shift={(0,0)}]
%\draw[thin,yellow] (0,0) grid (5,5);

\draw[thick,dashed] (0,5) -- (5,5); 

\draw[thick,dashed] (0,0) -- (5,0); 

\draw[thick,postaction={decorate}] (0.5,0) .. controls (0.75,2.5) and (2.25,2.5) .. (2.5,2.5);

\draw[thick,postaction={decorate}] (4.5,0) ..  controls (4.25,2.5) and (2.75,2.5) .. (2.5,2.5);

\draw[thick,postaction={decorate}] (1.83,0) .. controls (1.58,1.25) and (1.05,1.25) .. (0.8,1.25);

\draw[thick,postaction={decorate}] (3.19,0) .. controls (2.94,2.2) and (1.75,2.2) .. (1.5,2.2);

\draw[thick,postaction={decorate}] (2.5,2.5) -- (2.5,5);

\node at (0.5,-0.5) {$p_1$};

\node at (1.83,-0.5) {$p_2$};

\node at (3.17,-0.5) {$p_3$};

\node at (4.5,-0.5) {$p_4$};

\node at (2.5,5.5) {$p_1 + p_2 + p_3 + p_4$};

\node at (6,2.5) {$=$};

\end{scope}

%%%%%%%%%%%%%%%%%%%%%%%%%%%%%%%%%%%%%%%%%%%%%%%%

\begin{scope}[shift={(7,0)}]
%\draw[thin,yellow] (0,0) grid (5,5);

\draw[thick,dashed] (0,5) -- (5,5); 

\draw[thick,dashed] (0,0) -- (5,0); 

\draw[thick,postaction={decorate}] (0.5,0) .. controls (0.75,2.5) and (2.25,2.5) .. (2.5,2.5);

\draw[thick,postaction={decorate}] (4.5,0) ..  controls (4.25,2.5) and (2.75,2.5) .. (2.5,2.5);

\draw[thick,postaction={decorate}] (1.83,0) .. controls (1.58,1.25) and (1.05,1.25) .. (0.8,1.25);

\draw[thick,postaction={decorate}] (3.17,0) .. controls (3.42,1.25) and (3.94,1.25) .. (4.19,1.25);

\draw[thick,postaction={decorate}] (2.5,2.5) -- (2.5,5);

\node at (0.5,-0.5) {$p_1$};

\node at (1.83,-0.5) {$p_2$};

\node at (3.17,-0.5) {$p_3$};

\node at (4.5,-0.5) {$p_4$};

\node at (2.5,5.5) {$p_1 + p_2 + p_3 + p_4$};

\node at (6,2.5) {$=$};

\end{scope}

%%%%%%%%%%%%%%%%%%%%%%%%%%%%%%%%%%%%%%%%%%%%%%%%

\begin{scope}[shift={(14,0)}]
%\draw[thin,yellow] (0,0) grid (5,5);

\draw[thick,dashed] (0,5) -- (5,5); 

\draw[thick,dashed] (0,0) -- (5,0); 

\draw[thick,postaction={decorate}] (0.5,0) .. controls (0.75,2.5) and (2.25,2.5) .. (2.5,2.5);

\draw[thick,postaction={decorate}] (4.5,0) ..  controls (4.25,2.5) and (2.75,2.5) .. (2.5,2.5);

\draw[thick,postaction={decorate}] (1.83,0) .. controls (2.08,2.2) and (3.24,2.2) .. (3.49,2.2);

\draw[thick,postaction={decorate}] (3.17,0) .. controls (3.42,1.25) and (3.94,1.25) .. (4.19,1.25);

\draw[thick,postaction={decorate}] (2.5,2.5) -- (2.5,5);

\node at (0.5,-0.5) {$p_1$};

\node at (1.83,-0.5) {$p_2$};

\node at (3.17,-0.5) {$p_3$};

\node at (4.5,-0.5) {$p_4$};

\node at (2.5,5.5) {$p_1 + p_2 + p_3 + p_4$};

\end{scope}

%%%%%%%%%%%%%%%%%%%%%%%%%%%%%%%%%%%%%%%%%%%%%%%%

\end{tikzpicture}
    \caption{All possible ways of network of additive lines merging into one additive line.}
    \label{fig_1005}
\end{figure}

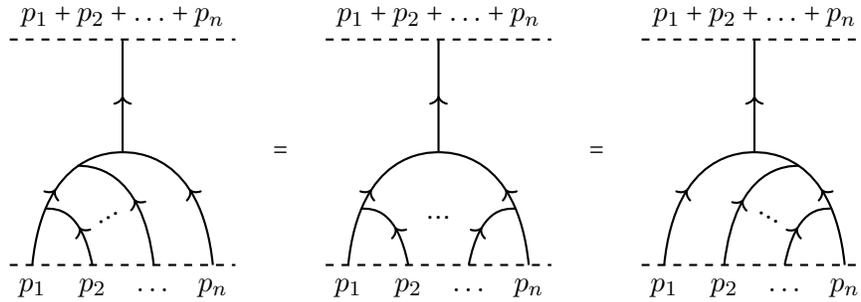
\begin{figure}
    \centering
\begin{tikzpicture}[scale=0.6,decoration={
    markings,
    mark=at position 0.5 with {\arrow{>}}}]

%%%%%%%%%%%%%%%%%%%%%%%%%%%%%%%%%%%%%%%%%%%%%%%%

\begin{scope}[shift={(0,0)}]
%\draw[thin,yellow] (0,0) grid (5,5);

\draw[thick,dashed] (0,5) -- (5,5); 

\draw[thick,dashed] (0,0) -- (5,0); 

\draw[thick,postaction={decorate}] (0.5,0) .. controls (0.75,2.5) and (2.25,2.5) .. (2.5,2.5);

\draw[thick,postaction={decorate}] (4.5,0) ..  controls (4.25,2.5) and (2.75,2.5) .. (2.5,2.5);

\draw[thick,postaction={decorate}] (1.83,0) .. controls (1.58,1.25) and (1.05,1.25) .. (0.8,1.25);

\draw[thick,postaction={decorate}] (3.19,0) .. controls (2.94,2.2) and (1.75,2.2) .. (1.5,2.2);

\draw[thick,postaction={decorate}] (2.5,2.5) -- (2.5,5);

\node at (2.22,1) {\rotatebox[origin=c]{30}{$\cdots$}};

\node at (0.5,-0.5) {$p_1$};

\node at (1.83,-0.5) {$p_2$};

\node at (3.17,-0.55) {$\ldots$};

\node at (4.5,-0.5) {$p_n$};

\node at (2.5,5.5) {$p_1 + p_2 + \ldots + p_n$};

\node at (6,2.5) {$=$};

\end{scope}

%%%%%%%%%%%%%%%%%%%%%%%%%%%%%%%%%%%%%%%%%%%%%%%%

\begin{scope}[shift={(7,0)}]
%\draw[thin,yellow] (0,0) grid (5,5);

\draw[thick,dashed] (0,5) -- (5,5); 

\draw[thick,dashed] (0,0) -- (5,0); 

\draw[thick,postaction={decorate}] (0.5,0) .. controls (0.75,2.5) and (2.25,2.5) .. (2.5,2.5);

\draw[thick,postaction={decorate}] (4.5,0) ..  controls (4.25,2.5) and (2.75,2.5) .. (2.5,2.5);

\draw[thick,postaction={decorate}] (1.83,0) .. controls (1.58,1.25) and (1.05,1.25) .. (0.8,1.25);

\draw[thick,postaction={decorate}] (3.17,0) .. controls (3.42,1.25) and (3.94,1.25) .. (4.19,1.25);

\draw[thick,postaction={decorate}] (2.5,2.5) -- (2.5,5);

\node at (2.5,1) {$\cdots$};

\node at (0.5,-0.5) {$p_1$};

\node at (1.83,-0.5) {$p_2$};

\node at (3.17,-0.55) {$\ldots$};

\node at (4.5,-0.5) {$p_n$};

\node at (2.5,5.5) {$p_1 + p_2 + \ldots + p_n$};

\node at (6,2.5) {$=$};

\end{scope}

%%%%%%%%%%%%%%%%%%%%%%%%%%%%%%%%%%%%%%%%%%%%%%%%

\begin{scope}[shift={(14,0)}]
%\draw[thin,yellow] (0,0) grid (5,5);

\draw[thick,dashed] (0,5) -- (5,5); 

\draw[thick,dashed] (0,0) -- (5,0); 

\draw[thick,postaction={decorate}] (0.5,0) .. controls (0.75,2.5) and (2.25,2.5) .. (2.5,2.5);

\draw[thick,postaction={decorate}] (4.5,0) ..  controls (4.25,2.5) and (2.75,2.5) .. (2.5,2.5);

\draw[thick,postaction={decorate}] (1.83,0) .. controls (2.08,2.2) and (3.24,2.2) .. (3.49,2.2);

\draw[thick,postaction={decorate}] (3.17,0) .. controls (3.42,1.25) and (3.94,1.25) .. (4.19,1.25);

\draw[thick,postaction={decorate}] (2.5,2.5) -- (2.5,5);

\node at (2.78,1) {\rotatebox[origin=c]{-30}{$\cdots$}};

\node at (0.5,-0.5) {$p_1$};

\node at (1.83,-0.5) {$p_2$};

\node at (3.17,-0.55) {$\ldots$};

\node at (4.5,-0.5) {$p_n$};

\node at (2.5,5.5) {$p_1 + p_2 + \ldots + p_n$};

\end{scope}

%%%%%%%%%%%%%%%%%%%%%%%%%%%%%%%%%%%%%%%%%%%%%%%%

\end{tikzpicture}
    \caption{Various diagrams depicting entropy.}
    \label{fig_1006}
\end{figure}

\begin{proposition}
\label{prop_entropy_general_case_n}
For each diagram in Figure~\ref{fig_1006}, the additive vertices sum to Shannon entropy. 
\end{proposition}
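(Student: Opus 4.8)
The plan is to reduce every diagram in Figure~\ref{fig_1006} to the telescoping computation already carried out for three and four strands, observing that the total contribution depends only on the boundary data $p_1,\dots,p_n$ and not on the internal merging pattern. First I would record the local rule: at each additive vertex where two strands carrying values $a$ and $b$ merge, the contribution is $\langle a,b\rangle$, and by \eqref{eqn_diagrammatic_entropy} together with the identity extracted from the proof of Lemma~\ref{lemma_Cath_properties},
\[
\langle a,b\rangle \;=\; -a\log|a| \;-\; b\log|b| \;+\; (a+b)\log|a+b|.
\]
Each diagram in Figure~\ref{fig_1006} is a finite planar merging tree whose $n$ leaf edges are decorated by $p_1,\dots,p_n$, whose internal edges carry the partial sums produced by the merges, and whose unique root edge carries $p_1+\dots+p_n=1$; the pictures differ only in the order in which the strands are merged.

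Next I would sum the local contributions over all internal vertices of such a tree. Every internal edge appears exactly twice in the total: once as the output label $a+b$ of the vertex immediately below it, contributing $+(a+b)\log|a+b|$, and once as an input label of the vertex immediately above it, contributing $-(a+b)\log|a+b|$; hence all internal-edge terms cancel. What survives is $-\sum_{i=1}^n p_i\log|p_i|$ from the $n$ leaf edges, together with the root term $(p_1+\dots+p_n)\log|p_1+\dots+p_n| = 1\cdot\log 1 = 0$. Therefore the additive vertices sum to $-\sum_{i=1}^n p_i\log|p_i| = H(p_X)$ for every diagram in Figure~\ref{fig_1006}, which recovers Proposition~\ref{proposition:entropy_4_strands} and Proposition~\ref{prop:strands_merging_entropy} as the cases $n=4$ and $n=3$.

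Alternatively, and more conceptually, I would invoke Proposition~\ref{thm_entropy_inner_prod}, which already identifies the leftmost merging order $\sum_{j=2}^n\bigl\langle\sum_{i=1}^{j-1}p_i,\,p_j\bigr\rangle$ with $H(p_X)$, and then argue that any other planar merging tree on the same inputs is obtained from it by finitely many applications of the $2$-cocycle relation \eqref{item:Cath_cocycle} --- equivalently, the local move depicted in Figure~\ref{fig_1002} top left --- which does not change the sum over additive vertices. The main obstacle is purely bookkeeping: one must pin down precisely which family of diagrams Figure~\ref{fig_1006} is meant to represent (all planar iterated binary merging trees on $n$ inputs) and check that any two of them are connected by $2$-cocycle moves; once the tree language is fixed, the telescoping cancellation above is routine and uniform in $n$, so I would present the direct argument and only remark on the $2$-cocycle reformulation.
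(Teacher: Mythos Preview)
Your proposal is correct. The paper does not give a proof of this proposition at all: it simply states that the argument is analogous to the $n=3$ and $n=4$ cases (Propositions~\ref{prop:strands_merging_entropy} and~\ref{proposition:entropy_4_strands}), both of which are explicit line-by-line expansions. Your first argument extracts precisely the mechanism underlying those computations --- the identity $\langle a,b\rangle = -a\log|a| - b\log|b| + (a+b)\log|a+b|$ followed by telescoping over the internal edges of the tree --- and phrases it uniformly in $n$, which is exactly what ``analogous'' is gesturing at. Your alternative via the $2$-cocycle move is also in the spirit of the paper (it is how Proposition~\ref{prop:strands_merging_entropy} concludes), so either route is an acceptable elaboration of the paper's one-line remark.
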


The proof of Proposition~\ref{prop_entropy_general_case_n} is analogous to Propositions~\ref{prop:strands_merging_entropy} and \ref{proposition:entropy_4_strands}.

Now, recall equation \eqref{eqn_diagrammatic_entropy} and Cathelineau's symmetry relation~\eqref{item:symm}. They give us the following correspondence: 
\begin{equation}
\label{entropy_special_case_n2}
\langle a,b\rangle = \langle b,a\rangle  \Leftrightarrow (a+b)H\left(\frac{a}{a+b}\right) = (a+b)H\left(\frac{b}{a+b}\right).
\end{equation}

In particular, in \eqref{entropy_special_case_n2}, take $a=p$ and $b = 1-p$ to obtain 
\[
\langle p,1-p\rangle = \langle 1-p,p\rangle  \Leftrightarrow H(p) = H(1-p).
\]

\begin{figure}
    \centering
\begin{tikzpicture}[scale=0.6,decoration={
    markings,
    mark=at position 0.50 with {\arrow{>}}}]

%%%%%%%%%%%%%%%%%%%%%%%%%%%%%%%%%%%%
\begin{scope}[shift={(0,0)}]
%\draw[thin,yellow] (0,0) grid (4,4);

\draw[thick,dashed] (0,4) -- (4,4);

\draw[thick,dashed] (0,0) -- (4,0);

\node at (0.5,4.5) {$x_1$};

\node at (3.5,4.5) {$x_2$};

\node at (0.5,-0.5) {$x_1$};

\node at (3.5,-0.5) {$x_2$};

\draw[thick,postaction={decorate}] (0.5,0) -- (2,1.5);

\draw[thick,postaction={decorate}] (3.5,0) -- (2,1.5);

\draw[thick,postaction={decorate}] (2,1.5) -- (2,2.5);

\draw[thick,postaction={decorate}] (2,2.5) -- (0.5,4);

\draw[thick,postaction={decorate}] (2,2.5) -- (3.5,4);

\end{scope}

%%%%%%%%%%%%%%%%%%%%%%%%%%%%%%%%%%%%

\begin{scope}[shift={(7,0)}]
%\draw[thin,yellow] (0,0) grid (4,4);

\draw[thick,dashed] (0,4) -- (4,4);

\draw[thick,dashed] (0,0) -- (4,0);

\draw[thick,postaction={decorate}] (2,1.5) -- (2,2.5);

\node at (2.5,2) {$0$};

\node at (0.5,4.5) {$q$};

\node at (3.5,4.5) {$-q$};

\node at (0.5,-0.5) {$p$};

\node at (3.5,-0.5) {$-p$};

\draw[thick,postaction={decorate}] (0.5,0) -- (2,1.5);

\draw[thick,postaction={decorate}] (3.5,0) -- (2,1.5);

\draw[thick,postaction={decorate}] (2,2.5) -- (0.5,4);

\draw[thick,postaction={decorate}] (2,2.5) -- (3.5,4);

\node at (5,2) {$=$};

\end{scope}

%%%%%%%%%%%%%%%%%%%%%%%%%%%%%%%%%%%%

\begin{scope}[shift={(13,0)}]
%\draw[thin,yellow] (0,0) grid (4,4);

\draw[thick,dashed] (0,4) -- (4,4);

\draw[thick,dashed] (0,0) -- (4,0);

\node at (0.5,4.5) {$q$};

\node at (3.5,4.5) {$-q$};

\node at (0.5,-0.5) {$p$};

\node at (3.5,-0.5) {$-p$};

\draw[thick,postaction={decorate}] (0.5,0) -- (2,1.5);

\draw[thick,postaction={decorate}] (3.5,0) -- (2,1.5);

\draw[thick,postaction={decorate}] (2,2.5) -- (0.5,4);

\draw[thick,postaction={decorate}] (2,2.5) -- (3.5,4);

\end{scope}

%%%%%%%%%%%%%%%%%%%%%%%%%%%%%%%%%%%%

\begin{scope}[shift={(20,0)}]
%\draw[thin,yellow] (0,0) grid (4,4);

\draw[thick,dashed] (0,4) -- (4,4);

\draw[thick,dashed] (0,0) -- (4,0);

%\draw[thick] (0.5,0) .. controls (0.75,3) and (3.25,3) .. (3.5,0);

\draw[thick,->] (0.5,0) .. controls (0.6,1.25) and (0.9,1.5) .. (1,1.65);

\draw[thick,fill] (1.35,1.8) arc (0:360:1.5mm);

\draw[thick,postaction={decorate}] (3.5,0) .. controls (3.4,2.75) and (1.45,2.25) .. (1.35,2);

\node at (0.5,-0.5) {$p$};

\node at (3.5,-0.5) {$-p$};

\draw[thick,dotted,postaction={decorate}] (1.2,1.8) .. controls (1.1,2) and (0.6,2.5) .. (0.5,4);

\node at (0.5,4.5) {$0$};

\node at (0.25,3) {$0$};

\end{scope}

\end{tikzpicture}
    \caption{Left: the evaluation of this cobordism is $\langle x_1, x_2\rangle - \langle x_1, x_2\rangle=0$. So the entropy of this system is $0$. Middle: $0$-lines can be erased. Right: the $0$-line can be extended from a $\langle p,-p\rangle$-line.}
    \label{fig7_003}
\end{figure}
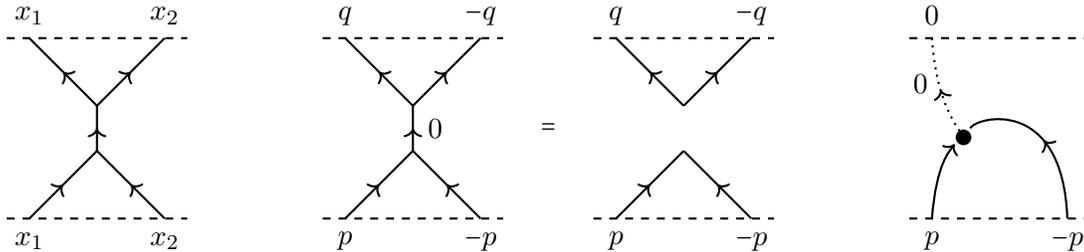

\begin{figure}
    \centering
\begin{tikzpicture}[scale=0.6,decoration={
    markings,
    mark=at position 0.50 with {\arrow{>}}}]

%%%%%%%%%%%%%%%%%%%%%%%%%%%%%%

\begin{scope}[shift={(0,0)}]
%\draw[thin,yellow] (0,0) grid (4,4);

\draw[thick,dashed] (0.5,4) -- (2.5,4);

\draw[thick,dashed] (0.5,0) -- (2.5,0);

\node at (1.5,4.5) {$-p$};

\node at (1.5,-0.5) {$p$};

\draw[thick,postaction={decorate}] (1.5,4) -- (1.5,2);

\draw[thick,fill] (1.65,2) arc (0:360:1.5mm);

\draw[thick,postaction={decorate}] (1.5,0) -- (1.5,2);

\node at (3.5,2) {$=$};

\end{scope}

%%%%%%%%%%%%%%%%%%%%%%%%%%%%%%

\begin{scope}[shift={(4.5,0)}]
%\draw[thin,yellow] (0,0) grid (3,4);

\draw[thick,dashed] (0,4) -- (3,4);

\draw[thick,dashed] (0,0) -- (3,0);

\node at (0.5,4.5) {$-p$};

\node at (0.5,-0.5) {$p$};

\node at (2.5,4.5) {$0$};

\node at (2.25,2.55) {$0$};

\draw[thick,postaction={decorate}] (0.5,4) -- (0.5,2);

\draw[thick,fill] (0.65,2) arc (0:360:1.5mm);

\draw[thick,postaction={decorate}] (0.5,2) .. controls (0.75,2) and (2.25,2.5) .. (2.5,4);

\draw[thick,postaction={decorate}] (0.5,0) -- (0.5,2);

\end{scope}

%%%%%%%%%%%%%%%%%%%%%%%%%%%%%%

\begin{scope}[shift={(12,0)}]
%\draw[thin,yellow] (0,0) grid (4,4);

\draw[thick,dashed] (0,4) -- (4,4);

\draw[thick,dashed] (0,0) -- (4,0);

\node at (0.5,-0.5) {$p$};

\node at (3.5,-0.5) {$-p$};

\draw[thick,postaction={decorate}] (0.5,0) -- (2,2.25);

\draw[thick,postaction={decorate}] (3.5,0) -- (2,2.25);

\draw[thick,postaction={decorate}] (2,2.25) -- (2,4);

\node at (2,4.5) {$0$};

\node at (2.5,3) {$0$};

\node at (5,2) {$=$}; 
\end{scope}

%%%%%%%%%%%%%%%%%%%%%%%%%%%%%%

\begin{scope}[shift={(19,0)}]
%\draw[thin,yellow] (0,0) grid (4,4);

\draw[thick,dashed] (0,4) -- (4,4);

\draw[thick,dashed] (0,0) -- (4,0);

\draw[thick,fill] (2.15,2.25) arc (0:360:1.5mm);

\node at (0.5,-0.5) {$p$};

\node at (3.5,-0.5) {$-p$};

\draw[thick,postaction={decorate}] (0.5,0) -- (2,2.25);

\draw[thick,postaction={decorate}] (3.5,0) -- (2,2.25);

\end{scope}

%%%%%%%%%%%%%%%%%%%%%%%%%%%%%%

\end{tikzpicture}
    \caption{We introduce a dot to represent reversal of orientation. Left: we can erase the $0$-line. The 0-line can go in or out, to the top or bottom boundary.
    Right: we have $\langle p, -p\rangle =0$.}
    \label{fig7_004}
\end{figure}
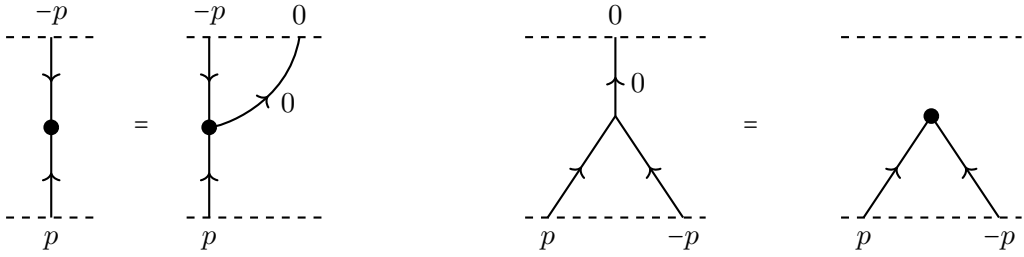

\begin{figure}
    \centering
\begin{tikzpicture}[scale=0.6,decoration={
    markings,
    mark=at position 0.50 with {\arrow{>}}}]

%%%%%%%%%%%%%%%%%%%%%%%%%%%%%%

\begin{scope}[shift={(0,0)}]
%\draw[thin,yellow] (0,0) grid (4,4);

\draw[thick,dashed] (0,3) -- (4,3);

\draw[thick,dashed] (0,0) -- (4,0);

\draw[thick,fill] (2.15,2.25) arc (0:360:1.5mm);

\node at (0.5,-0.5) {$p$};

\node at (3.5,-0.5) {$-p$};

\draw[thick,postaction={decorate}] (0.5,0) .. controls (0.75,2) and (1.75,2.25) .. (2,2.25);

\draw[thick,postaction={decorate}] (3.5,0) .. controls (3.25,2) and (2.25,2.25) .. (2,2.25);

\node at (5,2) {$=$}; 

\end{scope}

%%%%%%%%%%%%%%%%%%%%%%%%%%%%%%

\begin{scope}[shift={(7,0)}]
%\draw[thin,yellow] (0,0) grid (4,4);

\draw[thick,dashed] (0,6) -- (4,6);

\draw[thick,dashed] (0,3) -- (4,3);

\draw[thick,dashed] (0,0) -- (4,0);

\node at (0.5,-0.5) {$p$};

\node at (3.5,-0.5) {$-p$};

\draw[thick,postaction={decorate}] (0.5,3) -- (0.5,1.5);

\draw[thick,postaction={decorate}] (0.5,0) -- (0.5,1.5);

\draw[thick,fill] (0.65,1.5) arc (0:360:1.5mm);

\draw[thick,postaction={decorate}] (3.5,0) -- (3.5,3);

\draw[thick,postaction={decorate}] (3.5,3) .. controls (3.25,5) and (0.75,5) .. (0.5,3);

\end{scope}

%%%%%%%%%%%%%%%%%%%%%%%%%%%%%%

\end{tikzpicture}
    \caption{We introduce a dot to represent reversal of orientation. Left: the $0$-line has been erased.  
    Right: we have $\langle p, -p\rangle =0$.}
    \label{fig8_004}
\end{figure}
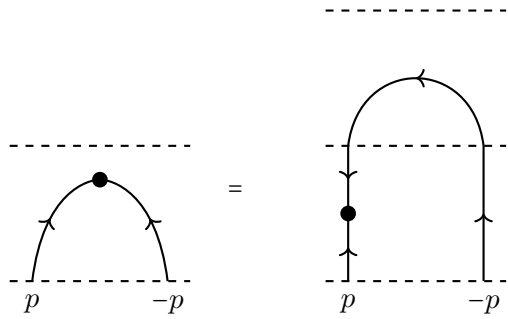

We may also introduce the $0$-line. See Figures~\ref{fig7_003}, \ref{fig7_004}, and \ref{fig8_004}.

We will now introduce black defects (dots) on additive networks, where whenever we insert a black dot on an additive network, we need to insert another defect on the additive network and reverse orientations on the portion of the additive network connecting these defects.
The contribution from Figure~\ref{fig8_008} left is $\langle p-q,1-p \rangle = (1-q)H\left( \frac{p-q}{1-q} \right)$. The contribution in Figure~\ref{fig8_008} center is 
$\langle q-1,p-q \rangle + \langle p-1,1-p\rangle -\langle q-1,1-q\rangle  = (p-1)H\left( \frac{q-1}{p-1} \right)$.
Finally, the contribution in Figure~\ref{fig8_008} right is $\langle p-q, q-1 \rangle + \langle 1-p, p-1\rangle -\langle 1-q,q-1\rangle  = (p-1)H\left( \frac{p-q}{p-1} \right)$.
They show that entropy also satisfies 
\[
(1-q)H\left( \frac{p-q}{1-q} \right) = (p-1)H\left( \frac{q-1}{p-1} \right) = (p-1)H\left( \frac{p-q}{p-1} \right). 
\]

\begin{figure}
    \centering
\begin{tikzpicture}[scale=0.6,decoration={
    markings,
    mark=at position 0.5 with {\arrow{>}}}]

\begin{scope}[shift={(0,0)}]

%\draw[thin,yellow] (0,0) grid (4,4);

\draw[thick,dashed] (0,4) -- (4,4);

\draw[thick,dashed] (0,0) -- (4,0);

\draw[thick,postaction={decorate}] (0.75,0) .. controls (1,1.5) and (1.75,2) .. (2,2);

\draw[thick,postaction={decorate}] (3.25,0) .. controls (3,1.5) and (2.25,2) .. (2,2);

\draw[thick,postaction={decorate}] (2,2) -- (2,4);

\node at (2,4.5) {$1-q$};

\node at (0.75,-0.5) {$p-q$};

\node at (3.25,-0.5) {$1-p$};

\node at (6,2) {$=$};

\end{scope}

\begin{scope}[shift={(8,0)}]

%\draw[thin,yellow] (0,0) grid (4,4);

\draw[thick,dashed] (0,4) -- (4,4);

\draw[thick,dashed] (0,0) -- (4,0);

%\draw[thin] (3.25,0) .. controls (3.0,1.5) and (2.25,2) .. (2,2);

\draw[thick,postaction={decorate}] (0.75,0) .. controls (1,1.5) and (1.75,2) .. (2,2);

\draw[thick,postaction={decorate}] (2,2) .. controls (2.52,2) and (2.9,1.25) .. (3,1);

\draw[thick,postaction={decorate}] (3.25,0) .. controls (3.2,0.25) and (3.1,0.75) .. (3.0,1);

\draw[thick,postaction={decorate}] (2,3) -- (2,4);

\draw[thick,postaction={decorate}] (2,3) -- (2,2);

\node at (1,2.35) {$q-1$};

\draw[thick,fill] (2.15,3) arc (0:360:1.5mm);

\draw[thick,fill] (3.15,1) arc (0:360:1.5mm);

\draw[thick,postaction={decorate}] (3,1) .. controls (3.5,1.25) and (3.5,3.25) .. (2,3);

\node at (3.4,2.85) {$0$};

\node at (2.1,1) {\rotatebox[origin=c]{35}{$p-1$}};

\node at (2,4.5) {$1-q$};

\node at (0.75,-0.5) {$p-q$};

\node at (3.25,-0.5) {$1-p$};

\node at (6,2) {$=$};

\end{scope}

\begin{scope}[shift={(16,0)}]
%\draw[thin,yellow] (0,0) grid (4,4);

\draw[thick,dashed] (0,4) -- (4,4);

\draw[thick,dashed] (0,0) -- (4,0);

\draw[thick,postaction={decorate}] (0.5,0) .. controls (3.75,1) and (3.75,2) .. (2,2);

\draw[thin] (3.5,0) .. controls (0.25,1) and (0.25,2) .. (2,2);

\draw[thick,postaction={decorate}] (2,2) .. controls (0.45,1.93) and (0.7,1.3) .. (1.25,1);

\draw[thick,postaction={decorate}] (2,0.55) .. controls (1.75,0.65) and (1.5,0.75) .. (1.25,1);

\draw[thick,postaction={decorate}] (3.5,0) .. controls (3,0.15) and (2.5,0.35) .. (2,0.55);

\draw[thick,postaction={decorate}] (2,3) -- (2,4);

\draw[thick,postaction={decorate}] (2,3) -- (2,2);

\node at (2.9,2.55) {$q-1$};

\node at (4,1.5) {$p-q$};

\draw[thick,fill] (2.15,3) arc (0:360:1.5mm);

\draw[thick,fill] (1.4,1) arc (0:360:1.5mm);

\draw[thick,postaction={decorate}] (1.25,1) .. controls (0,1.25) and (0,3.5) .. (2,3);

\node at (0,2) {$0$};

\node at (1.90,1.45) {\rotatebox[origin=c]{-20}{$p-1$}};

\node at (2,4.5) {$1-q$};

\node at (0.5,-0.5) {$p-q$};

\node at (3.5,-0.5) {$1-p$};

\end{scope}

\end{tikzpicture}
    \caption{A local symmetry transformation around a trivalent vertex, i.e., change cyclic order at vertex and $2$ orientation reversals at a vertex and insert the $0$-line in the middle and right figures.}
    \label{fig8_008}
\end{figure}
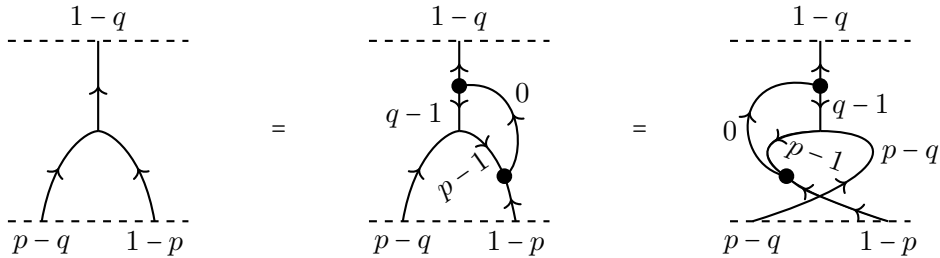

\section{Boundary wall}
\label{section_blue_line}

In this section, we introduce a boundary wall, depicted using a blue line at the base of a cobordism. Also see \cite[Section 2]{IK24_dilogarithms_entropy}. The blue horizontal line in a cobordism is called a boundary wall. It absorbs additive vertices and inserts a labeled floating dot to the left of the network. That is, if two lines weighted $a$ and $b$ merge at an additive vertex,  the vertex gets absorbed by the boundary wall and the boundary wall emits a floating defect with the label $(a+b)\left[\dfrac{a}{a+b}\right]$. See Figure~\ref{fig5_017} left. We can also absorb an additive vertex that is connected to the $0$-edge (this will give a contribution of $0$ so we can draw a vertex with weight $0$ or we do not need to draw the $0$-vertex). See Figure~\ref{fig5_017} right. 

\begin{figure}
    \centering
\begin{tikzpicture}[scale=0.6,decoration={
    markings,
    mark=at position 0.50 with {\arrow{>}}}]

\begin{scope}[shift={(0,0)}]
%\draw[thin,yellow] (0,0) grid (4,5);

\draw[thick,dashed] (0,5) -- (4,5);

\draw[line width = 0.45mm, blue] (0,0) -- (4,0);

\draw[thick,postaction={decorate}] (0.25,0) .. controls (0.5,2) and (1.75,2.5) .. (2,2.5);

\draw[thick,postaction={decorate}] (3.75,0) .. controls (3.5,2) and (2.25,2.5) .. (2,2.5);

\node at (0.25,-0.4) {$a$};

\node at (3.75,-0.4) {$b$};

\node at (2,5.4) {$a+b$};

\draw[thick,postaction={decorate}] (2,2.5) -- (2,5); 

\node at (5,2.5) {$=$};

\end{scope}

%%%%%%%%%%%%%%%%%%%%%%%%%%%%%%%

\begin{scope}[shift={(6,0)}]
%\draw[thin,yellow] (0,0) grid (4,5);

\draw[thick,dashed] (0,5) -- (5,5);

\draw[line width = 0.45mm, blue] (0,0) -- (5,0);

\draw[thick,postaction={decorate}] (4.5,0) -- (4.5,5);

\node at (4.35,-0.4) {$a+b$};

\node at (2,3) {$(a+b)\left[ \dfrac{a}{a+b}\right]$};

\draw[thick,fill] (2.15,1.9) arc (0:360:1.5mm);

\end{scope}

%%%%%%%%%%%%%%%%%%%%%%%%%%%%%%%

\begin{scope}[shift={(13,0)}]
%\draw[thin,yellow] (0,0) grid (9,5);

\draw[thick,dashed] (0.5,5) -- (8.5,5);

\draw[line width = 0.45mm, blue] (0.5,0) -- (8.5,0);

\draw[thick,postaction={decorate}] (1,0) .. controls (1.2,1) and (2.05,2.1) .. (2.25,2.1);
\draw[thick,postaction={decorate}] (2.25,2.1) .. controls (2.45,2.5) and (4.3,3.5) .. (4.5,3.5);

\draw[thick,postaction={decorate}] (8,0) .. controls (7.8,2) and (4.7,3.5) .. (4.5,3.5);

\node at (1,-0.4) {$b$};

\node at (7.85,-0.4) {$1-b$};

\draw[thick,postaction={decorate}] (4.5,1.25) .. controls (4.3,1.75) and (2.45,2.1) .. (2.25,2.1);
\node at (4.10,2.10) {$0$};

\draw[thick,postaction={decorate}] (3.25,0) .. controls (3.45,0.5) and (4.3,1.25) .. (4.5,1.25);
\node at (3.25,-0.4) {$a-b$};

\draw[thick,postaction={decorate}] (5.75,0) .. controls (5.55,0.5) and (4.7,1.25) .. (4.5,1.25);
\node at (5.75,-0.4) {$b-a$};

\draw[thick,postaction={decorate}] (4.5,3.5) -- (4.5,5);
\node at (5,4.1) {$1$};

\node at (9.25,2.5) {$=$};
\end{scope}

%%%%%%%%%%%%%%%%%%%%%%%%%%%%%%%

\begin{scope}[shift={(22.5,0)}]
%\draw[thin,yellow] (0,0) grid (2,5);

\draw[thick,dashed] (0.5,5) -- (2,5);

\draw[line width = 0.45mm, blue] (0.5,0) -- (2,0);

\node at (0.8,2.8) {$\left[ b \right]$};
 
\draw[thick,fill] (0.95,2.1) arc (0:360:1.5mm);

\draw[thick,postaction={decorate}] (1.5,0) -- (1.5,5);

\node at (2,2.5) {$1$};

\end{scope}

\end{tikzpicture}
    \caption{Left: the blue boundary wall absorbs the additive defect on the $a$-line and $b$-line, emitting a floating point labeled $(a+b)\left[ \dfrac{a}{a+b}\right]$. Right: The three additive vertices are absorbed, emitting one vertex with the label $[b]$.  }
    \label{fig5_017}
\end{figure}
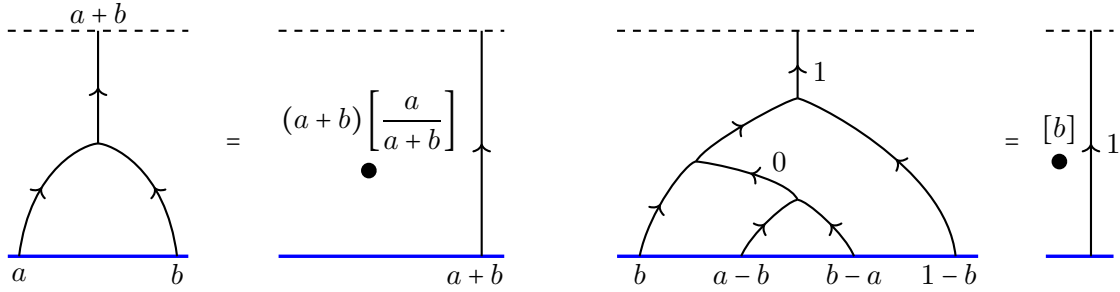

In Figure~\ref{fig5_016}, although the network initially appears complicated, after all additive vertices have been absorbed by the boundary wall, we are left with one line weighted $1$ and one vertex with appropriate weight.

\begin{figure}
    \centering
\begin{tikzpicture}[scale=0.6,decoration={
    markings,
    mark=at position 0.50 with {\arrow{>}}}]
\begin{scope}[shift={(0,0)}]
%\draw[thin,yellow] (0,0) grid (9,5);
\draw[thick,dashed] (0,5) -- (9,5);
\draw[line width = 0.45mm, blue] (0,0) -- (9,0);

\draw[thick,postaction={decorate}] (0.75,0) .. controls (0.95,1) and (1.8,2) .. (2,2);
\draw[thick,postaction={decorate}] (3.25,0) .. controls (3.05,1) and (2.2,2) .. (2,2);
\node at (0.75,-0.35) {$b$};
\node at (3.25,-0.35) {$a-b$};

\draw[thick,postaction={decorate}] (5.75,0) .. controls (5.95,1) and (6.8,2) .. (7,2);
\draw[thick,postaction={decorate}] (8.25,0) .. controls (8.05,1) and (7.2,2) .. (7,2);
\node at (5.75,-0.35) {$b-a$};
\node at (8.25,-0.35) {$1-b$};

\draw[thick,postaction={decorate}] (2,2) .. controls (2.2,2.5) and (4.3,3.5) .. (4.5,3.5);
\draw[thick,postaction={decorate}] (7,2) .. controls (6.8,2.5) and (4.7,3.5) .. (4.5,3.5);
\node at (2.5,3) {$a$};
\node at (6.90,3) {$1-a$};

\draw[thick,postaction={decorate}] (4.5,3.5) -- (4.5,5);
\node at (5,4.15) {$1$};

\node at (10.5,2.5) {$=$};
\end{scope}

\begin{scope}[shift={(12.6,0)}]
%\draw[thin,yellow] (0,0) grid (9,5);

\node at (0.5,2.5) {$a\left[\dfrac{b}{a} \right]$};
\node at (2,2.5) {$+$};
\node at (4.5,2.5) {$(1-a)\left[ \dfrac{1-b}{1-a} \right]$};
\node at (7,2.5) {$+$};
\node at (8,2.5) {$\left[ a \right]$};

\draw[thick,fill] (9.1,2.5) arc (0:360:1.5mm);
\end{scope}

\begin{scope}[shift={(21.5,0)}]
%\draw[thin,yellow] (0,0) grid (2,5);

\draw[thick,dashed] (-9,5) -- (2.5,5);

\draw[line width = 0.45mm, blue] (-9,0) -- (2.5,0);

\draw[thick,postaction={decorate}] (1,0) -- (1,5);

\node at (1.5,2.5) {$1$};
\end{scope}

\end{tikzpicture}
    \caption{The three additive vertices are absorbed by the blue boundary wall, emitting three floating vertices with the weights $a\left[\dfrac{b}{a}\right]$, $(1-a)\left[\dfrac{1-b}{1-a}\right]$, and $[a]$. Since the floating vertices are additive, we can replace three floating vertices with one floating vertex with the label as the sum $a\left[\dfrac{b}{a}\right]+(1-a)\left[\dfrac{1-b}{1-a}\right] + [a] $.  }
    \label{fig5_016}
\end{figure}
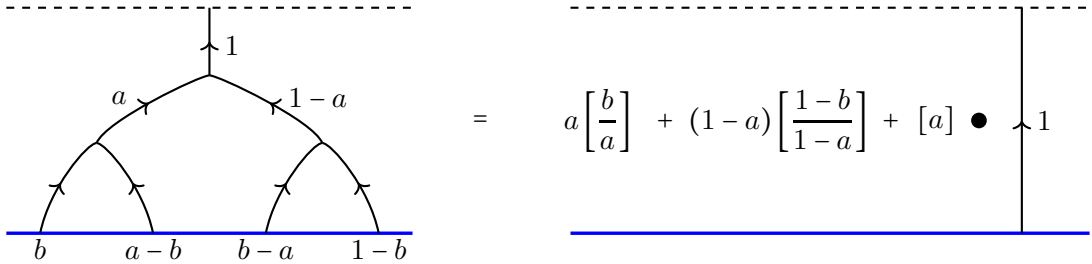

In Figure~\ref{fig5_102}, we can separate the one additive vertex by using appropriate isotopies, resulting in one floating defect with the weight given by the right-hand side of \eqref{eqn_relating_entropy_inner}, which is Shannon entropy $H(p_X)$. 

\begin{figure}
    \centering
\begin{tikzpicture}[scale=0.6,decoration={
    markings,
    mark=at position 0.50 with {\arrow{>}}}]

%%%%%%%%%%%%%%%%%%%%%%%%%%%%%%%%%%%%%%%%%%%%

\begin{scope}[shift={(0,0)}]
%\draw[thin,yellow] (0,0) grid (5,4);

\draw[thick,dashed] (0,3) -- (5,3); 

\draw[line width = 0.45mm,blue] (0,0) -- (5,0); 

\draw[thick,postaction={decorate}] (2,2) -- (2,3);

\draw[thick,postaction={decorate}] (0.5,0) -- (2,2);
\draw[thick,postaction={decorate}] (2.0,0) -- (2,2);
\draw[thick,postaction={decorate}] (4.5,0) -- (2,2);

\node at (3,0.5) {$\cdots$};

\node at (2,3.5) {$1$};

\node at (0.5,-0.5) {$p_1$};
\node at (2.0,-0.5) {$p_2$};
\node at (3.5,-0.5) {$\cdots$};
\node at (4.5,-0.5) {$p_n$};

\node at (6,1.5) {$=$};

\end{scope}

%%%%%%%%%%%%%%%%%%%%%%%%%%%%%%%%%%%%%%%%%%%%

\begin{scope}[shift={(7,0)}]
%\draw[thin,yellow] (0,0) grid (4,3);

\draw[thick,dashed] (0,3) -- (5,3); 

\draw[line width = 0.45mm,blue] (0,0) -- (5,0); 

\draw[thick,fill] (2,1) arc (0:360:1.5mm);

\draw[thick,postaction={decorate}] (3.5,0) -- (3.5,3);

\node at (1.85,1.85) {$H(p_X)$};

\node at (3.5,-0.5) {$1$};

\end{scope}

%%%%%%%%%%%%%%%%%%%%%%%%%%%%%%%%%%%%%%%%%%%%

\begin{scope}[shift={(12,0)}]
\node at (5,2.1) {$H(p_X) = -\displaystyle{\sum_{i=1}^n} p_i \log p_i$};

\node at (5,0.65) {$p_1+p_2+\ldots + p_n=1$};
\end{scope}

%%%%%%%%%%%%%%%%%%%%%%%%%%%%%%%%%%%%%%%%%%%%

\end{tikzpicture}
    \caption{If $p_1+ p_2 + \ldots + p_n =1$, the blue boundary wall absorbs all additive vertices, giving us entropy.}
    \label{fig5_102}
\end{figure}
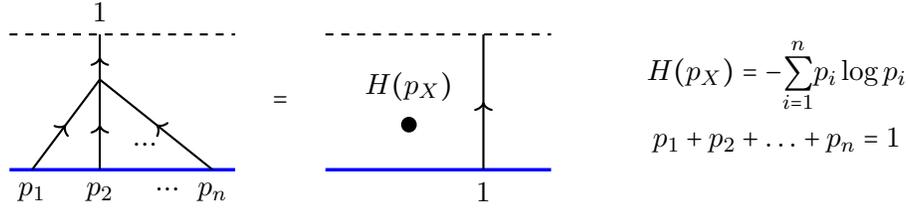

In Figure~\ref{fig1_001}, we have a more global picture of our cobordisms. That is, inside the purple dashed ellipse, we have classical physics, involving entropy and information theory. In the larger blue dotted ellipse, we have the theory of probability. When networks turn around, pointing downwards, and when we can rescale the additive lines, we have a larger depiction in the theory of information theory. We are yet to interpret the additive and multiplicative networks outside the blue ellipse, but we believe these observables may have an important interpretation in (quantum) physics.

\begin{figure}
    \centering
\begin{tikzpicture}[scale=0.6,decoration={
    markings,
    mark=at position 0.65 with {\arrow{>}}}]
\begin{scope}[shift={(0,0)}]
%\draw[thin,yellow] (0,0) grid (5,9);
%\draw[thin,gray] (5,0) grid (10,9);
%\draw[thin,yellow] (10,0) grid (15,9);
%\draw[thin,gray] (15,0) grid (20,9);
%\draw[thin,yellow] (20,0) grid (25,9);

\draw[thick,dashed] (0,9) -- (25.5,9);
\draw[thick,dashed] (0,0) -- (25.5,0);

\draw[thick,postaction={decorate}] (2,0) .. controls (2.1,1) and (2.4,1.5) .. (3,2.5);

\draw[thick,postaction={decorate}] (4,0) .. controls (3.9,1) and (3.6,1.5) .. (3,2.5);

\draw[thick,postaction={decorate}] (3,2.5) .. controls (3.1,2.75) and (3.4,3.0) .. (4,3.5);

\draw[thick,postaction={decorate}] (8,0) .. controls (7,2.5) and (5.7,3) .. (4,3.5);

\draw[thick,postaction={decorate}] (4,3.5) -- (4,4.5);

\draw[thick] (4,4.5) -- (4,5);

\draw[thick,postaction={decorate}] (6,0) .. controls (6.5,2.25) and (8,2.5) .. (9,2.5);

\draw[thick,postaction={decorate}] (10,0) .. controls (9.9,1) and (9.6,1.5) .. (9,2.5);

\draw[thick,postaction={decorate}] (9,2.5) -- (9,4.75);

\draw[line width=0.55mm,dotted,red!100!green!36!blue!100] (6.25,2.60) ellipse (5 and 2);

\draw[line width=0.63mm, densely dotted,blue] (6.25,3.42) ellipse (6 and 3.18);

\draw[thick,postaction={decorate}] (4,5) .. controls (3.9,5.1) and (3.5,5.2) .. (3,6.5);

\draw[thick,postaction={decorate}] (4,5) .. controls (4.1,5.00) and (4.5,5.10) .. (5,5.40);

\draw[thick,postaction={decorate}] (3,6.5) .. controls (2.8,7.5) and (2.3,6.75) .. (2,9);

\draw[thick,postaction={decorate}] (9,4.75) .. controls (7,5) and (5,5.5) .. (4.5,6.5);

\draw[thick,postaction={decorate}] (5,5.40) .. controls (5.2,5.6) and (6.5,6) .. (8,9);

\draw[thick,postaction={decorate}] (4.5,6.5) .. controls (4.6,7.25) and (4.7,7.25) .. (6,9);

\draw[thick,postaction={decorate}] (9,4.75) .. controls (11,6.5) and (10,7.5) .. (7.45,8);

\draw[thick,postaction={decorate}] (9.8,7) .. controls (10.5,7.2) and (11,7.5) .. (12,9);

\draw[line width=0.50mm,red] (4,9) decorate [decoration={snake,amplitude=0.20mm}] {.. controls (4.25,6) and (9,8.5) .. (12,5.5)};

\draw[line width=0.50mm,red] (12,5.5) decorate [decoration={snake,amplitude=0.20mm}] {.. controls (13,4.75) and (13.5,4) .. (14,0)};

\draw[line width=0.50mm,red] (3.65,8.1) -- (4.15,8.25);
\node at (3.6,8.55) {$c_2$};

\draw[line width=0.50mm,red] (13.45,1.25) -- (13.95,1.25);
\node at (13.4,1.75) {$c_1$};

\draw[thick,postaction={decorate}] (10.75,7.5) .. controls (13.5,7) and (21,6) .. (22,0);

\draw[line width=0.50mm,red] (12,5.5) decorate [decoration={snake,amplitude=0.20mm}] {.. controls (12.5,6) and (13,6.5) .. (14,9)};

\draw[line width=0.50mm,red] (13.1,8) -- (13.6,8);
\node at (14.65,8.1) {$c_2^{-1}c_1$};

\draw[thick,postaction={decorate}] (16,0) .. controls (16.5,2) and (19.5,2) .. (20,0);

\draw[line width=0.50mm,red] (18,0) decorate [decoration={snake,amplitude=0.20mm}] {.. controls (18.5,2) and (19,7) .. (20,9)};

\draw[line width=0.50mm,red] (19,7.5) -- (19.5,7.5);
\node at (20,7.5) {$c_3$};

\draw[thick,<-] (17,3.5) arc (0:360:1);
\node at (15,4.5) {$p_j$};

\begin{scope}[shift={(-0.25,0)}]
\draw[thick,<-] (24,6) arc (0:180:1.5);
\draw[thick,<-] (21,6) arc (180:360:1.5);

\draw[thick,postaction={decorate}] (22.75,4.52) arc (-110:-260:1.5);
\node at (20.85,4.9) {$p_k$};
\node at (22.35,5.75) {$p_{\ell}$};
\end{scope}
\node at (24.40,4.60) {$p_k+p_{\ell}$};

\node at (2,-0.75) {$p_1$};
\node at (4,-0.75) {$p_2$};
\node at (6,-0.75) {$\ldots$};
\node at (8,-0.75) {$\ldots$};
\node at (10,-0.75) {$p_n$};

\node at (16,-0.75) {$p_{n+1}$};
\node at (20,-0.75) {$\ldots$};
\node at (22,-0.75) {$p_m$};
\node at (24,-0.75) {$\ldots$};

\draw[thick,->] (1,-2) .. controls (0.5,0) and (0.5,1.5) .. (2.25,2.5);
\node at (2,-2.5) {probability for};
\node at (2.25,-3.5) {classical physics};
\node at (6.25,-3.5) {$\displaystyle{\sum_{i=1}^n} \: p_i=1$};

\draw[thick,->] (1.5,10) .. controls (0,8) and (0,6) .. (2,4.5);
\node at (1.60,10.5) {probability};

\end{scope}
\end{tikzpicture}
    \caption{The diagram shows a morphism (cobordism). Small purple ellipse encloses a part of the morphism that admits an interpretation in earlier categorical approaches to entropy. Larger blue ellipse encloses a part of the diagram that can be interpreted in classical probability. Parts of the diagram that include wavy (red) multiplicative lines, U-turns of additive lines and additive lines pointing down or closing upon themselves make sense in a larger category, but not in the familiar classical probability, since rescaling is involved and one could remove the renormalization to $1$. Multiplicative lines scale labels of additive lines.}
    \label{fig1_001}
\end{figure}
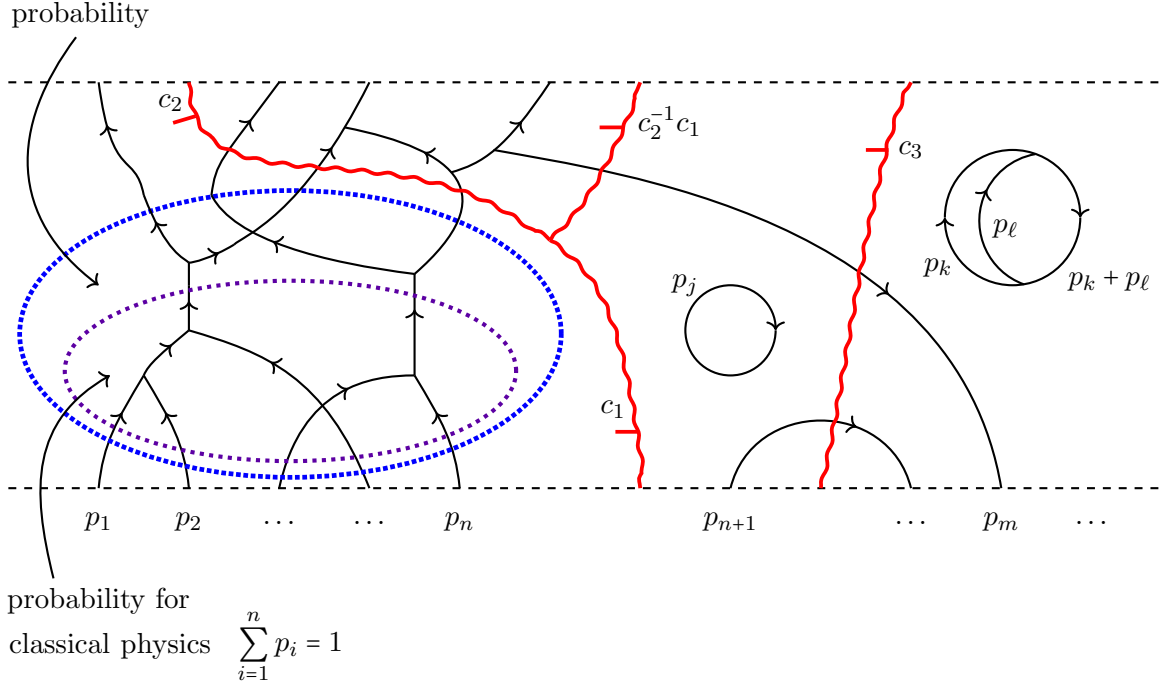

\section{Connections from diagrammatics to \texorpdfstring{$J(\kk)$}{J(k)}}
\label{section:diagrammatics_vector_space_Jk}

Now, we may want to ask ourselves how do we go from the diagrammatics in Section~\ref{section:Shannon_entropy} to elements in the vector space $J(\kk)$. 
Given a cobordism (diagram) $\gamma$, 
define $\jmath:\textsf{Diagrams} \rightarrow J(\kk)$, where 
\begin{equation}
\jmath(\gamma) = \sum_{p\in \mathrm{add}(\gamma)} s(p) \omega(p,\gamma)\langle a_p, b_p\rangle \ \in \ J(\mathbf{k}),
\end{equation}
where we sum over all the additive vertices in $\gamma$. The coefficient $s(p)$ equals $1$ if the vertex $p$ is a merge, and $-1$ if the additive vertex $p$ is a split. We denote $a_p$ and $b_p$ the two lines weighted $a$ and $b$, respectively, at point $p$. The coefficient $\omega(p,\gamma)$ is the product of all the multiplicative red lines as we virtually move the additive vertex $p$ far to the left as possible, crossing over any multiplicative lines and taking into account the conormal direction of these lines. 

The value $\jmath(\gamma)$ encodes contribution from additive vertices. It computes $2$-cocycles and it is an invariant under isotopies.
So to a diagrammatic morphism $\gamma$, assign element $\jmath(\gamma)$ of $J(\kk)$. 
Generating objects of this monoidal category are labelled by $a\in \kk$ (endpoints of additive networks) and $c\in \kk^{\ast}$ (endpoints of multiplicative networks). 

From \cite[Section 5]{IK24_dilogarithms_entropy}, we have the following: 
\begin{proposition}[Im--Khovanov]
The invariant $\jmath(\gamma)$ depends only on the source and target objects of the morphism $\gamma$.
\end{proposition}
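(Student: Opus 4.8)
\textit{Proof sketch / plan.}
The plan is to show that $\jmath$ is unchanged by every isotopy and every defining local relation of the diagrammatic category, and then that every morphism $\gamma\colon S\to T$ can be brought, through such moves, to one fixed normal-form diagram $\gamma_{S,T}$ built only from the boundary data $S$ and $T$; since $\jmath(\gamma)=\jmath(\gamma_{S,T})$, the invariant will then depend only on source and target. It is convenient to dispose of the multiplicative part first. Because $\omega(p,\gamma)$ is, by definition, the product of the weights of the red lines crossed (with conormal signs) when the vertex $p$ is slid to the far left, the moves involving only multiplicative lines --- sliding a red line past an additive vertex (Figure~\ref{fig_1002}, bottom right), merging two red lines (Figure~\ref{fig_1001}, bottom middle), inserting a red dot (Figure~\ref{fig_1001}, bottom right) --- change each $\omega(p,\gamma)$ exactly as the weight-multiplication rule prescribes, hence leave each summand $s(p)\,\omega(p,\gamma)\langle a_p,b_p\rangle$ fixed. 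So it suffices to treat the purely additive part of $\gamma$, for which $\jmath(\gamma)=\sum_{p\in\mathrm{add}(\gamma)}s(p)\langle a_p,b_p\rangle$.

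For purely additive diagrams, invariance under the generating relations is a short check matching each relation with an axiom of $J(\kk)$ from Section~\ref{subsection:Cathelineau_k_vs}. Isotopies that do not move one additive vertex past another, and virtual crossings (Figure~\ref{fig_1001}, top right), leave $\jmath$ untouched on the nose. Rerouting the two incoming strands at a vertex, and the symmetry move of Figure~\ref{fig_1002} (top right), use $\langle a,b\rangle=\langle b,a\rangle$, i.e.\ \eqref{item:symm}. The key move is the reassociation of two consecutive merges, Figure~\ref{fig_1002} (top left): the left side contributes $\langle a,b\rangle+\langle a+b,c\rangle$ and the right side $\langle b,c\rangle+\langle a,b+c\rangle$, and these agree precisely by the $2$-cocycle relation \eqref{item:Cath_cocycle}; the dual move for two consecutive splits and the mixed merge/split reorderings follow from the same identity together with the sign rule $s(\text{merge})=+1$, $s(\text{split})=-1$. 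The scaling move of Figure~\ref{fig_1002} (bottom left) is \eqref{item:scale}. Finally, the $0$-line moves (Figures~\ref{fig7_003}, \ref{fig7_004}, \ref{fig8_004}) and the black-dot/orientation-reversal moves (Figure~\ref{fig8_008}) are handled by $\langle a,0\rangle=\langle 0,a\rangle=0$ and $\langle a,-a\rangle=0$ from the first lemma of Section~\ref{subsection:Cathelineau_k_vs}, since a vertex adjacent to a $0$-edge and a pair of reversal dots bounding a reversed arc contribute terms that vanish or cancel.

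With relation-invariance established, the remaining point is normalization. Put $\gamma$ in Morse position with respect to the height (vertical) coordinate. First erase every closed additive component: on a connected closed trivalent additive graph the split and merge vertices can be slid together by the reordering moves above until the contributions cancel in pairs (already visible for a bigon or a theta graph, where $\jmath=-\langle a,b\rangle+\langle a,b\rangle=0$), so closed components may be removed without changing $\jmath$. What remains connects $S$ to $T$; using virtual crossings to permute strands and the reassociation moves to push all merges below all splits, one reaches the standard diagram $\gamma_{S,T}$ that merges the points of $S$ left-to-right into a single strand and then splits that strand left-to-right into the points of $T$, with the multiplicative skeleton carried along and affecting $\jmath$ only through the $\omega$-weights already accounted for. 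Since $\gamma_{S,T}$ is determined by $S$ and $T$ alone, so is $\jmath(\gamma)$.

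The main obstacle is this normalization step: one must verify that the listed local moves really suffice to bring an arbitrary diagram --- with nested closed subgraphs, $0$-edges, reversal dots, and red lines threading through merges --- to the standard form, and that the bookkeeping of the sign $s(p)$ and of $\omega(p,\gamma)$ stays consistent at every move (this is where the conormal-direction convention and the merge-versus-split sign must be tracked with care). The algebraic input, by contrast, is light: it is exactly relations \eqref{item:symm}--\eqref{item:Cath_cocycle} that make each reordering move $\jmath$-invariant, which is also the reason the telescoping that yields Shannon entropy in Proposition~\ref{thm_entropy_inner_prod} is independent of the order in which the strands merge.
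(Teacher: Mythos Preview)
The paper does not supply its own proof of this proposition: it is quoted from \cite[Section~5]{IK24_dilogarithms_entropy} and left unproved here, so there is no argument in the present paper to compare your sketch against.

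That said, your plan is the natural one and matches the spirit of how such statements are established. The first half (checking that $\jmath$ is preserved by each generating relation, with each diagrammatic move matched to one of the axioms \eqref{item:symm}--\eqref{item:Cath_cocycle} of $J(\kk)$ or to $\langle a,0\rangle=\langle a,-a\rangle=0$) is correct and is exactly the content of the isotopy discussion around Figures~\ref{fig_1001}--\ref{fig_1002}. The second half, reducing an arbitrary $\gamma$ to a canonical $\gamma_{S,T}$, is where the real work lies, and you are right to flag it. Your normal form ``merge $S$ left-to-right into one strand, then split into $T$'' tacitly assumes the additive boundary data of $S$ and $T$ have the same total weight and that the red skeleton can be threaded through consistently; with $0$-lines and orientation-reversal dots available this can be arranged, but the argument that closed additive components contribute zero and that merges can always be pushed below splits through the given move-set deserves a careful inductive treatment rather than the bigon/theta examples you cite. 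None of this is wrong, but as written it is a plan rather than a proof; filling in the normalization step is precisely what \cite{IK24_dilogarithms_entropy} does.
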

Because $\jmath(\gamma)$ only depends on the source and the target objects of morphism $\gamma$, and not on any of the interior network, $\jmath$ is an invariant. 
We leave it as future work to explore other invariants of these cobordisms.

\section{Diagrammatics of conditional entropy}

Let $H(Y|X)$ be the conditional entropy of a discrete random variable $Y$ given the random variable $X$.
In terms of probability, conditional entropy is defined to be 
\begin{equation}
\label{eqn:cond_entropy}
H(X|Y) = - \sum_{i=1}^{n} \sum_{j=1}^{m}
P(X_i,Y_j) \log P(X_i|Y_j), 
\end{equation}
where 
\begin{equation}
\label{eqn:joint_entropy}
H(X,Y) = H(Y|X) +  H(X).
\end{equation}
This is reminiscent of an additive version of the ordinary chain rule. 

\begin{figure}
    \centering
\begin{tikzpicture}[scale=0.6,decoration={
    markings,
    mark=at position 0.60 with {\arrow{>}}}]

%%%%%%%%%%%%%%%%%%%%%%%%%%%%%%%%%%%%%%%%%

\begin{scope}[shift={(0,0)}]
%\draw[thin,green] (0,5) grid (8,10);
%\draw[thin,yellow] (0,0) grid (8,5);

\begin{scope}[shift={(0.15,0)}]
\draw[thick, dashed] (-0.5,4) -- (3.5,4);

\draw[thick, dashed] (0,2) -- (4.2,2);

\draw[thick, dashed] (0.5,0) -- (4.5,0);

\draw[thick, dashed] (0.5,0) -- (-0.5,4);

\draw[thick, dashed] (2.5,0) -- (1.5,4);

\draw[thick, dashed] (4.5,0) -- (3.5,4);

\node at (1.5,-0.5) {$p_1$};

\node at (3.5,-0.5) {$p_2$};

\node at (-0.75,3) {$q_2$};

\node at (-0.25,1) {$q_1$};
\end{scope}

\begin{scope}[shift={(-0.5,-0.25)}]
\draw[thick,postaction={decorate}] (1,3) -- (2.25,4.75);

\draw[thick,postaction={decorate}] (3,3) -- (2.25,4.75);
\end{scope}

\begin{scope}[shift={(0.75,0.25)}]
\draw[thick,postaction={decorate}] (0.5,1) -- (2,3.5);

\draw[thick,postaction={decorate}] (3.25,1) -- (2,3.5);
\end{scope}

\node at (0.6,5.5) {$p_{12}+p_{22}$};

\node at (4.15,5) {$p_{11} + p_{21}$};

\node at (0.9,2.4) {$p_{12}$};

\node at (2.7,2.4) {$p_{22}$};

\node at (1.35,0.8) {$p_{11}$};

\node at (3.75,0.9) {$p_{21}$};

\draw[thick,postaction={decorate}] (1.75,4.5) -- (2.5,7.5);

\draw[thick,postaction={decorate}] (2.75,3.75) -- (2.5,7.5);

\draw[thick,postaction={decorate}] (2.5,7.5) -- (2.5,9);

\node at (7.5,9) {$p_{11}+p_{21}+p_{12}+p_{22} = 1$};

\begin{scope}[shift={(0,6.5)}]
\draw[thick, dashed] (-0.5,4) -- (3.5,4);

\draw[thick, dashed] (0,2) -- (4,2);

\draw[thick, dashed] (0.5,0) -- (4.5,0);

\draw[thick, dashed] (0.5,0) -- (-0.5,4);

\draw[thick, dashed] (2.5,0) -- (1.5,4);

\draw[thick, dashed] (4.5,0) -- (3.5,4);

\end{scope}
 
\end{scope}

%%%%%%%%%%%%%%%%%%%%%%%%%%%%%%%%%%%%%%%%%

\end{tikzpicture}
    \caption{Joint entropy for two random variables $X$ and $Y$ with probabilities $p_1$ and $p_2$ and $q_1$ and $q_2$, respectively. }
    \label{fig8_003}
\end{figure}
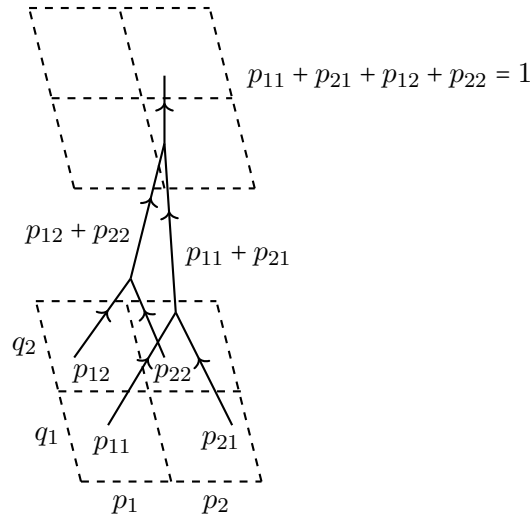

\begin{figure}
    \centering
\begin{tikzpicture}[scale=0.6,decoration={
    markings,
    mark=at position 0.60 with {\arrow{>}}}]

%%%%%%%%%%%%%%%%%%%%%%%%%%%%%%%%%%%%%%%%%

\begin{scope}[shift={(0,0)}]
%\draw[thin,green] (0,5) grid (8,10);
%\draw[thin,yellow] (0,0) grid (8,5);

\begin{scope}[shift={(0.15,0)}]
\draw[thick, dashed] (-0.5,4) -- (5.5,4);

\draw[thick, dashed] (0,2) -- (6,2);

\draw[thick, dashed] (0.5,0) -- (6.5,0);

\draw[thick, dashed] (0.5,0) -- (-0.5,4);

\draw[thick, dashed] (2.5,0) -- (1.5,4);

\draw[thick, dashed] (4.5,0) -- (3.5,4);

\draw[thick, dashed] (6.5,0) -- (5.5,4);

\node at (1.5,-0.5) {$p_1$};

\node at (3.5,-0.5) {$p_2$};

\node at (5.5,-0.5) {$p_3$};

\node at (-0.75,3) {$q_2$};

\node at (-0.25,1) {$q_1$};
\end{scope}

\begin{scope}[shift={(-0.5,-0.25)}]
\draw[thick,postaction={decorate}] (1,3) -- (4.25,7);

\draw[thick,postaction={decorate}] (5,3) -- (4.25,7);

\draw[thick,postaction={decorate}] (3,3) -- (2.5,4.85);
\end{scope}

\begin{scope}[shift={(0.75,0.25)}]
\draw[thick,postaction={decorate}] (1,1) -- (4.25,4.75);

\draw[thick,postaction={decorate}] (5,1) -- (4.25,4.75);

\draw[thick,postaction={decorate}] (3,1) -- (2.5,2.75);
\end{scope}

\draw[thick,postaction={decorate}] (3.75,6.75) -- (3.5,9);

\draw[thick,postaction={decorate}] (5,5) -- (5.5,7.25);

\node at (-3,9) {$\mathsf{y}_2:= p_{12}+p_{22}+p_{32}$};

\draw[thick,->] (0,9) -- (3,9);

\node at (1.5,5.75) {$p_{12}+p_{22}$};

\node at (8,4.25) {$p_{11} + p_{21}$};

\draw[thick,<-] (4.5,4.25) -- (6.5,4.25);

\draw[thick,<-] (5.75,7.25) -- (7.5,7.25);

\node at (10.5,7.25) {$\mathsf{y}_1:= p_{11}+p_{21}+p_{31}$};

\node at (1.25,2.75) {$p_{12}$};

\node at (2.90,3.5) {$p_{22}$};

\node at (4.9,2.5) {$p_{32}$};

\node at (1.35,1) {$p_{11}$};

\node at (3.35,1) {$p_{21}$};

\node at (5.35,1) {$p_{31}$};

\begin{scope}[shift={(0.25,6.5)}]
\draw[thick, dashed] (-0.5,4) -- (5.5,4);

\draw[thick, dashed] (0,2) -- (6,2);

\draw[thick, dashed] (0.5,0) -- (6.5,0);

\draw[thick, dashed] (0.5,0) -- (-0.5,4);

\draw[thick, dashed] (2.5,0) -- (1.5,4);

\draw[thick, dashed] (4.5,0) -- (3.5,4);

\draw[thick, dashed] (6.5,0) -- (5.5,4);
\end{scope}
 
\end{scope}

%%%%%%%%%%%%%%%%%%%%%%%%%%%%%%%%%%%%%%%%%

\end{tikzpicture}
    \caption{A diagrammatic for the conditional entropy $H(X|Y) = H(X,Y)-H(Y)$. That is, we group along $Y$'s, but we do not group the $Y$'s themselves. }
    \label{fig7_005}
\end{figure}
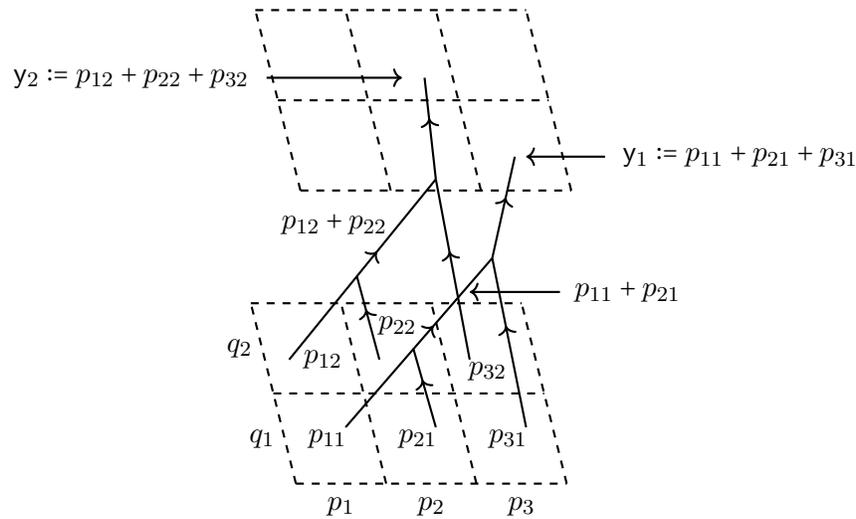

Diagram associated to Proposition~\ref{prop_joint_entropy_two} is Figure~\ref{fig8_003}. 
In order to understand conditional entropy in terms of joint entropy~\eqref{eqn:joint_entropy} using diagrammatics, we provide the diagrammatics for 
\begin{equation}
H(Y|X) =  H(X,Y) - H(X).
\end{equation}
In Figure~\ref{fig7_005}, we give an example of how conditional entropy of two random variables can be interpreted. 
In Figure~\ref{fig7_001}, we provide two different ways to obtain entropy of two random variables: $H(X,Y) = H(X|Y)+H(Y)$ and $H(X,Y) = H(Y|X)+H(X)$. 
In Figure~\ref{fig7_002}, the morphism simplifies to the identity cobordism.

\begin{figure}
    \centering
\begin{tikzpicture}[scale=0.6,decoration={
    markings,
    mark=at position 0.60 with {\arrow{>}}}]

%%%%%%%%%%%%%%%%%%%%%%%%%%%%%%%%%%%%%%%%%

\begin{scope}[shift={(0,0)}]
%\draw[thin,yellow] (0,0) grid (10,5);

\draw[thick, dashed] (0,5) -- (10,5);

\draw[thick, dashed] (0,0) -- (10,0);

\draw[thick,postaction={decorate}] (1,0) -- (7,3);

\draw[thick,postaction={decorate}] (9,0) -- (7,3);

\draw[thick,postaction={decorate}] (7,3) -- (7,5);

\draw[thick,postaction={decorate}] (3.5,0) -- (2.85,0.9);

\draw[thick,postaction={decorate}] (5,0) -- (4,1.5);

\draw[thick,postaction={decorate}] (7.5,0) -- (5.85,2.44);

\node at (7,5.5) {$1$};

\node at (1,-0.5) {$p_1$};

\node at (3.5,-0.5) {$p_2$};

\node at (5,-0.5) {$p_3$};

\node at (6.5,-0.5) {$\ldots$};

\node at (9,-0.5) {$p_k$};

\node at (8.5,2) {$X$};
\end{scope}

%%%%%%%%%%%%%%%%%%%%%%%%%%%%%%%%%%%%%%%%%

\begin{scope}[shift={(12,0)}]
%\draw[thin,yellow] (0,0) grid (10,5);

\draw[thick, dashed] (0,5) -- (10,5);

\draw[thick, dashed] (0,0) -- (10,0);

\draw[thick,postaction={decorate}] (1,0) -- (7,3);

\draw[thick,postaction={decorate}] (9,0) -- (7,3);

\draw[thick,postaction={decorate}] (7,3) -- (7,5);

\draw[thick,postaction={decorate}] (3.5,0) -- (2.85,0.9);

\draw[thick,postaction={decorate}] (5,0) -- (4,1.5);

\draw[thick,postaction={decorate}] (7.5,0) -- (5.85,2.44);

\node at (7,5.5) {$1$};

\node at (1,-0.5) {$q_1$};

\node at (3.5,-0.5) {$q_2$};

\node at (5,-0.5) {$q_3$};

\node at (6.5,-0.5) {$\ldots$};

\node at (9,-0.5) {$q_m$};

\node at (8.5,2) {$Y$};
\end{scope}

%%%%%%%%%%%%%%%%%%%%%%%%%%%%%%%%%%%%%%%%%

\begin{scope}[shift={(3,-12)}]
%\draw[thin,yellow] (0,10) grid (8,15);
%\draw[thin,green] (0,5) grid (8,10);
%\draw[thin,yellow] (0,0) grid (8,5);

\begin{scope}[shift={(0.15,0)}]
\draw[thick, dashed] (-0.5,4) -- (5.5,4);

\draw[thick, dashed] (0,2) -- (6,2);

\draw[thick, dashed] (0.5,0) -- (6.5,0);

\draw[thick, dashed] (0.5,0) -- (-0.5,4);

\draw[thick, dashed] (2.5,0) -- (1.5,4);

\draw[thick, dashed] (4.5,0) -- (3.5,4);

\draw[thick, dashed] (6.5,0) -- (5.5,4);

\node at (1.75,-0.5) {$p_1$};

\node at (3.75,-0.5) {$p_2$};

\node at (5.75,-0.5) {$p_3$};

\node at (-0.75,2.75) {$q_2$};

\node at (-0.25,1) {$q_1$};
\end{scope}

\begin{scope}[shift={(-0.5,-0.25)}]
\draw[thick,postaction={decorate}] (1,3) -- (4.25,7);

\draw[thick,postaction={decorate}] (5,3) -- (4.25,7);

\draw[thick,postaction={decorate}] (3,3) -- (2.5,4.85);
\end{scope}

\begin{scope}[shift={(0.75,0.25)}]
\draw[thick,postaction={decorate}] (1,1) -- (4.25,5);

\draw[thick,postaction={decorate}] (5,1) -- (4.25,5);

\draw[thick,postaction={decorate}] (3,1) -- (2.5,2.85);
\end{scope}

\draw[thick,postaction={decorate}] (3.75,6.75) -- (4.5,8);

\draw[thick,postaction={decorate}] (5,5.25) -- (4.5,8);

\draw[thick,postaction={decorate}] (4.5,8) -- (4.5,10);

\node at (5,9) {$1$};

\node at (1.25,7.50) {$\mathsf{y}_2:= p_{12}+p_{22}+p_{32}$};

\node at (1.5,6) {$p_{12}+p_{22}$};

\node at (8,4.5) {$p_{11} + p_{21}$};

\draw[thick,<-] (4.5,4.5) -- (6.5,4.5);

\node at (7.75,6.5) {$\mathsf{y}_1:= p_{11}+p_{21}+p_{31}$};

\node at (1.25,2.75) {$p_{12}$};

\node at (2.90,3.5) {$p_{22}$};

\node at (4.9,2.5) {$p_{32}$};

\node at (1.35,1) {$p_{11}$};

\node at (3.35,1) {$p_{21}$};

\node at (5.35,1) {$p_{31}$};

\draw[thick] (11,7.5) -- (11,10);

\draw[thick] (11,10) -- (10.5,10);

\draw[thick] (11,7.5) -- (10.5,7.5);

\node at (12.25,8.75) {$H(Y)$};

\draw[thick] (11,0) -- (11,7);

\draw[thick] (10.5,7) -- (11,7);

\draw[thick] (10.5,0) -- (11,0);

\node at (13.75,3.5) {$\underbrace{H(X,Y)-H(Y)}_{H(X|Y)}$};

\end{scope}

%%%%%%%%%%%%%%%%%%%%%%%%%%%%%%%%%%%%%%%%%

\begin{scope}[shift={(3,-24)}]
%\draw[thin,yellow] (0,10) grid (8,15);
%\draw[thin,green] (0,5) grid (8,10);
%\draw[thin,yellow] (0,0) grid (8,5);

\begin{scope}[shift={(0.15,0)}]
\draw[thick, dashed] (-0.5,4) -- (5.5,4);

\draw[thick, dashed] (0,2) -- (6,2);

\draw[thick, dashed] (0.5,0) -- (6.5,0);

\draw[thick, dashed] (0.5,0) -- (-0.5,4);

\draw[thick, dashed] (2.5,0) -- (1.5,4);

\draw[thick, dashed] (4.5,0) -- (3.5,4);

\draw[thick, dashed] (6.5,0) -- (5.5,4);

\node at (1.75,-0.5) {$p_1$};

\node at (3.75,-0.5) {$p_2$};

\node at (5.75,-0.5) {$p_3$};

\node at (-0.75,2.75) {$q_2$};

\node at (-0.25,1) {$q_1$};
\end{scope}

\draw[thick,postaction={decorate}] (2,1) -- (1.5,3.5);

\draw[thick,postaction={decorate}] (0.75,2.5) -- (1.5,3.5);

\begin{scope}[shift={(2,0)}] 
\draw[thick,postaction={decorate}] (2,1) -- (1.5,3.5);

\draw[thick,postaction={decorate}] (0.75,2.5) -- (1.5,3.5);
\end{scope}

\begin{scope}[shift={(4,0)}] 
\draw[thick,postaction={decorate}] (2,1) -- (1.5,3.5);

\draw[thick,postaction={decorate}] (0.75,2.5) -- (1.5,3.5);
\end{scope}

\draw[thick,postaction={decorate}] (1.5,3.5) -- (3,6);

\draw[thick,postaction={decorate}] (3.5,3.5) -- (3,6);

\draw[thick,postaction={decorate}] (3,6) -- (4.5,8);

\draw[thick,postaction={decorate}] (5.5,3.5) -- (4.5,8);

\draw[thick,postaction={decorate}] (4.5,8) -- (4.5,10);

\node at (5,9) {$1$};

\node at (2.4,7.0) {$\mathsf{x}_1 + \mathsf{x}_2$};

\node at (0.25,5) {$\mathsf{x}_1:= p_{11}+p_{12}$};

\node at (8,4.5) {$\mathsf{x}_2 := p_{21} + p_{22}$};

\draw[thick,<-] (3.5,4.5) -- (6.0,4.5);

\node at (6.9,6.5) {$\mathsf{x}_3:= p_{31}+p_{32}$};

\node at (0.65,3.35) {$p_{12}$};

\node at (2.65,3.35) {$p_{22}$};

\node at (4.65,3.35) {$p_{32}$};

\node at (1.5,0.75) {$p_{11}$};

\node at (3.5,0.75) {$p_{21}$};

\node at (5.5,0.75) {$p_{31}$};

\draw[thick] (11,5.25) -- (11,10);

\draw[thick] (11,10) -- (10.5,10);

\draw[thick] (11,5.25) -- (10.5,5.25);

\node at (12.25,7.625) {$H(X)$};

\draw[thick] (11,0) -- (11,5);

\draw[thick] (10.5,5) -- (11,5);

\draw[thick] (10.5,0) -- (11,0);

\node at (13.75,2.5) {$\underbrace{H(X,Y)-H(X)}_{H(Y|X)}$};

\end{scope}

%%%%%%%%%%%%%%%%%%%%%%%%%%%%%%%%%%%%%%%%%

\end{tikzpicture}
    \caption{Let $\{p_1,p_2, p_3 \}$ be probabilities for the random variable $X$ and $\{q_1,q_2,q_3\}$ be probabilities for the random variable $Y$. Let $p_{ij}:=(p_i, q_j)$. Top left: random variable $X$ with probabilities $\{p_1, p_2, \ldots, p_k\}$. Top right: random variable $Y$ with probabilities $\{q_1, q_2, \ldots, q_m\}$.  
    Middle: absorbing all the additive vertices gives us $H(X,Y)$.
    Absorbing all the additive vertices except the top one gives $H(X,Y)- H(Y)$.  Bottom: absorbing all the additive vertices except the top one gives $H(X,Y) - H(X)$.}
    \label{fig7_001}
\end{figure}
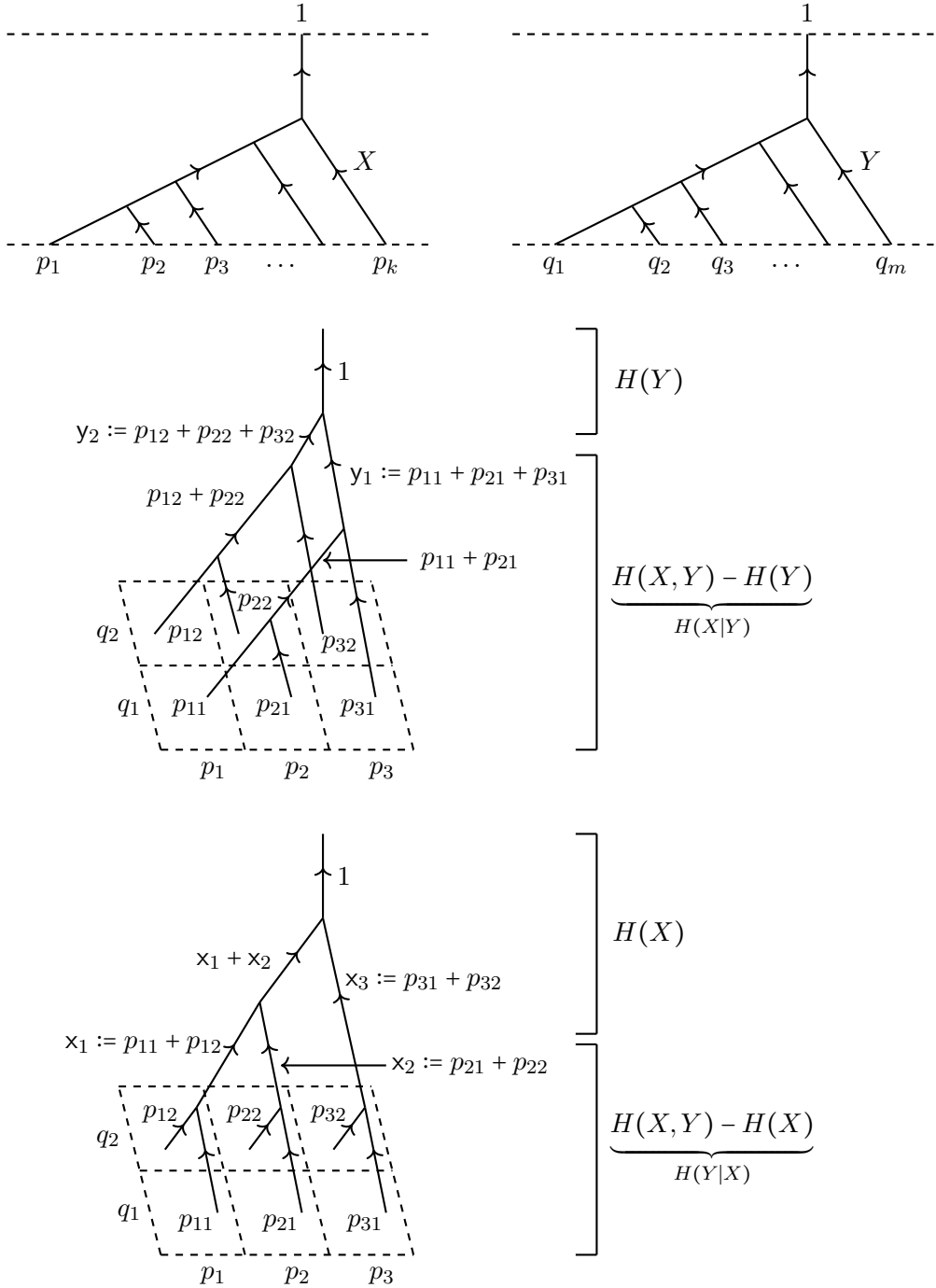

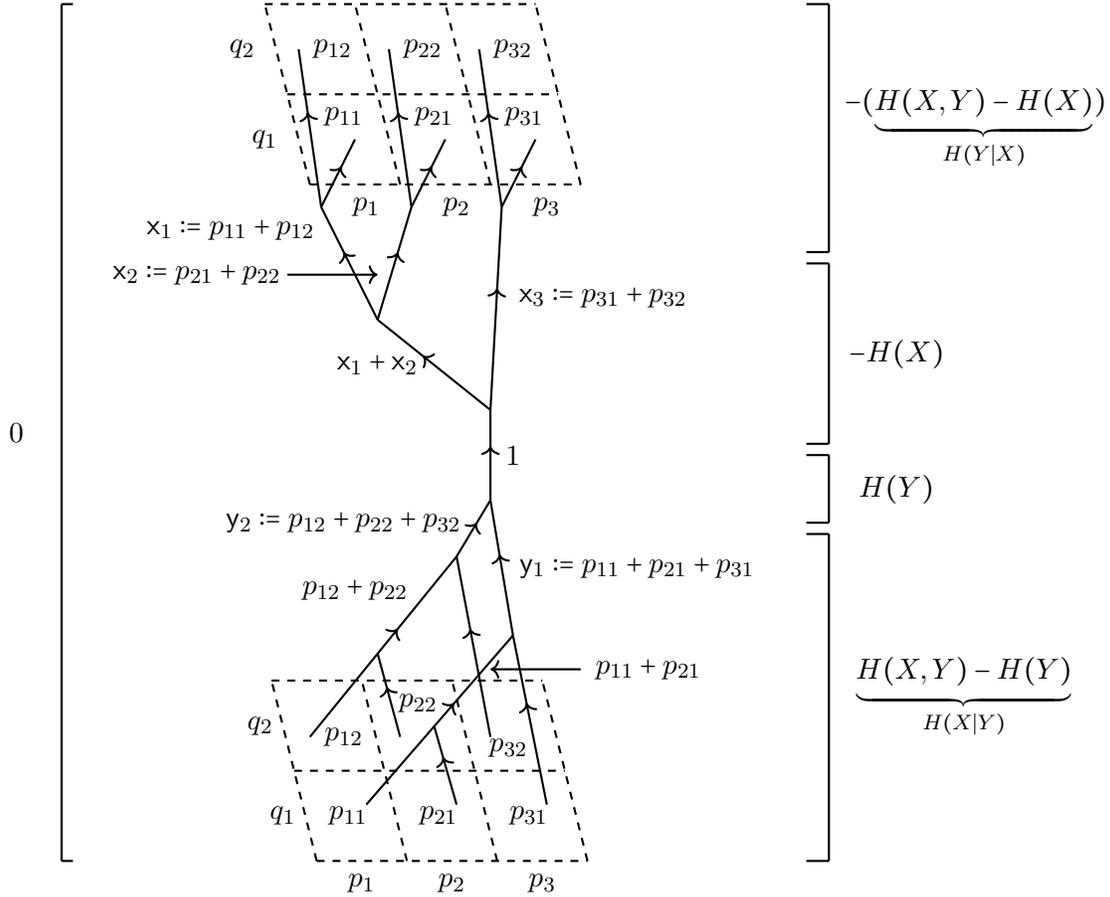
\begin{figure}
    \centering
\begin{tikzpicture}[scale=0.6,decoration={
    markings,
    mark=at position 0.60 with {\arrow{>}}}]

%%%%%%%%%%%%%%%%%%%%%%%%%%%%%%%%%%%%%%%%%

\begin{scope}[shift={(0,0)}]
%\draw[thin,green] (0,15) grid (8,20);
%\draw[thin,yellow] (0,10) grid (8,15);
%\draw[thin,green] (0,5) grid (8,10);
%\draw[thin,yellow] (0,0) grid (8,5);

\begin{scope}[shift={(0.15,0)}]
\draw[thick, dashed] (-0.5,4) -- (5.5,4);

\draw[thick, dashed] (0,2) -- (6,2);

\draw[thick, dashed] (0.5,0) -- (6.5,0);

\draw[thick, dashed] (0.5,0) -- (-0.5,4);

\draw[thick, dashed] (2.5,0) -- (1.5,4);

\draw[thick, dashed] (4.5,0) -- (3.5,4);

\draw[thick, dashed] (6.5,0) -- (5.5,4);

\node at (1.5,-0.5) {$p_1$};

\node at (3.5,-0.5) {$p_2$};

\node at (5.5,-0.5) {$p_3$};

\node at (-0.75,3) {$q_2$};

\node at (-0.25,1) {$q_1$};
\end{scope}

\begin{scope}[shift={(-0.5,-0.25)}]
\draw[thick,postaction={decorate}] (1,3) -- (4.25,7);

\draw[thick,postaction={decorate}] (5,3) -- (4.25,7);

\draw[thick,postaction={decorate}] (3,3) -- (2.5,4.85);
\end{scope}

\begin{scope}[shift={(0.75,0.25)}]
\draw[thick,postaction={decorate}] (1,1) -- (4.25,4.75);

\draw[thick,postaction={decorate}] (5,1) -- (4.25,4.75);

\draw[thick,postaction={decorate}] (3,1) -- (2.5,2.75);
\end{scope}

\draw[thick,postaction={decorate}] (3.75,6.75) -- (4.5,8);

\draw[thick,postaction={decorate}] (5,5) -- (4.5,8);

\draw[thick,postaction={decorate}] (4.5,8) -- (4.5,10);

\node at (5,9) {$1$};

\node at (1.25,7.50) {$\mathsf{y}_2:= p_{12}+p_{22}+p_{32}$};

\node at (1.5,6) {$p_{12}+p_{22}$};

\node at (8,4.25) {$p_{11} + p_{21}$};

\draw[thick,<-] (4.5,4.25) -- (6.5,4.25);

\node at (7.75,6.5) {$\mathsf{y}_1:= p_{11}+p_{21}+p_{31}$};

\node at (1.25,2.75) {$p_{12}$};

\node at (2.90,3.5) {$p_{22}$};

\node at (4.9,2.5) {$p_{32}$};

\node at (1.35,1) {$p_{11}$};

\node at (3.35,1) {$p_{21}$};

\node at (5.35,1) {$p_{31}$};

\begin{scope}[shift={(0,15)}]
\draw[thick, dashed] (-0.5,4) -- (5.5,4);

\draw[thick, dashed] (0,2) -- (6,2);

\draw[thick, dashed] (0.5,0) -- (6.5,0);

\draw[thick, dashed] (0.5,0) -- (-0.5,4);

\draw[thick, dashed] (2.5,0) -- (1.5,4);

\draw[thick, dashed] (4.5,0) -- (3.5,4);

\draw[thick, dashed] (6.5,0) -- (5.5,4);
\end{scope}

\node at (1.75,14.5) {$p_1$};

\node at (3.75,14.5) {$p_2$};

\node at (5.75,14.5) {$p_3$};

\node at (-1.0,18) {$q_2$};

\node at (-0.5,16) {$q_1$};

\node at (5,18) {$p_{32}$};

\node at (3,18) {$p_{22}$};

\node at (1,18) {$p_{12}$};

\node at (5.25,16.5) {$p_{31}$};

\node at (3.25,16.5) {$p_{21}$};

\node at (1.25,16.5) {$p_{11}$};

\draw[thick,postaction={decorate}] (0.75,14.5) -- (0.25,18);

\draw[thick,postaction={decorate}] (0.75,14.5) -- (1.5,16);

\begin{scope}[shift={(2,0)}]
\draw[thick,postaction={decorate}] (0.75,14.5) -- (0.25,18);

\draw[thick,postaction={decorate}] (0.75,14.5) -- (1.5,16);
\end{scope}

\begin{scope}[shift={(4,0)}]
\draw[thick,postaction={decorate}] (0.75,14.5) -- (0.25,18);

\draw[thick,postaction={decorate}] (0.75,14.5) -- (1.5,16);
\end{scope}

\draw[thick,postaction={decorate}] (2,12) -- (0.75,14.5);

\draw[thick,postaction={decorate}] (2,12) -- (2.75,14.5);

\draw[thick,postaction={decorate}] (4.5,10) -- (2,12);

\draw[thick,postaction={decorate}] (4.5,10) -- (4.75,14.5);

\node at (-1.25,14) {$\mathsf{x}_1:= p_{11}+ p_{12}$};

\node at (-2,13) {$\mathsf{x}_2:= p_{21}+p_{22}$};

\draw[thick,->] (0,13) -- (2,13);

\node at (2,11) {$\mathsf{x}_1+ \mathsf{x}_2$};

\node at (7,12.5) {$\mathsf{x}_3:= p_{31}+p_{32}$};

\draw[thick] (12,13.5) -- (12,19);

\draw[thick] (11.5,19) -- (12,19);

\draw[thick] (11.5,13.5) -- (12,13.5);

\node at (15.25,16.25) {$-(\underbrace{H(X,Y)-H(X)}_{H(Y|X)})$};

\draw[thick] (12,9.25) -- (12,13.25);

\draw[thick] (11.5,13.25) -- (12,13.25);

\draw[thick] (11.5,9.25) -- (12,9.25);

\node at (13.5,11.25) {$-H(X)$};

\draw[thick] (12,7.5) -- (12,9);

\draw[thick] (11.5,9) -- (12,9);

\draw[thick] (11.5,7.5) -- (12,7.5);

\node at (13.5,8.25) {$H(Y)$};

\draw[thick] (12,0) -- (12,7.25);

\draw[thick] (11.5,7.25) -- (12,7.25);

\draw[thick] (11.5,0) -- (12,0);

\node at (15,3.625) {$\underbrace{H(X,Y)-H(Y)}_{H(X|Y)}$};

\draw[thick] (-5,0) -- (-5,19);

\draw[thick] (-5,19) -- (-4.75,19);

\draw[thick] (-5,0) -- (-4.75,0);

\node at (-6,9.5) {$0$};

\end{scope}

%%%%%%%%%%%%%%%%%%%%%%%%%%%%%%%%%%%%%%%%%

\end{tikzpicture}
    \caption{Let $\{p_1,p_2, p_3 \}$ be probabilities for the random variable $X$ and $\{q_1,q_2,q_3\}$ be probabilities for the random variable $Y$. Let $p_{ij}:=(p_i, q_j)$. 
    Above is a diagrammatic of $0$ since the top and the bottom collections of boundary points in it are identical.  }
    \label{fig7_002}
\end{figure}

We see that one can draw diagrammatics for the information content, i.e., self-information, which is  
$I(X) = \log \frac{1}{p_X}$. In terms of entropy, it can be reinterpreted as $H(p_X)= E[I(X)]$. We leave it as an exercise to draw the diagrams for mutual information: 
\begin{equation}
\begin{split}
    I(X,Y) &= H(X) - H(X|Y)  = H(Y)- H(Y|X) \\ 
    &= H(X) + H(Y)- H(X,Y) \\ 
    &= H(X,Y)-H(X|Y) - H(Y|X). 
\end{split}
\end{equation}

\section{Additional properties of \texorpdfstring{$J(\mathbf{k})$}{J(k)}}
\label{section_basic_properties}

From Im--Khovanov in \cite{IK24_dilogarithms_entropy}, we saw that the two symbols $\langle \cdot , \cdot \rangle$ in Cathelineau's $\kk$-vector space is related to $H(p_X)$ via the equation
\[
\langle a,b\rangle = (a+b) H\left( \frac{a}{a+b} \right)
\] 
if $a+b\not=0$; otherwise, $\langle a, b \rangle =0$.
This shows that rescaling the two symbols by $a+b$ gives us a rescaled entropy: 
\begin{equation}
\frac{1}{a+b} \langle a,b \rangle = H\left( \frac{a}{a+b} \right) = H\left(\frac{b}{a+b} \right). 
\end{equation}
In particular, when $a=b$, then 
\begin{equation}
\label{equation:inner_product_a_with_a}
\langle a,a \rangle = 2a H\left(\frac{1}{2}\right),
\end{equation}
or equivalently, 
\[
\frac{1}{2a} \langle a,a \rangle = H\left( \frac{1}{2} \right) 
\quad \mbox{ or } 
\quad  \frac{1}{2} \langle 1,1 \rangle = H\left( \frac{1}{2} \right).
\]

%Let $||a||^2 := \langle a,a \rangle $. 

%\begin{lemma}
%Let $a\geq 0$. Then $||a|| = \sqrt{2a \log 2}$. 
%\end{lemma}

%\begin{proof}
%By \eqref{equation:inner_product_a_with_a}, we have $\langle 1,1 \rangle = 2 H\left(\frac{1}{2}\right) = 2\left( -\frac{1}{2} \log \frac{1}{2} - \frac{1}{2}\log \frac{1}{2} \right) = 2 \log 2$.
%So 
%$||a||^2 = \langle a,a\rangle =a \langle 1,1\rangle =  2a \log 2$. So 
%$||a|| = \sqrt{2a \log 2}$. 
%\end{proof}
%One could also write $||a|| = \sqrt{2a H\left( \frac{1}{2}\right)}$ to be the length of $a$, where $a\geq 0$. 
%Also note that in base $2$, $||a|| = \sqrt{2a}$. 

%{\color{red} Should we extend to a notion of a complex norm? If so, we can remove the condition that $a\geq 0$. }

Although this is an interesting observation, length or norm does not make sense since the formal pair $\langle \cdot ,\cdot \rangle$ of symbols is not linear. Perhaps there may be other interesting properties about Cathelineau's vector space that could easily be observed from the diagrammatics.

%{\color{red} Length/norm doesn't make sense since it's not linear. And there is triangle inequality only under specific conditions.}
%For the symbol $\langle \cdot , \cdot \rangle$, triangle inequality
%\[
%\langle a,b \rangle + \langle b, c\rangle \leq \langle a,c \rangle 
%\] 
%holds if and only if 
%\[
%\frac{a}{b} \log \left| \frac{a+b}{a+c} \right| + 
%\frac{c}{b} \log \left| \frac{c+b}{a+c} \right| \leq 
%\log \left| \frac{b^2}{(a+b)(b+c) } \right|.
%\]

\section{Deforming \texorpdfstring{$5$}{5}-term dilogarithm into \texorpdfstring{$4$}{4}-term infinitesimal dilogarithm}
\label{section_deformation}

We refer to \cite[Section 1.3]{Cath88} where Cathelineau derives the dilogarithm to infinitesimal dilogarithm. In this section, we give a completely written out derivation of this deformation. 

Let $A$ be a unital commutative ring.
Let $P_{A}$ be an abelian group generated by $\{ z\}$, where $z$ and $1-z$ are invertible in $A$, with relations 
\begin{enumerate}
\item $\{ z\} + \left\{ \frac{1}{z} \right\} = 0$, 
\item $\left\{ \frac{1}{z} \right\} - \{1-z \}=0$,
\item
\label{item_5_term_dilog} $\{z_1 \}-\{ z_2\} 
+ \left\{ \frac{z_2}{z_1} \right\} 
- \left\{ \frac{1-z_2}{1-z_1} \right\} 
+ \left\{ \frac{(1-z_2)z_1}{(1-z_1)z_2} \right\} = 0$, 
\end{enumerate} 
where the last relation~\eqref{item_5_term_dilog} is called the $5$-term dilogarithm equation. 
A homomorphism of unital rings $A\rightarrow A'$ induces a homomorphism of abelian groups $P_A\rightarrow P_{A'}$. 

We now specialize to the case when $A=\mathbf{k}[\eps]$, where $\eps^2=0$, and $A' = \mathbf{k}$. 
Let $\widetilde{T}P(\mathbf{k}) := \ker( P_{\mathbf{k}[\varepsilon]}\rightarrow P_{\mathbf{k}})$, where the homomorphism $\mathbf{k}[\varepsilon]\lra \mathbf{k}$ is identity on $\mathbf{k}$ and $\varepsilon \mapsto 0$. 
This induces the short exact sequence 
\[
\xymatrix{
0 \ar[rr] & & 
\widetilde{T}P(\mathbf{k}) 
\ar@^{(_->}[rr] & & \ar@/_15pt/[ll]_{\sigma}
P_{\mathbf{k}[\varepsilon]} \ar@{->>}[rr] & & P_{\mathbf{k}} \ar[rr] & & 0
}
\]
with the section $\sigma$ given by 
\begin{equation}
\label{eqn_tangent_generators}
\sigma(\{ a+ b\varepsilon \}) = \{a + b\varepsilon \} - \{ a \},  \qquad  a \not=0,1.
\end{equation}
The kernel $\widetilde{T}P(\mathbf{k})$ is also generated by the images of $\{a + b\varepsilon\}$ under $\sigma$ as in 
\eqref{eqn_tangent_generators} over all $a\in \kk^{*}\setminus\{1\}$ and $b\in \kk$.  
We also have $\mathbf{k}^{*}$ acting on $\widetilde{T}P(\mathbf{k})$ via the translation on the coefficient of $\varepsilon$:
\[
c \: \sigma(\{ a+ b\varepsilon\})
= c(\{ a + b \varepsilon\} - \{ a\} ) 
= \{ a + c b \varepsilon\} - \{ a\}
= \sigma(\{a + c b \varepsilon \}). 
\]
Let $\mu:\widetilde{T}P(\mathbf{k}) \rightarrow \mathbf{k}^+ \wedge_{\mathbb{Z}} \mathbf{k}^+$ be a morphism of abelian groups, where 
$\mu(\{ a+b\varepsilon\} - \{ a\} ) = \frac{b}{a}\wedge \frac{b}{1-a}$, or  alternatively, 
$\mu(\{ a+ 2b\varepsilon \} + \{ a\} - 2\{a+b\varepsilon \} )= 2\left( \frac{b}{a}\wedge \frac{b}{1-a} \right)\not=0$. 
Let $N_{\mathbf{k}}$ be the subgroup of $\widetilde{T}P(\mathbf{k})$ generated by expressions of the form 
\begin{equation}
\label{eqn_reln_for_Nk}
\{ a + (b+b')\varepsilon \} + \{ a \} - \{ a+ b\varepsilon\} - \{ a + b'\varepsilon\}.
\end{equation}
In $\widetilde{T}P(\mathbf{k})$, we can view  $\{ a+ b\varepsilon\} - \{ a \}$ as equal to its deformation. That is, let $TP(\mathbf{k}) := \widetilde{T}P(\mathbf{k})/N_{\mathbf{k}}$. Then in $TP(\mathbf{k})$, we have: 
\[\{ a + (b+b')\varepsilon \} - \{ a + b'\varepsilon\} = \{ a+ b\varepsilon\} - \{ a \}.
\]

\begin{proposition}
\label{prop_5_term_dilog_to_infinitesimal_dilog}
Let $\mathbf{k}$ be a field of characteristic $0$. Then there exists an isomorphism of vector spaces 
\begin{equation}
\label{eqn_isom_tangent_space_derivations}
\varphi:\beta_{\mathbf{k}}\stackrel{\cong}{\lra} TP(\mathbf{k}), 
\quad 
\mbox{ where }
\varphi([a]) = \{ a + a(1-a)\varepsilon\} - \{ a \}.
\end{equation}
\end{proposition}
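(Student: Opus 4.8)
The plan is to construct an explicit inverse to $\varphi$ and to recognise the $4$-term relation \eqref{item_beta_k_4_term} as the first-order-in-$\varepsilon$ shadow of the $5$-term relation \eqref{item_5_term_dilog}. Alongside $\varphi$ (with the convention $\varphi([1])=0$, consistent with $[1]=0$), introduce the group homomorphism $\Psi\colon P_{\mathbf{k}[\varepsilon]}\to\beta(\mathbf{k})$ defined on generators by $\Psi(\{a+b\varepsilon\})=\tfrac{b}{a(1-a)}[a]$ for $a\in\mathbf{k}\setminus\{0,1\}$, $b\in\mathbf{k}$. Since $\Psi(\{a\})=0$ and $\Psi$ is linear in the $\varepsilon$-coefficient, $\Psi$ kills the subgroup $N_{\mathbf{k}}$ of \eqref{eqn_reln_for_Nk}; hence, once $\Psi$ is known to be well defined, it restricts to $\widetilde{T}P(\mathbf{k})$ and descends to $\psi\colon TP(\mathbf{k})\to\beta(\mathbf{k})$ with $\psi(\{a+b\varepsilon\}-\{a\})=\tfrac{b}{a(1-a)}[a]$. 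The two composites are then immediate: $\psi(\varphi([a]))=\tfrac{a(1-a)}{a(1-a)}[a]=[a]$, while $\varphi(\psi(\{a+b\varepsilon\}-\{a\}))=\tfrac{b}{a(1-a)}\bigl(\{a+a(1-a)\varepsilon\}-\{a\}\bigr)=\{a+b\varepsilon\}-\{a\}$ in $TP(\mathbf{k})$, the last step being the $\mathbf{k}^{*}$-action on the $\varepsilon$-coefficient (and additivity from $N_{\mathbf{k}}$ when $b=0$). That same $\mathbf{k}^{*}$-action and $N_{\mathbf{k}}$-additivity are what make $TP(\mathbf{k})$ a $\mathbf{k}$-vector space, and $\varphi,\psi$ are $\mathbf{k}$-linear by construction. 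So everything reduces to two well-definedness checks.

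For $\varphi$, the only relation to test is \eqref{item_beta_k_4_term}. Expand the $5$-term relation at $z_1=a+a(1-a)\varepsilon$ and $z_2=b+b(1-b)\varepsilon$; a short logarithmic-derivative computation shows that $\tfrac{z_2}{z_1}$, $\tfrac{1-z_2}{1-z_1}$ and $\tfrac{(1-z_2)z_1}{(1-z_1)z_2}$ deform with velocities $\tfrac{b(a-b)}{a}$, $\tfrac{(1-b)(a-b)}{1-a}$ and $0$, respectively. The vanishing of the last velocity is precisely why the deformed relation has four terms rather than five; applying the splitting $\sigma$ to the $5$-term relator and absorbing the prefactors via the $\mathbf{k}^{*}$-scaling, the four surviving terms become $\varphi([a])-\varphi([b])+a\,\varphi([b/a])+(1-a)\,\varphi\!\bigl([\tfrac{1-b}{1-a}]\bigr)$. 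Since $\sigma$ of a relator is $0$, this combination vanishes in $TP(\mathbf{k})$, which is exactly the statement that $\varphi$ respects \eqref{item_beta_k_4_term}.

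For $\Psi$ one checks the three defining relations of $P_{\mathbf{k}[\varepsilon]}$. Relations (1) and (2) are routine: from $\tfrac{1}{a+b\varepsilon}=\tfrac1a-\tfrac{b}{a^{2}}\varepsilon$ and $1-(a+b\varepsilon)=(1-a)-b\varepsilon$ one computes $\Psi(\{1/z\})=-\tfrac{b}{a(1-a)}[a]=\Psi(\{1-z\})=-\Psi(\{z\})$ after invoking $[1/a]=-\tfrac1a[a]$ and $[a]=[1-a]$, so $\{z\}+\{1/z\}$ and $\{1/z\}-\{1-z\}$ map to $0$. The main obstacle is relation (3): expand \eqref{item_5_term_dilog} at $z_1=a+u\varepsilon$, $z_2=b+v\varepsilon$ with \emph{arbitrary} $u,v$, apply $\Psi$ term by term, and show the resulting $\mathbf{k}$-linear combination of $[a],[b],[b/a],[\tfrac{1-b}{1-a}],[\tfrac{(1-b)a}{(1-a)b}]$ is $0$ in $\beta(\mathbf{k})$. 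Because the combination depends $\mathbf{k}$-linearly on $(u,v)$, it is enough to verify two linearly independent choices: the diagonal choice $u=a(1-a),\,v=b(1-b)$ reproduces \eqref{item_beta_k_4_term} verbatim (the computation of the previous paragraph), and the second choice is reduced to \eqref{item_beta_k_4_term} by rewriting $[b/a]$, $[\tfrac{1-b}{1-a}]$ and $[\tfrac{(1-b)a}{(1-a)b}]$ in terms of $[a]$ and $[b]$ via repeated use of the $4$-term relation together with $[x]=[1-x]$ and $[1/x]=-\tfrac1x[x]$, and then collecting coefficients. I expect this last reduction --- choosing the right instances of \eqref{item_beta_k_4_term} and tracking rational coefficients --- to be the only laborious step; the rest follows formally once $\varphi$ and $\psi$ are available.
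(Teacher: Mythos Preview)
Your approach differs from the paper's. You both carry out the same well-definedness computation for $\varphi$ (expanding the $5$-term relation at $z_1=a+a(1-a)\varepsilon$, $z_2=b+b(1-b)\varepsilon$ and watching the cross-ratio term lose its $\varepsilon$-coefficient). For the reverse direction, however, the paper does not build an explicit inverse: it invokes an injective map $D\colon\beta_{\mathbf{k}}\hookrightarrow\mathbf{k}^{*}\otimes\mathbf{k}^{+}$ (from Cathelineau), defines $\rho\colon TP(\mathbf{k})\to\mathbf{k}^{*}\otimes\mathbf{k}^{+}$ by $\rho(\{a+b\varepsilon\}-\{a\})=a\otimes\tfrac{b}{1-a}+(1-a)\otimes\tfrac{b}{a}$, observes $D=\rho\circ\varphi$, and concludes that $\varphi$ is injective; surjectivity is immediate since the $\mathbf{k}^{*}$-action on the $\varepsilon$-coefficient reaches every generator of $TP(\mathbf{k})$. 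Your route is more self-contained in spirit, trading that external injectivity theorem for an explicit inverse $\psi$.

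The gap is in your check of relation~\eqref{item_5_term_dilog} for $\Psi$. Your reduction by linearity in $(u,v)$ is correct, and the diagonal choice does reproduce~\eqref{item_beta_k_4_term}. But your proposed strategy for the second independent direction---rewriting $[b/a]$, $[\tfrac{1-b}{1-a}]$ and $[\tfrac{a(1-b)}{b(1-a)}]$ ``in terms of $[a]$ and $[b]$'' and then ``collecting coefficients''---cannot work as stated: $\beta(\mathbf{k})$ is infinite-dimensional, and those symbols do not live in the $\mathbf{k}$-span of $[a]$ and $[b]$ for fixed $a,b$. A single instance of~\eqref{item_beta_k_4_term} gives one linear relation among four of them, not a rewriting. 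To close the argument you would have to exhibit the second-direction expression as a $\mathbf{k}$-linear combination of \emph{several} instances of the $4$-term relation at different arguments (perhaps exploiting the five-fold symmetry of~\eqref{item_5_term_dilog}), which is substantially more than collecting coefficients. Until that is actually carried out, the well-definedness of $\Psi$---and hence of your inverse $\psi$---remains open, and the paper's appeal to the Cathelineau injection $D$ is precisely what lets it avoid this computation.
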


This leads us to a commutative diagram with two short exact sequences:
\[
\xymatrix@-1pc{
& & 0 \ar[dd]& & & & & & \\ 
& &  & & & & & & \\
& & N_{\mathbf{k}} \ar@^{(_->}[dd] & & & & & & \\
& &  & & & & & & \\
0 \ar[rr] & & 
\widetilde{T}P(\mathbf{k}) 
\ar@^{(_->}[rr] \ar@{->>}[dd]^{\mbox{quotient by } N_{\mathbf{k}}} & & \ar@/_15pt/[ll]_{\sigma}
P_{\mathbf{k}[\varepsilon]} \ar@{->>}[rr] & & P_{\mathbf{k}} \ar[rr] & & 0 \\
& &  & & & & & & \\
\beta_{\mathbf{k}}\ar[rr]^{\cong\ \varphi} & & TP(\mathbf{k}) \ar[dd] & & & & & & \\
& &  & & & & & & \\
& & 0. & & & & & & \\
}
\]

We will now prove Proposition~\ref{prop_5_term_dilog_to_infinitesimal_dilog}.
\begin{proof}
We will prove that $\varphi$ is well-defined by checking compatibility with the relation \eqref{item_beta_k_4_term} in Section~\ref{subsection:vs_Shannon_entropy}. 
Let 
\[
u = a + a (1-a)\varepsilon, 
\qquad 
v = b + b (1-b)\varepsilon. 
\]
Then 
\begin{align*}
\frac{u(1-v)}{v(1-u)}
&= \frac{(a + a (1-a)\varepsilon)(1-(b + b (1-b)\varepsilon))}{(b + b (1-b)\varepsilon)(1-(a + a (1-a)\varepsilon))} \\ 
&= \frac{a - a b + a(1 - b)(1 - a - b) \varepsilon}{b - a b + b(1 - a)(1 - a - b) \varepsilon} \\ 
&= \frac{a \cancel{b} \cancel{(1-a)}(1-b)}{b^{\cancel{2}}(1-a)^{\cancel{2}}} \\ 
&= \frac{a(1-b)}{b(1-a)}.
\end{align*}

Now, 
\begin{align*}
\varphi&\left( [a] - [b] + (1-a) \left[\frac{1-b}{1-a} \right] + a \left[ 
\frac{b}{a} \right]
\right) \\ 
&= \varphi([a]) - \varphi([b]) + (1-a)\varphi\left( \left[ \frac{1-b}{1-a} \right]\right) + a \varphi\left( \left[ \frac{b}{a} \right] \right) \\ 
&= \{ a + a(1-a)\varepsilon \} - \{ a \} -\left( \{ b + b(1-b)\varepsilon \} - \{ b \}
\right) \\
&\hspace{4mm}+ (1-a) \left( 
\left\{ 
\frac{1-b}{1-a} + \frac{1-b}{1-a} \left( 1- \frac{1-b}{1-a} \right) \varepsilon
\right\} - \left\{  \frac{1-b}{1-a} \right\}
\right) 
+ a \left( \left\{ \frac{b}{a} + \frac{b}{a}\left( 1-\frac{b}{a} \right)\varepsilon \right\} - \left\{ \frac{b}{a} \right\}
\right) \\ 
&= \sigma(\{ a+a(1-a)\varepsilon \}) 
- \sigma(\{b+b(1-b)\varepsilon \}) 
+ (1-a) \sigma \left( 
\left\{ 
\frac{1-b}{1-a} + \frac{1-b}{1-a} 
\left( 1- \frac{1-b}{1-a} \right) \varepsilon 
\right\}
\right) \\ 
&\hspace{4mm}+ a \sigma\left( \left\{
\frac{b}{a} + \frac{b}{a} \left( 1-\frac{b}{a} \right) \varepsilon 
\right\} 
\right) \\ 
&= \sigma(\{ u\}) - \sigma(\{ v \}) +  \sigma \left( 
\left\{ 
\frac{1-b}{1-a} + (1-b) 
\left( 1- \frac{1-b}{1-a} \right) \varepsilon 
\right\}
\right) +  \sigma\left( \left\{
\frac{b}{a} + b \left( 1-\frac{b}{a} \right) \varepsilon 
\right\} 
\right) \\ 
&= \sigma(\{ u\}) - \sigma(\{ v \}) +  \sigma \left( 
\left\{ 
\frac{1-b}{1-a} - \frac{1-b}{1-a}(a-b)  \varepsilon 
\right\}
\right) +  \sigma\left( \left\{
\frac{b}{a} + \frac{b}{a} \left( a-b \right) \varepsilon 
\right\} 
\right) \\ 
&\stackrel{\dddag}{=} \sigma(\{ u\}) - \sigma(\{ v\}) 
+  \sigma\left( \left\{ 
\frac{b}{a} + \frac{b}{a}(a - b) \eps
\right\}\right) 
- \sigma\left( \left\{ 
\frac{1 - b}{1-a} + \frac{1 - b}{1-a}(a - b) \eps 
\right\} \right)  \\ 
&\hspace{4mm}
+  \left\{  \frac{a(1-b)}{b(1-a)} \right\}
-  \left\{  \frac{a(1-b)}{b(1-a)} \right\}
\\
&=  \sigma(\{ u\}) - \sigma(\{ v\}) + \sigma\left( \left\{ 
\frac{b}{a} + \frac{b}{a}(a - b) \eps
\right\}\right) - \sigma\left( \left\{ 
\frac{1 - b}{1-a} + \frac{1 - b}{1-a}(a - b) \eps 
\right\} \right) \\
&\hspace{4mm} + \sigma \left( \left\{  \frac{a(1-b)}{b(1-a)} \right\}\right) \\ 
&=  \sigma\left( \{ u\} - \{ v\}
+  \left\{
\frac{b}{a} (1 + (a - b) \eps)
\right\} -  \left\{ 
\frac{(1 - b) + (1 - b) (a - b) \eps}{1-a}
\right\}  +  \left\{ 
\frac{a(1-b)}{b(1-a)} \right\}\right)
\\ 
&= \sigma \left( 
\{ u \} - \{ v \} + \left\{ \frac{v}{u} \right\} - \left\{  \frac{1-v}{1-u} \right\} 
+ \left\{ \frac{u(1-v)}{v(1-u)} \right\} 
\right) 
\end{align*}
since 
\begin{align*}
\frac{v}{u} 
&= \frac{b + b (1 - b) \eps}{a + a (1 - a) \eps} 
= \frac{b + b (1 - b) \eps}{a + a (1 - a) \eps} 
\frac{a - a (1 - a) \eps}{a - a (1 - a) \eps} \\ 
&= \frac{a b + a b (a - b) \eps}{a^2} 
= \frac{b}{a} + \frac{b}{a}(a-b) \eps, 
\end{align*}
\begin{align*}
\frac{1-v}{1-u} 
&= \frac{1 - (b + b (1 - b) \eps)}{1 - (a + a (1 - a) \eps)} 
= \frac{1 - b - b (1 - b) \eps}{1 - a - a (1 - a) \eps} \\
&= \frac{1 - b - b (1 - b) \eps}{1 - a - a (1 - a) \eps} \frac{1 - a + a (1 - a) \eps}{1 - a + a (1 - a) \eps} \\
&= \frac{(1 - a) (1 - b) + (1 - a) (1 - b) (a - b) \eps}{(1-a)^2} \\ 
&= \frac{(1 - b) + (1 - b) (a - b) \eps}{1-a} \\
&= \frac{1 - b}{1-a} + \frac{1 - b}{1-a}(a - b) \eps, 
\end{align*}
\begin{align*}
\frac{u(1-v)}{v(1-u)} 
&= \frac{(a + a (1 - a) \eps) (1 - (b + b (1 - b) \eps))}{(b + b (1 - b) \eps) (1 - (a + a (1 - a) \eps)} \\ 
&= \frac{a (1 - b) \cancel{(1 + (1 - (a + b)) \eps)}}{b (1 - a) \cancel{(1 + (1 - (a + b)) \eps)}}  \\
&= \frac{a(1-b)}{b(1-a)},
\end{align*}
and for $b'=-b$ in \eqref{eqn_reln_for_Nk}, we have 
\begin{align*}
\{a + (b-b)\eps \} &+ \{ a\}-\{ a+b\eps\} -\{a-b\eps \} = \{a \} + \{ a\}-\{ a+b\eps\} -\{a-b\eps \}. 
\end{align*}
So in $TP(\mathbf{k})$, 
\begin{align*}
0 
&= \sigma(\{a \} + \{ a\}-\{ a+b\eps\} -\{a-b\eps \}) \\ 
&= 2\sigma(\{ a\}) - \sigma(\{ a+ b\eps\})
- \sigma(\{ a-b\eps\}) \\ 
&= 2(\{ a\} -\{ a\}) - \sigma(\{ a+ b\eps\})
- \sigma(\{ a-b\eps\}) \\ 
&= 0 - \sigma(\{ a+ b\eps\})
- \sigma(\{ a-b\eps\}).\\ 
\end{align*}
The equality $\dddag$ holds in $TP(\mathbf{k})$ since 
\[
- \sigma(\{ a+ b\eps\}) =  \sigma(\{ a-b\eps\}).
\]
Now to prove that $\varphi$ is injective, 
let 
\begin{equation}
\rho: TP(\mathbf{k}) \lra \mathbf{k}^{*} \otimes \mathbf{k}^+,
\qquad 
\rho(\{ a + b\eps \} - \{ a \}) 
= a \otimes \frac{b}{1-a} +  (1-a) \otimes \frac{b}{a}, 
\end{equation}
be a map of vector spaces over $\mathbf{k}$. Then we have a commutative diagram
\[
\xymatrix@-1pc{
\beta_{\mathbf{k}} \ar@^{(_->}[rrrr]^D \ar[rrdd]_{\varphi} & & & & \mathbf{k}^{*}\otimes \mathbf{k}^+, \\ 
& & & & \\ 
& & TP(\mathbf{k}) \ar@^{(_->}[rruu]_{\rho}& & \\ 
}
\]
where $D$ and $\rho$ are injective. So $\varphi$ is also injective. 
\end{proof}

\begin{remark}
Note that in general, $\rho$ may not be surjective, but $D$ and $\rho$ are injective imply that $TP(\mathbf{k})$ is not too small. 
\end{remark}

{\bf Second proof.} We give a second argument that the $5$-term dilogarithm deforms to the infinitesimal $4$-term dilogarithm. 
Let $\kk$ be a field, and let $\kk^\flat= \kk^{*}\setminus \{ 1\} = \kk \setminus \{0,1 \}$. 

Let $\beta_2(\kk)$ be the quotient of $\Q[\kk^\flat]$ by 
\[
[a] - [b] +\left[\frac{b}{a} \right] + \left[ \frac{1-b^{-1}}{ 1 - a^{-1}} \right] - \left[\frac{1-b}{1-a}\right] = 0,
\]
with $(1-a)(1-b)\left(1-\frac{b}{a}\right) \in \kk^{*}$. 

Let $\kk_2:= \kk[t]/(t^2)$ be the ring of dual numbers. 
If $a\in \kk^\flat$, define 
\[
\langle  a \rangle := a + a(1-a)t = a(1+(1-a)t) \in \kk_2.
\]
Note that $(1+ct)^{-1}=1-ct$ in $\kk_2$ since $t^2=0$. We see that this element $\langle a \rangle$ is invertible in $\kk_2$, with $\langle a\rangle^{-1} = a^{-1}(1-(1-a)t)$.

Define the action of $\kk^*$ on the ring $\kk_2$ of dual numbers as $\lambda \ast (b_0 + b_1 t) = b_0 + \lambda b_1 t$, where $\lambda\in\kk^\ast$. 
That is, $\lambda \ast 1 = 1$ and $\lambda \ast t = \lambda t$.
So $\lambda$ only translates the coefficient of $t$.

\begin{lemma}
\label{lemma:quot_dual_numbers}
We have 
$\dfrac{\langle b \rangle}{\langle a\rangle} = a \ast \left\langle \dfrac{b}{a} \right\rangle$. 
\end{lemma}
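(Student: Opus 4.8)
The plan is to verify the identity by a direct computation in the ring $\kk_2 = \kk[t]/(t^2)$, using only the defining formula $\langle x\rangle = x(1+(1-x)t)$, the rule $(1+ct)^{-1} = 1-ct$ (valid since $t^2=0$), and the definition of the $\kk^\ast$-action on $\kk_2$, which fixes constants and multiplies the coefficient of $t$ by the acting scalar.

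First I would expand the left-hand side. Writing $\langle a\rangle = a(1+(1-a)t)$ and $\langle b\rangle = b(1+(1-b)t)$, and using $(1+(1-a)t)^{-1} = 1-(1-a)t$, one obtains
\[
\frac{\langle b\rangle}{\langle a\rangle} = \frac{b}{a}\,(1+(1-b)t)(1-(1-a)t).
\]
Since $t^2 = 0$, the product of the two binomials collapses to $1 + \big((1-b)-(1-a)\big)t = 1+(a-b)t$, so the left-hand side equals $\dfrac{b}{a}\big(1+(a-b)t\big)$.

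Next I would compute the right-hand side. By definition,
\[
\left\langle \frac{b}{a}\right\rangle = \frac{b}{a}\left(1+\Big(1-\frac{b}{a}\Big)t\right) = \frac{b}{a} + \frac{b(a-b)}{a^2}\,t,
\]
and applying the action of $a\in\kk^\ast$, which multiplies the coefficient of $t$ by $a$, yields
\[
a \ast \left\langle \frac{b}{a}\right\rangle = \frac{b}{a} + \frac{b(a-b)}{a}\,t = \frac{b}{a}\big(1+(a-b)t\big).
\]
Comparing the two expressions gives the claimed equality.

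The computation is entirely routine, so there is no genuine obstacle; the only points deserving a word of care are that $a,b\in\kk^\flat$ ensures all elements involved are invertible in $\kk_2$ (so the quotient $\langle b\rangle/\langle a\rangle$ is legitimate), and that $b/a\in\kk^\flat$ whenever $a\neq b$, so that $\langle b/a\rangle$ is defined. This lemma is a bookkeeping step: it records how the substitution $a\mapsto\langle a\rangle$ interacts with quotients and scaling, which is exactly what is needed to transport the $5$-term dilogarithm relation into $\kk_2$.
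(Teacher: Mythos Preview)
Your proof is correct and follows essentially the same approach as the paper: a direct computation in $\kk_2$ showing that both sides equal $\dfrac{b}{a}\big(1+(a-b)t\big)$. The only cosmetic difference is that you make the use of $(1+ct)^{-1}=1-ct$ explicit where the paper simplifies the quotient in one step.
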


\begin{proof}
The left hand side shows 
\begin{align*}
\dfrac{\langle b \rangle}{\langle a\rangle} 
= \dfrac{b(1+(1-b)t)}{a(1+(1-a)t)} 
= \dfrac{b}{a} \left(1 + ((1-b)-(1-a))t \right) 
= \dfrac{b}{a} (1+(a-b)t).
\end{align*}
On the other hand, the right hand side shows 
\begin{align*}
a \ast \left\langle \dfrac{b}{a} \right\rangle 
= a\ast \left(\frac{b}{a} \left(1+ \left(1-\frac{b}{a} \right)t \right)  \right)
= \frac{b}{a} \left(1+ \left(1-\frac{b}{a}\right) a t \right) 
= \frac{b}{a} \left( 1+ \frac{a-b}{a} at \right) 
= \frac{b}{a} (1+(a-b)t). 
\end{align*}
\end{proof}

\begin{lemma}
\label{lemma:diff_quot_dual_numbers}
The relation 
\[
\frac{1-\langle b\rangle}{1-\langle a \rangle} 
= (a-1) \ast \left\langle \dfrac{1-b}{1-a} \right\rangle
\]
holds. 
\end{lemma}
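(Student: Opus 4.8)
The plan is to mirror the proof of Lemma~\ref{lemma:quot_dual_numbers}: compute both sides explicitly in the ring $\kk_2 = \kk[t]/(t^2)$ and check that they agree. The key preliminary observation is that $1 - \langle a\rangle$ factors nicely. Indeed, from $\langle a\rangle = a + a(1-a)t$ we get $1 - \langle a\rangle = (1-a) - a(1-a)t = (1-a)(1 - at)$, and similarly $1 - \langle b\rangle = (1-b)(1-bt)$. Since $(1 - at)^{-1} = 1 + at$ in $\kk_2$, the left-hand side becomes
\[
\frac{1-\langle b\rangle}{1-\langle a\rangle} = \frac{1-b}{1-a}\,(1-bt)(1+at) = \frac{1-b}{1-a}\bigl(1 + (a-b)t\bigr),
\]
using $t^2 = 0$.

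For the right-hand side, I would first unfold the definition: $\left\langle \frac{1-b}{1-a}\right\rangle = \frac{1-b}{1-a}\left(1 + \left(1 - \frac{1-b}{1-a}\right)t\right)$, and simplify $1 - \frac{1-b}{1-a} = \frac{b-a}{1-a}$. Then applying the $\kk^*$-action $(a-1)\ast(-)$, which scales only the coefficient of $t$, gives
\[
(a-1)\ast\left\langle \frac{1-b}{1-a}\right\rangle = \frac{1-b}{1-a}\left(1 + (a-1)\cdot\frac{b-a}{1-a}\,t\right) = \frac{1-b}{1-a}\bigl(1 + (a-b)t\bigr),
\]
since $(a-1)(b-a) = -(1-a)(b-a)$ cancels the denominator and flips the sign. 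Comparing with the expression above for the left-hand side finishes the argument.

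The computation is entirely routine; there is no genuine obstacle. The only point requiring a little care is the bookkeeping of signs: it is precisely the mismatch between the scalar $a-1$ appearing in the statement and the factor $1-a$ in the denominator of $1 - \frac{1-b}{1-a}$ that produces the sign $(a-b)$ rather than $(b-a)$, so one must use $(a-1)\ast$ and not $(1-a)\ast$. I should also note in passing that $1 - \langle a\rangle = (1-a)(1-at)$ is a unit in $\kk_2$, because $1-a \in \kk^*$ (as $a \in \kk^\flat$) and $1-at$ is a unit, so the quotient on the left-hand side is legitimate.
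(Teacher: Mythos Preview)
Your proof is correct and follows essentially the same approach as the paper: compute both sides explicitly in $\kk_2$ and compare. Your preliminary factorization $1-\langle a\rangle = (1-a)(1-at)$ streamlines the left-hand side slightly compared to the paper's direct rationalization, but the argument is otherwise identical.
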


\begin{proof}
On the left hand side, we have 
\begin{align*}
\frac{1-\langle b\rangle}{1-\langle a\rangle}   
&= \dfrac{1 - b(1+(1-b)t)}{1-a(1+(1-a)t)}  \\
&=  \dfrac{1 - b - b(1-b)t}{1 - a - a(1-a)t}\cdot \dfrac{1-a + a(1-a)t}{1-a + a(1-a)t} \\
&=\dfrac{(1-a)(1-b)+( a(1-a)(1-b) -(1-a)b(1-b) )t}{(1-a)^2} \\
&=\dfrac{\cancel{(1-a)}(1-b)+ (a-b)\cancel{(1-a)}(1-b) t }{(1-a)^{\cancel{2}}} \\
&=\dfrac{1-b}{1-a}\left(1+(a-b)t \right)
\end{align*}
while on the right hand side, we have 
\begin{align*}
(a - 1) \ast \left\langle \dfrac{1-b}{1-a} \right\rangle
= (a-1) \ast \frac{1-b}{1-a}\left( 1+ \left( 1-\frac{1-b}{1-a}\right)t \right) 
= \frac{1-b}{1-a}\left( 1+ \left( 1-\frac{1-b}{1-a} \right) (a-1) t \right),
\end{align*}
which are indeed equal.
\end{proof}

\begin{lemma}
\label{lemma:add_mult_inv}
    We have $1-\langle a\rangle^{-1} = (1-a^{-1})(1-t)$. 
\end{lemma}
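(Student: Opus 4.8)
The plan is to verify the identity by direct computation in the ring $\kk_2 = \kk[t]/(t^2)$ of dual numbers, exactly in the style of the proofs of Lemmas~\ref{lemma:quot_dual_numbers} and~\ref{lemma:diff_quot_dual_numbers}. First I would recall that $\langle a\rangle = a\bigl(1+(1-a)t\bigr)$, that this element is a unit of $\kk_2$ (since $a\in\kk^{\flat}$ is invertible in $\kk$), and that any unit of the form $1+ct$ satisfies $(1+ct)^{-1}=1-ct$ because $t^2=0$. Consequently $\langle a\rangle^{-1} = a^{-1}\bigl(1-(1-a)t\bigr)$.

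Next I would substitute this into the left-hand side and expand, using $a^{-1}(1-a)=a^{-1}-1$:
$$
1-\langle a\rangle^{-1} = 1 - a^{-1}\bigl(1-(1-a)t\bigr) = 1 - a^{-1} + a^{-1}(1-a)t = 1 - a^{-1} + a^{-1}t - t.
$$
Then I would expand the right-hand side,
$$
(1-a^{-1})(1-t) = 1 - t - a^{-1} + a^{-1}t,
$$
and observe that the two expressions coincide term by term, which proves the lemma. If convenient, this can be packaged as a single aligned display mirroring the two preceding lemmas.

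There is essentially no obstacle here: the only point requiring care is the sign bookkeeping in the coefficient of $t$, and the observation that $a\in\kk^{\flat}=\kk\setminus\{0,1\}$ ensures that $a$ (hence $a^{-1}$ and $\langle a\rangle^{-1}$) is well defined, so all the expressions above make sense in $\kk_2$.
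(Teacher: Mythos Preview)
Your proof is correct and essentially identical to the paper's own argument: both compute $\langle a\rangle^{-1}=a^{-1}(1-(1-a)t)$ using $(1+ct)^{-1}=1-ct$ in $\kk_2$, then expand $1-\langle a\rangle^{-1}$ and match it to $(1-a^{-1})(1-t)$. The only cosmetic difference is that the paper rewrites $a^{-1}(1-a)t=(a^{-1}-1)t$ and factors directly, whereas you expand both sides and compare term by term.
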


\begin{proof}
Since 
\begin{align*}
1 - \langle a\rangle^{-1} 
&= 1 - a^{-1}(1+(1-a)t)^{-1}
= 1 - a^{-1}(1-(1-a)t) \\ 
&= 1-a^{-1} + a^{-1}(1-a)t 
 = 1-a^{-1} + (a^{-1} - 1)t \\ 
 &= (1-a^{-1})(1-t), 
\end{align*}
the lemma holds. 
\end{proof}

\begin{lemma}
\label{lemma:add_mult_inv_quot}
The equation 
\[
\frac{1-\langle b \rangle^{-1}}{1-\langle a \rangle^{-1}} 
= \frac{1 - b^{-1}}{1-a^{-1}}
\] 
holds.
\end{lemma}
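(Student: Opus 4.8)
The plan is to derive this identity as an immediate consequence of Lemma~\ref{lemma:add_mult_inv}, applied once to $a$ and once to $b$. By that lemma we have
\[
1-\langle a\rangle^{-1} = (1-a^{-1})(1-t)
\qquad\text{and}\qquad
1-\langle b\rangle^{-1} = (1-b^{-1})(1-t).
\]
First I would observe that both of these elements are invertible in $\kk_2 = \kk[t]/(t^2)$, so that the quotient on the left-hand side makes sense: since $a,b\in\kk^\flat$ we have $a^{-1},b^{-1}\neq 1$, hence $1-a^{-1},1-b^{-1}\in\kk^{*}$ are units, and $1-t$ is a unit in $\kk_2$ with $(1-t)^{-1}=1+t$ because $t^2=0$.

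The main step is then simply to form the quotient and cancel the common unit factor $1-t$:
\[
\frac{1-\langle b\rangle^{-1}}{1-\langle a\rangle^{-1}}
= \frac{(1-b^{-1})(1-t)}{(1-a^{-1})(1-t)}
= \frac{1-b^{-1}}{1-a^{-1}},
\]
which is the claimed identity. Concretely, the cancellation is the computation $(1-t)\cdot(1-t)^{-1} = (1-t)(1+t) = 1 - t^2 = 1$ in $\kk_2$.

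I do not anticipate any real obstacle here: once Lemma~\ref{lemma:add_mult_inv} is in hand, this is a one-line verification, and the only point requiring a word of care is the well-definedness of the division, handled above by noting that $1-t$ is a unit in the ring of dual numbers. If one prefers not to invoke Lemma~\ref{lemma:add_mult_inv} as a black box, an alternative is to expand $1-\langle b\rangle^{-1}$ and $1-\langle a\rangle^{-1}$ directly using $\langle a\rangle^{-1} = a^{-1}(1-(1-a)t)$, multiply numerator and denominator by $1+t$ to clear the denominator, and collect terms mod $t^2$; but this reproduces exactly the content of Lemma~\ref{lemma:add_mult_inv}, so the factored form is the cleaner route.
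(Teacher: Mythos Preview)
Your proof is correct and follows essentially the same approach as the paper: apply Lemma~\ref{lemma:add_mult_inv} to both $a$ and $b$ and cancel the common factor $1-t$. The paper's proof is the one-line version of what you wrote, omitting the (valid) remarks on invertibility.
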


\begin{proof}
The lemma is immediate since 
\begin{align*}
\frac{1-\langle b \rangle^{-1} }{ 1-\langle a \rangle^{-1} }
= \frac{ (1 - b^{-1})\cancel{(1-t)} }{(1 - a^{-1})\cancel{(1-t)}}
= \frac{1 - b^{-1}}{1 - a^{-1}}. 
\end{align*}
\end{proof}

\begin{lemma}
\label{lemma_scalar_bracket}
We have $[b\ast \langle c\rangle]=b[\langle c\rangle]$ where $b\in \kk$ and $c\in \kk_2$ and  
$(-1) [\langle 1-a\rangle]=-[\langle a \rangle]$ and 
$a [\langle a^{-1}\rangle] = -[\langle a \rangle ]$. 
\end{lemma}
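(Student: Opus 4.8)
The plan is to carry out the whole argument inside the deformation quotient $T\beta_2(\kk)$ of $\beta_2(\kk_2)$ --- the counterpart for the $5$-term group of the quotient $TP(\kk)$ of $P_{\kk[\eps]}$ used in the first proof --- with $[\langle c\rangle]$ read through the section, so that $[\langle c\rangle]$ denotes $[\langle c\rangle]-[c]$ as in \eqref{eqn_tangent_generators}, and with the action $\lambda\ast(b_0+b_1t)=b_0+\lambda b_1 t$ on $\kk_2$ as above. Among the three assertions, the scaling identity $[b\ast\langle c\rangle]=b[\langle c\rangle]$ is the one carrying content; the second and third, $(-1)[\langle 1-a\rangle]=-[\langle a\rangle]$ and $a[\langle a^{-1}\rangle]=-[\langle a\rangle]$, will follow from it together with the two relations $[1-x]=-[x]$ and $[x^{-1}]=-[x]$ holding in $\beta_2$ over a field of characteristic $0$ (the $\beta$-analogues of the first two relations defining $P_A$, themselves consequences of the $5$-term relation). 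So my steps are: (i) establish the scaling identity; (ii) record the dual-number computations $1-\langle a\rangle=(-1)\ast\langle 1-a\rangle$ and $\langle a\rangle^{-1}=a\ast\langle a^{-1}\rangle$; (iii) combine (ii) with those $\beta_2$-relations and (i).

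For step (i) I would proceed in the same three stages by which the $\kk^{*}$-action on $TP(\kk)$ is pinned down in the first proof. First, $\Z$-linearity in $b$: the additivity relation $[(b+b')\ast\langle c\rangle]=[b\ast\langle c\rangle]+[b'\ast\langle c\rangle]$, which is built into $T\beta_2(\kk)$ via the analogue of relation \eqref{eqn_reln_for_Nk}, gives $[n\ast\langle c\rangle]=n[\langle c\rangle]$ for $n\in\Z$ by a telescoping induction, the case $b'=-b$ reproducing the sign-reversal step used in the proof of Proposition~\ref{prop_5_term_dilog_to_infinitesimal_dilog}. Second, since $\beta_2(\kk_2)$, hence $T\beta_2(\kk)$, is a $\Q$-vector space, $\Z$-linearity upgrades at once to $\Q$-linearity. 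Third --- and this is the main obstacle --- one must pass from $b\in\Q$ to an arbitrary $b\in\kk$ (both $b=-1$ and $b=a$, needed later, force non-rational scalars): here I would substitute the dual-number arguments $\langle a\rangle,\langle b\rangle$ into the $5$-term relation of $\beta_2$ and use Lemmas~\ref{lemma:quot_dual_numbers}, \ref{lemma:diff_quot_dual_numbers} and \ref{lemma:add_mult_inv_quot} to rewrite its three quotient terms as $[a\ast\langle b/a\rangle]$, a constant (hence vanishing) bracket, and $[(a-1)\ast\langle\tfrac{1-b}{1-a}\rangle]$; subtracting the plain $5$-term relation at $a,b$ then produces, in $T\beta_2(\kk)$, the identity
\[
[\langle a\rangle]-[\langle b\rangle]+[a\ast\langle\tfrac{b}{a}\rangle]-[(a-1)\ast\langle\tfrac{1-b}{1-a}\rangle]=0,
\]
which, matched against the $4$-term relation \eqref{item_beta_k_4_term}, is what forces the $\ast$-action of a general scalar to coincide with ordinary scalar multiplication; the case $b=0$ of (i) is merely the vanishing of constant brackets. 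I expect this third stage --- the bookkeeping that turns $\Q$-linear scaling into $\kk$-linear scaling, which is exactly where Cathelineau's deformation argument localizes --- to be the bulk of the work.

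For step (ii), expanding $\langle 1-a\rangle=(1-a)(1+at)$ and $1-\langle a\rangle=(1-a)(1-at)$ gives $1-\langle a\rangle=(-1)\ast\langle 1-a\rangle$ directly, and Lemma~\ref{lemma:add_mult_inv} (or a one-line expansion) gives $\langle a\rangle^{-1}=a^{-1}(1-(1-a)t)=a^{-1}+(1-a^{-1})t$ while $\langle a^{-1}\rangle=a^{-1}(1+(1-a^{-1})t)$, from which $a\ast\langle a^{-1}\rangle=\langle a\rangle^{-1}$ is immediate. For step (iii), applying the relation $[1-x]=-[x]$ at $x=\langle a\rangle$ gives
\[
(-1)[\langle 1-a\rangle]=[(-1)\ast\langle 1-a\rangle]=[1-\langle a\rangle]=-[\langle a\rangle],
\]
the first equality being (i) with $b=-1$; and applying the inversion relation $[x^{-1}]=-[x]$ at $x=\langle a\rangle$ gives
\[
a[\langle a^{-1}\rangle]=[a\ast\langle a^{-1}\rangle]=[\langle a\rangle^{-1}]=-[\langle a\rangle],
\]
the first equality being (i) with $b=a$. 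These are the remaining two assertions, so the lemma follows once (i) is in place.
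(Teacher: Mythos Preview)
The paper's own proof is a single sentence: ``This is clear using the construction of $\beta(\kk)$.'' In other words, the paper treats the first identity $[b\ast\langle c\rangle]=b[\langle c\rangle]$ as essentially definitional --- the $\kk$-linear structure on the tangent quotient is \emph{declared} via the $\ast$-action, exactly as in the first proof where the paper writes $c\,\sigma(\{a+b\varepsilon\}):=\sigma(\{a+cb\varepsilon\})$ without further justification --- and the remaining two identities are then immediate from the dual-number computations you carry out in step~(ii) together with the standard $\beta_2$ relations. The paper does not attempt to verify that this prescription is well-defined or that it agrees with any pre-existing $\kk$-structure.

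Your proposal takes a genuinely different and more honest route: you read the first identity as a theorem to be proved rather than a definition to be made, and you correctly isolate the passage from $\Q$-linearity to $\kk$-linearity as the real content. Your steps~(ii) and~(iii) are clean and correct. However, your third stage of step~(i) is not actually completed --- you produce one relation and say it ``forces'' the $\ast$-action to agree with scalar multiplication when ``matched against'' the $4$-term relation, but a single relation of that shape does not by itself pin down $[b\ast\langle c\rangle]=b[\langle c\rangle]$ for arbitrary $b,c$; you would need to vary the parameters and argue more carefully, and there is also a whiff of circularity in invoking the $4$-term relation inside a section whose goal is to derive it. You flag this stage as ``the bulk of the work,'' which is accurate, but as written the argument is a sketch rather than a proof. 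In short: the paper waves its hands and calls it a definition; you try to earn it and get most of the way there, but the hard step you identify remains open in your write-up.
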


\begin{proof}
This is clear using the construction of $\beta(\kk)$. 
\end{proof}

\begin{lemma}
\label{lemma:5_term_relation}
The $5$-term dilogarithm 
\begin{equation}
\label{eqn:5_term_relation}
[\langle a \rangle] - [\langle b \rangle] 
+ \left[\frac{\langle b \rangle}{\langle a \rangle}\right]
+ \left[\frac{1-\langle b \rangle^{-1}}{1-\langle a \rangle^{-1}}\right] 
- \left[\frac{1-\langle b \rangle }{1-\langle a \rangle } \right] = 0
\end{equation}
holds in the commutative ring $\beta_2(\kk_2)$. 
\end{lemma}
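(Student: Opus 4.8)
The plan is to recognize that, once the substitution is shown to be \emph{admissible}, equation~\eqref{eqn:5_term_relation} is nothing but one of the defining relations of $\beta_2(\kk_2)$, and so holds automatically. Applying the construction of $\beta_2$ to the ring $\kk_2=\kk[t]/(t^2)$ in place of a field, $\beta_2(\kk_2)$ is the quotient of $\Q[\kk_2^\flat]$ by the relations
\[
[z_1] - [z_2] + \left[\frac{z_2}{z_1}\right] + \left[\frac{1 - z_2^{-1}}{1 - z_1^{-1}}\right] - \left[\frac{1 - z_2}{1 - z_1}\right] = 0
\]
taken over all $z_1,z_2\in\kk_2^\flat$ with $(1-z_1)(1-z_2)(1-z_2/z_1)\in\kk_2^{*}$; one checks once and for all (purely ring-theoretically) that this single non-degeneracy condition already forces each of the five arguments into $\kk_2^\flat$. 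Hence the entire content is the admissibility of $z_1=\langle a\rangle$, $z_2=\langle b\rangle$: it suffices to prove that $\langle a\rangle,\langle b\rangle\in\kk_2^\flat$ and that $(1-\langle a\rangle)(1-\langle b\rangle)(1-\langle b\rangle/\langle a\rangle)\in\kk_2^{*}$.

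First I would record the factorizations in $\kk_2$, valid for any $a\in\kk^\flat$,
\[
\langle a\rangle = a\bigl(1+(1-a)t\bigr), \qquad 1-\langle a\rangle = (1-a)\bigl(1-at\bigr),
\]
and likewise for $b$. Since $1+ct$ is a unit of $\kk_2$ with inverse $1-ct$ for every $c\in\kk$, and $a,1-a\in\kk^{*}$, both $\langle a\rangle$ and $1-\langle a\rangle$ are units, so $\langle a\rangle,\langle b\rangle\in\kk_2^\flat$. Next I would feed these into Lemmas~\ref{lemma:quot_dual_numbers}, \ref{lemma:diff_quot_dual_numbers}, and \ref{lemma:add_mult_inv_quot} to obtain the explicit shapes of the three quotients occurring in~\eqref{eqn:5_term_relation},
\[
\frac{\langle b\rangle}{\langle a\rangle} = \frac{b}{a}\bigl(1+(a-b)t\bigr), \qquad \frac{1-\langle b\rangle}{1-\langle a\rangle} = \frac{1-b}{1-a}\bigl(1+(a-b)t\bigr), \qquad \frac{1-\langle b\rangle^{-1}}{1-\langle a\rangle^{-1}} = \frac{1-b^{-1}}{1-a^{-1}},
\]
together with the complements
\[
1-\frac{\langle b\rangle}{\langle a\rangle} = \frac{a-b}{a}\bigl(1-bt\bigr), \qquad 1-\frac{1-\langle b\rangle}{1-\langle a\rangle} = \frac{b-a}{1-a}\bigl(1+(1-b)t\bigr).
\]
Assuming $a\neq b$ (if $a=b$ the substitution is degenerate — the substituted expression would involve the undefined symbol $[1]$, and $1\notin\kk_2^\flat$ — so nothing is claimed there), every scalar prefactor appearing above lies in $\kk^{*}\subseteq\kk_2^{*}$; combined with the units $1+ct$, this shows each argument of~\eqref{eqn:5_term_relation} lands in $\kk_2^\flat$, and
\[
(1-\langle a\rangle)(1-\langle b\rangle)\Bigl(1-\frac{\langle b\rangle}{\langle a\rangle}\Bigr) = \frac{(1-a)(1-b)(a-b)}{a}\,(1-at)(1-bt)^{2}\ \in\ \kk_2^{*}.
\]
This establishes admissibility, so~\eqref{eqn:5_term_relation} is a defining relation of $\beta_2(\kk_2)$ and therefore vanishes there, completing the proof.

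I do not expect a genuine obstacle: the only ring-theoretic input is the arithmetic of dual numbers, and that has already been isolated in Lemmas~\ref{lemma:quot_dual_numbers}--\ref{lemma:add_mult_inv_quot}, so the argument is essentially an assembly of those identities together with the routine computation of their complements and the non-degeneracy product. The one point deserving explicit care, which I would flag at the outset, is the genericity hypothesis $a\neq b$, without which the five bracketed arguments are not all legitimate elements of $\kk_2^\flat$.
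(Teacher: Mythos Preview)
Your proposal is correct and takes essentially the same approach as the paper: both recognize that \eqref{eqn:5_term_relation} is an instance of the defining $5$-term relation of $\beta_2(\kk_2)$, the paper simply stating ``This is clear by the definition of the ring $\beta_2(\kk_2)$.'' Your version is more thorough in that you explicitly verify the admissibility of the substitution $z_1=\langle a\rangle$, $z_2=\langle b\rangle$ (including the genericity hypothesis $a\neq b$), which the paper leaves implicit.
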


\begin{proof}
This is clear by the definition of the ring $\beta_2(\kk_2)$.
\end{proof}

\begin{proposition}
Taking the limit as $t\rightarrow 0$, the $5$-term dilogarithm in \eqref{eqn:5_term_relation} deforms to the $4$-term infinitesimal dilogarithm
\begin{equation}
[a]-[b]+ a\biggl[ \dfrac{b}{a}\biggr] + (1-a)\biggl[ \dfrac{1-b}{1-a} \biggr]=0, \ \ a\in \kk\setminus\{0,1\}, \ b\in \kk^{*}. 
\end{equation}
\end{proposition}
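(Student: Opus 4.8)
The plan is to specialize the defining $5$-term relation of $\beta_2(\kk_2)$ to the deformed generators $\langle a\rangle=a+a(1-a)t$ and $\langle b\rangle$, rewrite every cross-ratio that appears by means of the dual-number identities already established, and then pass to the infinitesimal level, i.e.\ the $t\to0$ limit. Concretely, Lemma~\ref{lemma:5_term_relation} gives, in $\beta_2(\kk_2)$,
\[
[\langle a \rangle] - [\langle b \rangle] + \left[\frac{\langle b \rangle}{\langle a \rangle}\right] + \left[\frac{1-\langle b \rangle^{-1}}{1-\langle a \rangle^{-1}}\right] - \left[\frac{1-\langle b \rangle}{1-\langle a \rangle}\right] = 0 .
\]
By Lemma~\ref{lemma:quot_dual_numbers} the third argument equals $a\ast\langle b/a\rangle$; by Lemma~\ref{lemma:diff_quot_dual_numbers} the fifth equals $(a-1)\ast\langle(1-b)/(1-a)\rangle$; and by Lemmas~\ref{lemma:add_mult_inv} and~\ref{lemma:add_mult_inv_quot} the fourth equals $\tfrac{1-b^{-1}}{1-a^{-1}}$, an element of $\kk\subset\kk_2$ that carries no $t$. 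Before going further one should do the bookkeeping on domains: the hypothesis $(1-a)(1-b)\left(1-\tfrac{b}{a}\right)\in\kk^{*}$ built into the definition of $\beta_2$ forces $a\neq b$, whence $b/a$, $(1-b)/(1-a)$ and $\tfrac{1-b^{-1}}{1-a^{-1}}$ all lie in $\kk^\flat$, so every bracket above is a genuine element of $\beta_2(\kk_2)$.

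Next I would pull the scalars out of the two deformed brackets using the scaling relation $[b\ast\langle c\rangle]=b[\langle c\rangle]$ of Lemma~\ref{lemma_scalar_bracket}; this turns the displayed identity into the ``deformed $5$-term relation''
\[
[\langle a \rangle] - [\langle b \rangle] + a\left[\left\langle\tfrac{b}{a}\right\rangle\right] + (1-a)\left[\left\langle\tfrac{1-b}{1-a}\right\rangle\right] + \left[\tfrac{1-b^{-1}}{1-a^{-1}}\right] = 0 .
\]
The four deformed brackets now carry exactly the coefficients $1,-1,a,1-a$ occurring in the $4$-term relation~\eqref{item_beta_k_4_term}, and the only surplus is the bracket of the $t$-independent argument $\tfrac{1-b^{-1}}{1-a^{-1}}$; it is Lemma~\ref{lemma_scalar_bracket} that produces the weights $a$ and $1-a$ out of the $\kk^{*}$-rescalings $a\ast(-)$ and $(a-1)\ast(-)$ identified by Lemmas~\ref{lemma:quot_dual_numbers} and~\ref{lemma:diff_quot_dual_numbers}.

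Finally I would take $t\to0$ in the sense of the first proof, namely as passage to the tangent space. Let $\pi\colon\kk_2\to\kk$ be the homomorphism $t\mapsto0$ and $s\colon\kk\hookrightarrow\kk_2$ the inclusion, inducing $\pi_{*},s_{*}$ on $\beta_2$ with $\pi_{*}s_{*}=\mathrm{id}$, and apply the projection $x\mapsto x-s_{*}\pi_{*}(x)$ of $\beta_2(\kk_2)$ onto $\ker(\pi_{*})$, which on generators is $[c_0+c_1t]\mapsto[c_0+c_1t]-[c_0]$. Writing $\psi(x):=[\langle x\rangle]-[x]=[x+x(1-x)t]-[x]$ for the image of the generator $[x]$, the surplus bracket dies under this projection (its argument lies in $\kk$), and the deformed $5$-term relation collapses to
\[
\psi(a)-\psi(b)+a\,\psi\!\left(\tfrac{b}{a}\right)+(1-a)\,\psi\!\left(\tfrac{1-b}{1-a}\right)=0 .
\]
Since $\psi$ is given by the same formula $[x]\mapsto[x+x(1-x)t]-[x]$ as the map $\varphi$ of Proposition~\ref{prop_5_term_dilog_to_infinitesimal_dilog}, this is precisely the image under $\varphi$ of the $4$-term infinitesimal dilogarithm relation $[a]-[b]+a\left[\tfrac{b}{a}\right]+(1-a)\left[\tfrac{1-b}{1-a}\right]=0$, which is the assertion of the proposition.

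The main obstacle is conceptual rather than computational: one must read ``limit as $t\to0$'' correctly. Setting $t=0$ outright sends every bracket to its classical value and loses all the information; the right reading replaces a generator by its first-order deformation $[\langle x\rangle]-[x]$ and discards $t$-independent contributions --- here the bracket of $\tfrac{1-b^{-1}}{1-a^{-1}}$ --- because they have vanishing derivative, which is exactly the kernel-of-$\pi_{*}$ picture of the first proof. Once that framework is fixed, everything else --- the three cross-ratio simplifications and the scalar extraction --- is the routine dual-number algebra already packaged in Lemmas~\ref{lemma:quot_dual_numbers}--\ref{lemma:5_term_relation}.
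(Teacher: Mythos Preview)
Your proof is correct and follows essentially the same route as the paper: both start from Lemma~\ref{lemma:5_term_relation}, invoke Lemmas~\ref{lemma:quot_dual_numbers}, \ref{lemma:diff_quot_dual_numbers}, \ref{lemma:add_mult_inv_quot}, and~\ref{lemma_scalar_bracket} to rewrite the five brackets, arrive at the identical ``deformed $5$-term relation,'' and then kill the $t$-independent bracket $\bigl[\tfrac{1-b^{-1}}{1-a^{-1}}\bigr]$ by passing to the tangent space. Your treatment of this last step---making explicit the projection $x\mapsto x-s_*\pi_*(x)$ onto $\ker(\pi_*)$ and writing $\psi(x)=[\langle x\rangle]-[x]$---is in fact cleaner than the paper's, which simply asserts that the surplus term is zero in the relevant quotient and that the remaining identity ``$t$-deforms'' to the $4$-term relation.
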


\begin{proof}
We have 
$\left[\dfrac{\langle b \rangle}{\langle a \rangle}\right] = \left[ a * \left\langle \dfrac{b}{a} \right\rangle \right]= a \left[ \left\langle \dfrac{b}{a} \right\rangle \right]$ by Lemmas~\ref{lemma:quot_dual_numbers} and \ref{lemma_scalar_bracket}. 
We also have 
\[
\left[\frac{1-\langle b \rangle^{-1}}{1-\langle a \rangle^{-1}}\right] = \left[  \frac{(1-b^{-1})\cancel{(1-t)}}{(1-a^{-1})\cancel{(1-t)}} \right] =  \left[  \frac{1-b^{-1}}{1-a^{-1}} \right]
\] by Lemma~\ref{lemma:add_mult_inv_quot}. 
Finally, we have 
\begin{align*}
\left[\frac{1-\langle b \rangle }{1-\langle a \rangle } \right]
&= \left[(a-1)*\left\langle \frac{1-b}{1-a} \right\rangle \right] = (a-1) \left[\left\langle \frac{1-b}{1-a} \right\rangle \right] 
\end{align*}
by Lemmas~\ref{lemma:diff_quot_dual_numbers} and \ref{lemma_scalar_bracket}. Putting these together, we have 
\[
[\langle a \rangle] 
- [\langle b \rangle] 
+ a \left[\left\langle\frac{ b}{ a }\right\rangle\right]
+ \left[\frac{1- b^{-1}}{1-  a^{-1}}\right] 
+ (1-a)\left[\left\langle \frac{1-b}{1-a} \right\rangle \right] =0.
\]
The element $\left[\frac{1- b^{-1}}{1-  a^{-1}}\right]$ is the zero element in 
\[
\ker(\beta(\kk_2)\longrightarrow \beta(\kk))\big/\left([ a + (b+b') t ] + [a] - [ a+ b t] - [ a + b' t]\right).
\]
We thus obtain 
\[
[\langle a \rangle] 
- [\langle b \rangle] 
+ a \left[\left\langle\frac{ b}{ a }\right\rangle\right]
+ (1-a)\left[\left\langle \frac{1-b}{1-a} \right\rangle \right] =0,
\]
which $t$-deforms to 
\[
[a ] 
- [ b ] 
+ a \left[\frac{ b}{ a }\right] 
+ (1-a)\left[ \frac{1-b}{1-a} \right] =0 \mbox{ for } a\not=b \mbox{ and } a\in\kk^{\flat}.
\]
\end{proof}

%{\color{red} Define $\kk_2^\flat := \{a + b\: t : a\in\kk^\flat, b\in \kk \}$.}

\section{Future Work}
\label{sec_future}
 It would be interesting to extend the work of Im--Khovanov of diagrammatic interpretation in \cite{IK24_dilogarithms_entropy} to R\'enyi entropy, which is a generalization of Shannon entropy, von Neumann entropy, which is a quantized analogue of Gibbs entropy, and relative entropy, which is also known as Kullback--Leibler divergence. There may be a deeper interpretation if one were to extend the diagrammatics of entropy to their higher-dimensional analogues.

\bibliographystyle{amsalpha} 
\bibliography{quantum_entropy}

\newcommand{\etalchar}[1]{$^{#1}$}
\providecommand{\bysame}{\leavevmode\hbox to3em{\hrulefill}\thinspace}
\providecommand{\MR}{\relax\ifhmode\unskip\space\fi MR }
% \MRhref is called by the amsart/book/proc definition of \MR.
\providecommand{\MRhref}[2]{%
  \href{http://www.ams.org/mathscinet-getitem?mr=#1}{#2}
}
\providecommand{\href}[2]{#2}
\begin{thebibliography}{MKW23}

\bibitem[Bri56]{Brill56}
Leon Brillouin, \emph{Science and information theory}, Academic Press, Inc., New York, 1956.

\bibitem[Cat88]{Cath88}
Jean-Louis Cathelineau, \emph{Sur l'homologie de {$\mathrm{SL}_2$} \`a coefficients dans l'action adjointe}, Math. Scand. \textbf{63} (1988), no.~1, 51--86.

\bibitem[Cat11]{Cath11}
\bysame, \emph{Infinitesimal dilogarithms, extensions and cohomology}, J. Algebra \textbf{332} (2011), 87--113.

\bibitem[DH00]{DH99}
David Deutsch and Patrick Hayden, \emph{Information flow in entangled quantum systems}, R. Soc. Lond. Proc. Ser. A Math. Phys. Eng. Sci. \textbf{456} (2000), no.~1999, 1759--1774.

\bibitem[Fad56]{Faddeev56}
Dmitry~K. Faddeev, \emph{On the concept of entropy of a finite probabilistic scheme}, Uspehi Mat. Nauk (N.S.) \textbf{67} (1956), no.~1, 227--231.

\bibitem[GIK{\etalchar{+}}23]{GIKKL23}
Paul Gustafson, Mee~Seong Im, Remy Kaldawy, Mikhail Khovanov, and Zachary Lihn, \emph{Automata and one-dimensional {TQFT}s with defects}, Lett. Math. Phys. \textbf{113} (2023), no.~93, 1--38.

\bibitem[GIK24]{GIK22}
Paul Gustafson, Mee~Seong Im, and Mikhail Khovanov, \emph{Boolean {TQFT}s with accumulating defects, sofic systems, and automata for infinite words}, Lett. Math. Phys. \textbf{114} (2024), no.~6, Paper No. 135.

\bibitem[GIKK24]{GIKK24}
David Gepner, Mee~Seong Im, Mikhail Khovanov, and Nitu Kitchloo, \emph{Foams with flat connections and algebraic k-theory}, arXiv preprint arXiv:2405.14465, submitted to Ann. K-Theory (2024), 1--57.

\bibitem[Hc56]{Khinchin56}
Aleksandr~Y. Hin\v~cin, \emph{On the basic theorems of information theory}, Uspehi Mat. Nauk (N.S.) \textbf{11} (1956), no.~1(67), 17--75.

\bibitem[IK21]{IK21}
Mee~Seong Im and Mikhail Khovanov, \emph{Foams, iterated wreath products, field extensions and {S}ylvester sums}, arXiv preprint \href{https://arxiv.org/abs/2107.07845}{arXiv:2107.07845} (2021), 1--78.

\bibitem[IK22]{IK-top-automata}
\bysame, \emph{Topological theories and automata}, arXiv preprint \href{https://arxiv.org/abs/2202.13398}{arXiv:2202.13398} (2022), 1--70.

\bibitem[IK23]{IK_journey23}
\bysame, \emph{From finite state automata to tangle cobordisms: a {TQFT} journey from one to four dimensions}, arXiv preprint \href{https://arxiv.org/abs/2309.00708}{arXiv:2309.00708}, to appear in Contemp. Math. (2023), 1--40.

\bibitem[IK24a]{IK24_dilogarithms_entropy}
\bysame, \emph{Entropy, cocycles, and their diagrammatics}, arXiv preprint \href{https://arxiv.org/abs/2409.08462}{arXiv:2409.08462} (2024), 1--82.

\bibitem[IK24b]{IK24_SAF}
\bysame, \emph{Foam cobordism and the {S}ah-{A}rnoux-{F}athi invariant}, arXiv preprint \href{https://arxiv.org/abs/2403.06030}{arXiv:2403.06030} (2024), 1--38.

\bibitem[IK24c]{IK_22_linear}
\bysame, \emph{One-dimensional topological theories with defects: the linear case}, Contemporary Mathematics \textbf{791} (2024), 105--146.

\bibitem[IKO25]{IKV23}
Mee~Seong Im, Mikhail Khovanov, and Victor Ostrik, \emph{Universal construction in monoidal and non-monoidal settings, the {B}rauer envelope, and pseudocharacters}, Theory Appl. Categ. \textbf{44} (2025), no.~2, 15--83.

\bibitem[IZ22]{IZ21}
Mee~Seong Im and Paul Zimmer, \emph{One-dimensional topological theories with defects and linear generating functions}, Involve \textbf{15} (2022), no.~2, 319--331.

\bibitem[Khi57]{Khinchin57}
Aleksandr~I. Khinchin, \emph{Mathematical foundations of information theory}, Dover Publications, Inc., New York, 1957, Translated by R. A. Silverman and M. D. Friedman.

\bibitem[Kho20]{Kh20_univ_const_two}
Mikhail Khovanov, \emph{Universal construction of topological theories in two dimensions}, arXiv preprint \href{https://arxiv.org/abs/2007.03361}{arXiv:2007.03361} (2020), 1--56.

\bibitem[Kon02]{Kont02}
Maxim Kontsevich, \emph{The {$1\frac12$}-logarithm. {A}ppendix to: ``{O}n poly(ana)logs. {I}'' [{C}ompositio {M}ath {\bf 130} (2002), no. 2, 161--210] by {P}. {E}lbaz-{V}incent and {H}. {G}angl}, Compositio Math. \textbf{130} (2002), no.~2, 211--214.

\bibitem[LR97]{LR97}
Harvey~S. Leff and Andrew~F. Rex, \emph{Maxwell's demon and the culture of entropy}, Phys. Essays \textbf{10} (1997), no.~1, 125--149.

\bibitem[MKW23]{MKW23}
Kumar~Vijay Mishra, M.~Ashok Kumar, and Ting-Kam~Leonard Wong, \emph{Information {G}eometry for the {W}orking {I}nformation {T}heorist}, arXiv preprint \href{https://arxiv.org/abs/2310.03884}{arXiv:2310.03884} (2023), 1--12.

\bibitem[NB21]{NB_geom_info}
Frank Nielsen and Fr\'ed\'eric Barbaresco (eds.), \emph{Geometric science of information}, Lecture Notes in Computer Science, vol. 12829, Springer, Cham, [2021] \copyright 2021.

\bibitem[NB23a]{Niel_geom_23_1}
Frank Nielsen and Fr\'ed\'eric Barbaresco (eds.), \emph{Geometric science of information. {P}art {I}}, Lecture Notes in Computer Science, vol. 14071, Springer, Cham, [2023] \copyright 2023.

\bibitem[NB23b]{Niel_geom_23_2}
Frank Nielsen and Fr\'ed\'eric Barbaresco (eds.), \emph{Geometric science of information. {P}art {II}}, Lecture Notes in Computer Science, vol. 14072, Springer, Cham, [2023] \copyright 2023.

\bibitem[Nie14]{Nielsen_geometric_theory_info}
Frank Nielsen, \emph{Geometric {T}heory of {I}nformation}, Springer International Publishing Switzerland, Signals and Communication Technology, 2014.

\bibitem[Nie20]{Nielsen20}
\bysame, \emph{An elementary introduction to information geometry}, Entropy \textbf{22} (2020), no.~10, Paper No. 1100, 61.

\bibitem[Tan78]{Taneja77}
Inder~Jeet Taneja, \emph{On the branching property of entropy}, Ann. Polon. Math. \textbf{35} (1977/78), no.~1, 67--75.

\end{thebibliography}

\end{document}